\documentclass[onecolumn,12pt]{IEEEtran}
\usepackage{graphicx,epic,eepic,epsfig,amsmath,amsfonts,latexsym,amssymb,color,enumerate,bbm}

\if0
\usepackage{amsmath,amssymb,array}
\usepackage{algorithm}
\usepackage{algorithmic}
\usepackage{multirow}
\usepackage{color}
\usepackage{bm}
\usepackage[pdftex]{graphicx}
\usepackage{url}
\urlstyle{same}
\fi

\newtheorem{theorem}{Theorem}
\newtheorem{definition}{Definition}
\newtheorem{lemma}{Lemma}
\newtheorem{remark}{Remark}

\newtheorem{proposition}{Proposition}

\def\QED{\mbox{\rule[0pt]{1.5ex}{1.5ex}}}

\newcommand{\qed}{\hfill \QED}
\newcommand{\tr}{\operatorname{Tr}}
\newcommand{\mc}{\mathcal}
\newcommand{\ket}[1]{|#1\rangle}
\newcommand{\bra}[1]{\langle#1|}
\newcommand{\proj}[1]{| #1\rangle\!\langle #1 |}
\newcommand{\beq}{\begin{equation}}
\newcommand{\eeq}{\end{equation}}
\newcommand{\rar}{\rightarrow}

\newcommand{\1}{\mathbbm{1}}
\newcommand{\supp}{\operatorname{supp}}
\newcommand{\ox}{\otimes}
\newenvironment{proofof}[1]{\vspace*{5mm} \par \noindent
{\it Proof of #1:\hspace{2mm}}}{\qed
}

\begin{document}
\title{Tight Exponential Analysis for Smoothing the Max-Relative Entropy and for Quantum Privacy Amplification}

\author{Ke~Li, Yongsheng~Yao, and Masahito~Hayashi~\IEEEmembership{Fellow,~IEEE}
\thanks{Ke Li is with the Institute for Advanced Study in Mathematics, Harbin Institute of
Technology, Nangang District,  Harbin 150001, China. (carl.ke.lee@gmail.com). Yongsheng Yao is with the Institute for Advanced Study in Mathematics, School of Mathematics, Harbin Institute of
Technology, Nangang District,  Harbin 150001, China, and Shenzhen Institute for Quantum Science and Engineering, Southern University of Science and Technology, Nanshan District,
Shenzhen, 518055, China. (yongsh.yao@gmail.com).
Masahito Hayashi is with
Shenzhen Institute for Quantum Science and Engineering, Southern University of Science and Technology, Nanshan District,
Shenzhen, 518055, China,
International Quantum Academy (SIQA), Futian District, Shenzhen 518048, China,
Guangdong Provincial Key Laboratory of Quantum Science and Engineering,
Southern University of Science and Technology, Nanshan District, Shenzhen 518055, China,
and
Graduate School of Mathematics, Nagoya University, Nagoya, 464-8602, Japan.
(e-mail:hayashi@sustech.edu.cn, masahito@math.nagoya-u.ac.jp)}}
\date{}

\maketitle
\begin{abstract}
The max-relative entropy together with its smoothed version is a basic
tool in quantum information theory. In this paper, we derive the exact
exponent for the asymptotic decay of the small modification of the quantum
state in smoothing the max-relative entropy based on purified distance.
We then apply this result to
the problem of privacy amplification against quantum side information, and
we obtain an upper bound for the exponent of the asymptotic decreasing of
the insecurity, measured using either purified distance or relative entropy.
Our upper bound complements the earlier lower bound established by
Hayashi, and the two bounds match when the rate of randomness extraction
is above a critical value. Thus, for the case of high rate, we have
determined the exact security exponent. Following this, we give examples
and show that in the low-rate case, neither the upper bound nor the lower
bound is tight in general. This exhibits a picture similar to that of
the error exponent in channel coding.
Lastly, we investigate the asymptotics of equivocation and its exponent under
the security measure using the sandwiched R{\'e}nyi divergence of order
$s\in (1,2]$, which has not been addressed previously in the quantum setting.
\end{abstract}

\begin{IEEEkeywords}
max-relative entropy,
quantum privacy amplification,
exponent,
sandwiched R\'enyi divergence,
equivocation
\end{IEEEkeywords}

\section{Introduction}
\label{sec:Introduction}
The smooth max-relative entropy is a basic tool in quantum information
theory~\cite{Renner2005security, Datta2009min, TCR2009fully,
TCR2010duality, BrandaoPlenio2010reversible, BCR2011the, Tomamichel2015quantum},
developed in parallel with the related but different concepts of hypothesis
testing relative entropy~\cite{Han2000, NagaokaHayashi, HayashiNagaoka,
WangRenner2012one, TomamichelHayashi2013hierarchy, DKFRR2014generalized,
MatthewsWehner2014finite} and information spectrum relative
entropy~\cite{HanVerdu, Hanbook, Han2000, NagaokaHayashi, HayashiNagaoka,
TomamichelHayashi2013hierarchy}. In the asymptotic limit when multiple
copies of underlying resource states are available, the one-shot
characterizations using these quantities lead to
results of the traditional information-theoretic type. Indeed, the quantum
relative entropy (Kullback-Leibler divergence), arguably, finds its most direct
operational interpretations in the asymptotic analysis of these
quantities~\cite{HiaiPetz1991proper, OgawaNagaoka2000strong,
TCR2009fully, Tomamichel2015quantum, HanVerdu, Hanbook, NagaokaHayashi}.
The asymptotic analysis of the quantum hypothesis testing entropy has been
extended to the second-order regime~\cite{TomamichelHayashi2013hierarchy,
Li2014second}. Moreover, large-deviation type exponential analysis for
quantum hypothesis testing is well understood~\cite{NussbaumSzkola2009chernoff,
AKCMBMA2007discriminating, Nagaoka2006converse, Hayashi2007error,
ANSV2008asymptotic, MosonyiOgawa2015quantum}, and the relation between
information spectrum and quantum hypothesis testing has been well
studied up to the exponential analysis~\cite{Han2000, HayashiNagaoka}.

However, the details of the asymptotic behavior of the smoothing of
the max-relative entropy is more complicated. In particular, it depends
on the choice of the distance measure to define the smoothing. Originally,
Renner~\cite{Renner2005security} defined the smoothing of the max-relative
entropy based on the trace norm distance. The paper~\cite{Hayashi2016} derived
its type exponential behavior based on the trace norm distance in the
classical case. Later, the references~\cite{TCR2009fully, TCR2010duality,
TSSR2011leftover} introduced another smoothing of the max-relative entropy
based on the purified distance. The
reference~\cite{TomamichelHayashi2013hierarchy} showed that this type of
smoothing of the max-relative entropy has the same behavior with the quantum
hypothesis testing entropy in the second-order regime. The exact large-deviation
type exponential behavior in both types of smoothing of the max-relative
entropy remains unclear in the general case.

In this paper, we conduct the exponential analysis for the smoothing of the
max-relative entropy based on the purified distance.
For two quantum states $\rho$ and $\sigma$, consider the
smoothing quantity $\epsilon(\rho^{\ox n} \| \sigma^{\ox n}, nr):=\min\{\ell(
\rho^{\ox n},\widetilde{\rho^n})\ |\ \widetilde{\rho^n} \leq 2^{nr}\sigma^{\ox n}\}$,
where $\ell$ is certain distance measure and $\widetilde{\rho^n}$ is a
subnormalized quantum state. It is known that when $r$ is larger than the
relative entropy $D(\rho\|\sigma)$, the smoothing quantity can be arbitrarily
small when $n$ is big enough. We determine the precise exponent under which
the smoothing quantity converges to $0$ exponentially in this case for $\ell$
being the purified distance (cf. Theorem~\ref{theorem:exp-mre}). Remarkably,
this exponent is given in terms of the sandwiched R{\'e}nyi
divergence~\cite{MDSFT2013on, WWY2014strong}. Our result naturally covers the
exponential analysis for the smoothing of a particular type of the conditional
min-entropy and the max-mutual information (see, e.g., Definition~\ref{def:scme}
for the conditional min-entropy).

We apply the above-mentioned result to the problem of private randomness extraction
against quantum side information, a quantum information processing task also called
privacy amplification~\cite{BBCM1995generalized, DevetakWinter2005distillation,
Renner2005security, TomamichelHayashi2013hierarchy, Hayashi2015precise, Dupuis2021privacy}.
Asymptotic security of privacy amplification in the i.i.d. situation is known to
hold when the rate of randomness extraction does not exceed the conditional entropy
of the raw randomness given the quantum side information~\cite{DevetakWinter2005distillation,Renner2005security}.
We obtain an upper bound for the rate of exponential decreasing of the insecurity in
the i.i.d. situation, measured either by the purified distance or by the relative
entropy, in terms of a version of the sandwiched R{\'e}nyi conditional entropy (cf. Theorem~\ref{theorem:exp-privacy-up}). Notice that an upper bound for the rate of
exponential decreasing corresponds to a lower bound of the insecurity.
This complements the previous work \cite{Hayashi2015precise}, which has established a privacy
amplification theorem concerning the achievability via two-universal hash
functions and obtains a corresponding lower bound for the exponent in the asymptotic case.
We show that our upper bound matches the above mentioned
lower bound when the rate $R$ of randomness extraction is above a critical value $R_\text{critical}$.
Thus, for the case with high rate of randomness extraction, we have determined
the exact security exponent (cf. Theorem~\ref{theorem:exp-privacy}). For the
low-rate situation, we give simple examples to show that neither the upper bound
nor the lower bound is tight in general. These results exhibit a picture similar
to that of the error exponent of channel coding in classical information
theory~\cite{Gallager1968information}.

In addition, we investigate the security of privacy amplification under a more
general class of information measure---the sandwiched R{\'e}nyi divergence of order
$s\in (1,2]$. We prove tight equivocation rate for this security measure and derive
the exponential rate of decay of the insecurity. This problem has been analyzed
by the reference~\cite{HayashiTan2016equivocations} in the classical case, in
which they evaluate the asymptotics of equivocations and their exponents under
various R{\'e}nyi information measures. We generalize their results to the quantum
setting here.

Our results provide operational interpretations to the sandwiched R{\'e}nyi
divergence and the sandwiched R{\'e}nyi conditional entropy, in addition to
previous operational interpretations to the sandwiched R{\'e}nyi information quantities~\cite{MosonyiOgawa2015quantum, MosonyiOgawa2015two, CMW2016strong,
HayashiTomamichel2016correlation, MosonyiOgawa2017strong, CHDH2020non}.
However, the operational interpretations found in the present paper, as well
as those in the concurrent work of~\cite{LiYao2021reliability} which addresses
different problems, are in stark
contrast to those of the previous ones, in the sense explained as follows. The
works~\cite{MosonyiOgawa2015quantum, MosonyiOgawa2015two,CMW2016strong,
HayashiTomamichel2016correlation, MosonyiOgawa2017strong, CHDH2020non} proved
that the sandwiched R{\'e}nyi information quantities characterize the strong
converse exponents, that is, the exponential rates under which the underlying
error goes to $1$. Our results and the work~\cite{LiYao2021reliability}, for the
first time, show that the sandwiched R{\'e}nyi information quantities characterize
the exponents under which the underlying error goes to $0$, and are therefore of
greater realistic significance.

The remainder of this paper is organized as follows. In Section~\ref{sec:preliminaries},
we introduce the necessary notations, definitions and some properties
of quantum entropic quantities. Then in Section~\ref{sec:smooth-entropy}, we derive the
optimal exponent in smoothing the max-relative entropy. Section~\ref{sec:privacy-amp}
is devoted to the analysis of the asymptotic rates of exponential decreasing
of the insecurity of privacy amplification. In Section~\ref{sec:equivocation},
we investigate the equivocation rate and the exponential rate of decay of the insecurity
of privacy amplification measured by the sandwiched R{\'e}nyi divergence of order
$s\in (1,2]$. At last, in
Section~\ref{sec:discussion}, we conclude the paper with some discussion and open
questions.

\section{Notation and preliminaries}
\label{sec:preliminaries}
\subsection{Basic notation}
Let $\mc{H}$ be a finite dimensional Hilbert space. $\mc{L}(\mc{H})$ denotes
the set of linear operators on $\mc{H}$, and $\mc{P}(\mc{H})\subset\mc{L}
(\mc{H})$ denotes the set of positive semidefinite operators. $\1_{\mathcal{H}}$
is the identity operator. The set of (normalized) quantum states and
subnormalized quantum states on $\mc{H}$ are denoted as $\mc{S}(\mc{H})$
and $\mathcal{S}_{\leq}(\mathcal{H})$, respectively. They are given by
\begin{align*}
\mc{S}(\mc{H})        &=\{\rho \in \mc{P}(\mc{H}) | \tr\rho=1 \},\\
\mc{S}_{\leq}(\mc{H}) &=\{\rho \in \mc{P}(\mc{H}) | \tr\rho \leq 1 \}
\end{align*}
and also called density operators. A classical-quantum (CQ) state is a
bipartite state of the form $\rho_{XA}=\sum_x p(x) \proj{x}_X \ox \rho^x_A$,
where $\rho^x_A \in \mc{S}(\mc{H})$, $p(x)$ is a probability distribution,
and $\{\ket{x}\}$ is an orthonormal basis of the underlying Hilbert space
$\mathcal{H}_X$. If the system $X$ is classical as in the CQ state, we also
use the notation $X$ to represent a random variable that takes the value
$x$ with probability $p(x)$. The set of all the possible values of $X$ is
denoted by the corresponding calligraphic letter $\mc{X}$.

We write $A\geq 0$ if $A\in\mc{P}(\mc{H})$, and $A\geq B$ if $A-B\geq 0$.
If $A \in \mathcal{L}(\mathcal{H})$ is self-adjoint, we use $\{A \geq 0\}$
to denote the spectral projection of $A$ corresponding to all non-negative
eigenvalues. $\{A > 0\}$, $\{A \leq 0\}$ and $\{A < 0\}$ are defined in
a similar way. The positive part of $A$ is defined as $A_+:=A\{A>0\}$. We
can easily check that, for any $D \in \mathcal{L}(\mathcal{H})$ such that
$0\leq D\leq\1$,
\begin{equation}
\label{eq:pos}
\tr A_+ \geq \tr AD.
\end{equation}

A quantum channel (or quantum operation), which acts on quantum states,
is formally described by a linear, completely positive, trace-preserving
(CPTP) map $\Phi:\mc{L}(\mc{H}_A) \rightarrow \mc{L}(\mc{H}_B)$. A quantum
measurement is described by a set of positive semidefinite operators
$\mc{M}=\{M_x\}_x$ such that $\sum_xM_x=\1$, and it converts a quantum state
$\rho$ into a probability vector $\vec{p}$ with $\vec{p}_x=\tr\rho M_x$.
For each quantum measurement $\mc{M}=\{M_x\}_x$, there is a measurement
channel $\Phi_\mc{M}:\rho\mapsto \sum_x(\tr\rho M_x)\proj{x}$, where
$\{\ket{x}\}$ is an orthonormal basis.

We employ the purified distance~\cite{GLN2005distance, TCR2009fully}
to measure the closeness of two states $\rho, \sigma \in \mathcal{S}_{\leq}
(\mathcal{H})$. It is defined as $P(\rho,\sigma):=\sqrt{1-F^2(\rho,\sigma)}$,
where
\[
F(\rho,\sigma):=\left\|\sqrt\rho \sqrt\sigma \right\|_1
                              + \sqrt{(1-\tr\rho)(1-\tr \sigma)} \\
\]
is the fidelity function. The purified distance has some nice properties,
inherited from the fidelity.
\begin{proposition}
\label{proposition:pd-properties}
The following properties hold for the purified distance.
\begin{enumerate}[(i)]
  \item Triangle inequality~\cite{GLN2005distance}: Let $\rho,\sigma,\tau
        \in \mc{S}_{\leq}(\mc{H})$. Then
        \[
        P(\rho,\sigma) \leq P(\rho,\tau) + P(\tau,\sigma) ;
        \]
  \item Fuchs-van de Graaf inequality~\cite{FuchsVan1999cryptographic}:
        Let $\rho,\sigma \in \mc{S}_{\leq}(\mc{H})$. Then
        \[
        d(\rho,\sigma) \leq P(\rho,\sigma)
                       \leq \sqrt{2d(\rho,\sigma)-d^2(\rho,\sigma)},
        \]
        where $d(\rho,\sigma):=\frac{1}{2}(\left\| \rho-\sigma \right\|_1
        + |\tr (\rho-\sigma)|)$ is the trace distance;
  \item Data processing inequality~\cite{BCFJS1996noncommuting}: Let $\rho,
        \sigma \in \mc{S}_{\leq}(\mc{H})$ and $\Phi$ be a CPTP map. Then
        \[
        P(\rho,\sigma) \geq P(\Phi(\rho),\Phi(\sigma)) ;
        \]
  \item Uhlmann's theorem~\cite{Uhlmann1976the}: Let $\rho_{AB}
        \in \mc{S}_{\leq}(\mc{H_{AB}})$ be a bipartite state, and
        $\sigma_{A} \in \mc{S}_{\leq}(\mc{H_{A}})$. Then there exists
        an extension $\sigma_{AB}$ of $\sigma_{A}$ such that
        \[
        P(\rho_{AB},\sigma_{AB}) = P(\rho_{A},\sigma_{A}).
        \]
\end{enumerate}
\end{proposition}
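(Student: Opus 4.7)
The plan is to establish each of the four items by reducing it to the corresponding (and well known) statement for normalized states, then extending to the subnormalized case via the standard embedding trick. Concretely, for $\rho\in\mathcal{S}_{\leq}(\mathcal{H})$ I would introduce a one-dimensional ``flag'' space spanned by $\ket{\perp}$ and set
\[
\widetilde{\rho}\;:=\;\rho\,\oplus\,(1-\tr\rho)\,\proj{\perp}\;\in\;\mathcal{S}(\mathcal{H}\oplus\mathbb{C}\ket{\perp}),
\]
and similarly for $\sigma,\tau$. A direct computation using the block-diagonal structure shows $\|\sqrt{\widetilde\rho}\sqrt{\widetilde\sigma}\|_{1}=\|\sqrt\rho\sqrt\sigma\|_{1}+\sqrt{(1-\tr\rho)(1-\tr\sigma)}=F(\rho,\sigma)$, so the subnormalized fidelity coincides with the usual fidelity of the embedded states, and $P(\widetilde\rho,\widetilde\sigma)=P(\rho,\sigma)$. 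Likewise $d(\widetilde\rho,\widetilde\sigma)=d(\rho,\sigma)$ with the trace-distance convention used above. This embedding reduces each item to its normalized counterpart.

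For (i) I would work with the ``fidelity angle'' $A(\rho,\sigma):=\arccos F(\rho,\sigma)\in[0,\pi/2]$, so that $P(\rho,\sigma)=\sin A(\rho,\sigma)$. Using Uhlmann's theorem (applied to $\widetilde\rho,\widetilde\sigma,\widetilde\tau$) one can pick purifications in a common space realizing $F(\widetilde\rho,\widetilde\tau)$ and $F(\widetilde\tau,\widetilde\sigma)$ simultaneously; then the triangle inequality for angles between unit vectors in a Hilbert space gives $A(\rho,\sigma)\le A(\rho,\tau)+A(\tau,\sigma)$. Combined with the subadditivity $\sin(\alpha+\beta)\le\sin\alpha+\sin\beta$ for $\alpha,\beta\in[0,\pi/2]$ (immediate from $\sin(\alpha+\beta)=\sin\alpha\cos\beta+\cos\alpha\sin\beta$), this yields the stated triangle inequality.

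For (ii) I would invoke the classical Fuchs--van de Graaf inequalities $d(\widetilde\rho,\widetilde\sigma)\le\sqrt{1-F(\widetilde\rho,\widetilde\sigma)^{2}}\le\sqrt{2d(\widetilde\rho,\widetilde\sigma)-d(\widetilde\rho,\widetilde\sigma)^{2}}$ for normalized states, and then use the identities above for $F$, $d$ and $P$ under the embedding. For (iii), since any CPTP map $\Phi$ on $\mathcal{L}(\mathcal{H})$ can be extended to a CPTP map $\widetilde\Phi$ on the enlarged space that preserves the flag block, the inequality reduces to monotonicity of fidelity of normalized states under CPTP maps, a standard consequence of the Stinespring dilation together with the fact that $F$ is non-decreasing under partial trace. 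For (iv), apply the usual Uhlmann theorem to $\widetilde\rho_{AB}$ and $\widetilde\sigma_{A}$ to obtain an extension $\widetilde\sigma_{AB}$ with $F(\widetilde\rho_{AB},\widetilde\sigma_{AB})=F(\widetilde\rho_{A},\widetilde\sigma_{A})$, then restrict back to $\mathcal{H}_{AB}$ to produce the required $\sigma_{AB}$.

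The main obstacle is essentially a bookkeeping one, namely verifying that the flag embedding is compatible with all of $F$, $P$, $d$, and the action of $\Phi$ simultaneously; once these compatibilities are in place, each property follows at once from its well known normalized counterpart, and no new analytical input beyond the cited works is required.
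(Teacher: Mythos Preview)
The paper does not actually prove this proposition: it is stated as a list of known facts about the purified distance, with each item attributed to the original reference (Gilchrist--Langford--Nielsen, Fuchs--van de Graaf, Barnum et al., Uhlmann). There is therefore no ``paper's own proof'' to compare against. Your sketch via the flag-space embedding $\rho\mapsto\widetilde\rho=\rho\oplus(1-\tr\rho)\proj{\perp}$ is the standard way to handle the subnormalized case (see, e.g., Tomamichel's book, which the paper also cites), and each of your reductions to the normalized counterpart is correct; so the proposal is fine, but it supplies an argument that the paper simply takes for granted from the literature.
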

The $\epsilon\text{-ball}$ of subnormalized quantum states around $\rho\in
\mc{S}(\mc{H})$ is defined using the purified distance as
\[
  \mathcal{B}^{\epsilon}(\rho):=\{\tilde{\rho} \in \mc{S}_{\leq}(\mc{H}) |
  P(\tilde{\rho}, \rho) \leq \epsilon\}.
\]

For an operator $A\in\mc{L}(\mc{H})$, let $v(A)$ be the number of
different eigenvalues of $A$. If $A$ is self-adjoint with spectral
projections $P_1, \ldots, P_{v(A)}$, then the associated pinching
map $\mc{E}_{A}: \mc{L}(\mc{H}) \rightarrow \mc{L}(\mc{H})$ is a
CPTP map given by
\[
\mc{E}_A : X \mapsto \sum_{i} P_i X P_i.
\]
The pinching inequality~\cite{Hayashi2002optimal} states that if
$X$ is positive semidefinite, we have
\begin{equation}
\label{eq:pinchingineq}
X \leq v(A) \mc{E}_A(X).
\end{equation}

\subsection{Entropies and information divergences}
The quantum relative entropy for $\rho\in \mc{S}(\mc{H})$ and
 $\sigma\in \mc{P}(\mc{H})$ is defined~\cite{Umegaki1954conditional} as
\[
  D(\rho\|\sigma) :=
  \begin{cases}
  \tr(\rho(\log\rho-\log\sigma)) & \text{ if }\supp(\rho)\subseteq\supp(\sigma), \\
  +\infty                        & \text{ otherwise},
  \end{cases}
\]
where the logarithm function $\log$ is with base $2$ throughout this paper.
For a bipartite state $\rho_{AB} \in \mc{S}(\mc{H}_{AB})$, the quantum
mutual information and the conditional entropy are defined,
respectively, as
\[
\begin{split}
I(A:B)_\rho &:=D(\rho_{AB} \| \rho_A \ox \rho_B), \\
H(A|B)_\rho &:=-D(\rho_{AB} \| \1_A \ox \rho_B).
\end{split}
\]

Among various inequivalent generalizations of the R{\'e}nyi relative entropy
to the non-commutative quantum situation, the sandwiched R{\'e}nyi
divergence~\cite{MDSFT2013on, WWY2014strong} is of particular interest.
\begin{definition}
\label{definition:SRD}
Let $\rho \in \mc{S}(\mc{H})$, $\sigma \in \mc{P}(\mc{H})$, and $\alpha\in (0,1)
\cup (1,\infty)$. If either $0<\alpha<1$ and $\tr\rho\sigma\neq 0$ or $\alpha>1$
and $\supp(\rho)\subseteq\supp(\sigma)$, the sandwiched R{\'e}nyi divergence of
order $\alpha $ is defined as
\[
D_{\alpha}(\rho \| \sigma):=\frac{1}{\alpha-1} \log Q_{\alpha}(\rho \| \sigma),
\quad \text{where }\
Q_{\alpha}(\rho \| \sigma):=\tr {({\sigma}^{\frac{1-\alpha}{2\alpha}} \rho
                                  {\sigma}^{\frac{1-\alpha}{2\alpha}})}^\alpha.
\]
Otherwise, we set $D_{\alpha}(\rho \| \sigma)=+\infty$.
\end{definition}

When $\alpha$ goes to infinity, $D_{\alpha}(\rho \| \sigma)$ converges
to the max-relative entropy~\cite{Datta2009min}
\begin{equation}
\label{equ:maxdefinition}
D_\text{max}(\rho \| \sigma):=\inf\{ \lambda \;|\; \rho \leq 2^{\lambda}\sigma\}.
\end{equation}
For $\rho_{AB} \in \mc{S}(\mc{H}_{AB})$ and $\alpha\in (0,1)\cup (1,\infty)$,
we consider the sandwiched R{\'e}nyi conditional entropy of order
$\alpha$ defined as~\cite{TBH2014relating}
\[
H_{\alpha}(A|B)_\rho:=-D_{\alpha}(\rho_{AB} \| \1_A \ox \rho_B).
\]
If the system $B$ is of dimension $1$, the sandwiched R{\'e}nyi conditional
entropy reduces to the R{\'e}nyi entropy of a single system $H_{\alpha}(A)_\rho
=-D_{\alpha}(\rho_{A} \| \1_A )=\frac{1}{1-\alpha}\log\tr\rho_A^\alpha$. We
mention that these definitions can be extended to include the cases that
$\alpha=0,1,+\infty$ by taking the limit of $\alpha$. Moreover, for a CQ
state $\rho_{XE}$, we define for $s>0$,
\beq\label{eq:Rs}
\hat{R}(s):=\frac{\mathrm{d}}{\mathrm{d} s} sH_{1+s}(X|E)_{\rho}
\eeq
and we set
\beq\label{eq:Rcritical}
  R_{\rm critical}:=\hat{R}(1)
 =\frac{\mathrm{d}}{\mathrm{d} s} sH_{1+s}(X|E)_{\rho}\big|_{s=1}.
\eeq

In the next proposition,
we collect a few properties of the R{\'e}nyi quantities defined above.
\begin{proposition}
\label{prop:srd-properties}
Let $\rho \in \mc{S}(\mc{H}), \sigma \in \mc{P}(\mc{H})$, $\xi_{AB} \in \mc{S}
(\mc{H}_{AB})$, and $\omega_{XAB}=\sum_x p(x) \proj{x}_X \ox \omega^x_{AB} \in
\mc{S}(\mc{H}_{XAB})$. Then the sandwiched R{\'e}nyi divergence and the
R{\'e}nyi conditional entropy satisfy the following properties:
\begin{enumerate}[(i)]
  \item Monotonicity~\cite{MDSFT2013on,Beigi2013sandwiched}: If $0<\alpha\leq
        \beta$, then $D_{\alpha}(\rho\|\sigma)\leq D_{\beta}(\rho \| \sigma)$ ;
  \item Limit of $\alpha \rightarrow 1$~\cite{MDSFT2013on,WWY2014strong}:
        $\lim\limits_{\alpha\rightarrow 1}D_{\alpha}(\rho\| \sigma)=D(\rho\|
        \sigma)$, and $\lim\limits_{\alpha\rightarrow 1}H_{\alpha}(A|B)_\rho=
        H(A|B)_\rho$ ;
  \item Data processing inequality~\cite{FrankLieb2013monotonicity,
        Beigi2013sandwiched, MDSFT2013on, WWY2014strong}: Let $\alpha \in
        [\frac{1}{2}, \infty)$ and $\Phi$ be a CPTP map. Then
        \[
          D_{\alpha}(\rho\| \sigma)\geq D_{\alpha}(\Phi(\rho)\| \Phi(\sigma)) ;
        \]
  \item Convexity~\cite{MosonyiOgawa2017strong}: For $\alpha \in (0, +\infty)$,
        the function $f(\alpha)=\log Q_\alpha(\rho \| \sigma)$ is convex;
  \item Invariance under isometries~\cite{MDSFT2013on,WWY2014strong}: Let
        $\mc{U}:\mc{H} \rightarrow \mc{H}'$, $\mc{U}_A: \mc{H}_A \rightarrow
        \mc{H}_{A'}$ and $\mc{U}_B: \mc{H}_B \rightarrow \mc{H}_{B'}$ be
        isometries. Then $D_{\alpha}(\mc{U}\rho\mc{U}^{*} \| \mc{U} \sigma
        \mc{U}^{*})=D_{\alpha}(\rho \| \sigma)$ and $H_\alpha(A' | B')_
        {(\mc{U}_A\ox \mc{U}_B)\xi_{AB}(\mc{U}^{*}_A\ox \mc{U}^{*}_B)}=H_\alpha
        (A|B)_{\xi_{AB}}$ ;
  \item Monotonicity under discarding classical information~\cite{LWD2016strong}:
        For the state $\sigma_{XAB}$ that is classical on $X$ and for $\alpha\in
        (0,+\infty)$,
        \[H_{\alpha}(AX|B)_\sigma \geq H_{\alpha}(A|B)_\sigma\, .\]
\end{enumerate}
\end{proposition}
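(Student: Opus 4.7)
The plan is to verify the six items separately; they are largely independent, and most reduce to standard results in matrix analysis found in the cited references. I would dispatch the two structural items first. For (ii), expanding $\sigma^{(1-\alpha)/(2\alpha)} = \1 - \frac{\alpha-1}{2\alpha}\log\sigma + O((\alpha-1)^2)$ together with $\tr X^{\alpha} = \tr X + (\alpha-1)\tr(X\log X) + O((\alpha-1)^2)$ inside $Q_\alpha(\rho\|\sigma)$ yields $\log Q_\alpha(\rho\|\sigma) = (\alpha-1)\bigl(\tr(\rho\log\rho) - \tr(\rho\log\sigma)\bigr) + O((\alpha-1)^2)$, so L'H\^opital gives the limit; the conditional-entropy case is the specialization $\sigma = \1_A\ox\rho_B$. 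For (v), the functional-calculus identity $\mc{U}f(A)\mc{U}^{*} = f(\mc{U}A\mc{U}^{*})$ for isometries $\mc{U}$ and analytic $f$ on $\supp(A)$, applied with $A = \sigma$ and $f(t) = t^{(1-\alpha)/(2\alpha)}$, gives $Q_\alpha(\mc{U}\rho\mc{U}^{*}\|\mc{U}\sigma\mc{U}^{*}) = Q_\alpha(\rho\|\sigma)$ after a short manipulation.

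For items (i) and (iv) I would handle the convexity first. Once $\log Q_\alpha$ is convex in $\alpha$, then since $\log Q_{1}(\rho\|\sigma) = \log\tr\rho = 0$, the chord slope $D_\alpha(\rho\|\sigma) = \log Q_\alpha / (\alpha - 1)$ emanating from the fixed point $(1,0)$ is non-decreasing in $\alpha$, which is (i). For (iv) itself I would apply the Hadamard three-lines theorem to the analytic family $z\mapsto \tr\bigl(\sigma^{(1-z)/(2z)}\rho\,\sigma^{(1-z)/(2z)}\bigr)^{z}$ on an appropriate strip, extracting convexity of $\log Q_\alpha$ in the real parameter $\alpha$ from the interpolation on the strip boundary.

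The main obstacle is the data-processing inequality (iii), the only item requiring substantive operator inequalities. I would follow Beigi's proof: set up an analytic family on the strip $\{1/2 \le \mathrm{Re}\,z \le 1\}$ whose values at the two boundary lines are controlled by trace-norm contractivity of $\Phi$ (via Russo--Dye) and its dual, and invoke the three-lines theorem to interpolate to the interior point corresponding to $\alpha$; the hypothesis $\alpha \ge 1/2$ is precisely what keeps the analytic family integrable across the strip. For $\alpha > 1$, an alternative is to apply Lieb's concavity theorem to the map $X\mapsto \tr(\Phi(\rho)^{1/2} \Phi(\sigma)^{-1} \Phi(\rho)^{1/2})^\alpha$. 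Finally, (vi) does not follow from DPI applied to $\tr_X$ alone, because $\tr_X(\1_{XA}\ox\rho_B) = |\mc{X}|\cdot\1_A\ox\rho_B$ rather than the desired reference $\1_A\ox\rho_B$. Instead, exploiting the classical block structure of $\omega_{XAB}$, with $Z_x := p(x)(\1_A\ox\rho_B)^{(1-\alpha)/(2\alpha)}\omega^x_{AB}(\1_A\ox\rho_B)^{(1-\alpha)/(2\alpha)}$, direct computation gives $Q_\alpha(\omega_{XAB}\|\1_{XA}\ox\rho_B) = \sum_x \tr Z_x^\alpha$ while $Q_\alpha(\omega_{AB}\|\1_A\ox\rho_B) = \tr\bigl(\sum_x Z_x\bigr)^{\alpha}$, and the claim reduces to the Rotfel'd-type trace inequality $\tr(\sum_x Z_x)^\alpha \ge \sum_x \tr Z_x^\alpha$ for $\alpha \ge 1$ (with the reverse for $\alpha \in (0,1)$), which combined with the sign of $1/(\alpha-1)$ in $D_\alpha$ gives the result across the full range.
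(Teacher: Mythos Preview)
The paper does not prove this proposition at all: it is stated in the preliminaries as a list of known facts, each item attributed to the cited references, and the text moves on immediately. There is therefore no argument in the paper to compare your sketch against; you have supplied what the authors deliberately omitted.

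As a standalone outline, most of it is sound. The Taylor expansion for (ii), the functional-calculus argument for (v), the chord-from-a-fixed-point derivation of (i) from (iv), the interpolation sketch for (iii), and especially the treatment of (vi) are all correct. In particular, you are right that (vi) does not follow from data processing applied to $\tr_X$, and your reduction to the block inequality $\tr\bigl(\sum_x Z_x\bigr)^\alpha \gtrless \sum_x \tr Z_x^\alpha$ (with direction governed by the sign of $\alpha-1$) is exactly how the cited reference handles it.

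The one genuine gap is your plan for (iv). The analytic family you write down, $z\mapsto \tr\bigl(\sigma^{(1-z)/(2z)}\rho\,\sigma^{(1-z)/(2z)}\bigr)^{z}$, has $z$ appearing nonlinearly in two places at once---in the inner exponent $(1-z)/(2z)$ and as the outer power---so it is neither obviously analytic on a strip nor of the form to which Hadamard three-lines yields convexity in the real part. Even granting analyticity, three-lines would give log-convexity of $\sup_t|F(\alpha+it)|$, which is not the same object as $\log Q_\alpha$. The actual proof in the cited Mosonyi--Ogawa paper proceeds differently, via a reparametrization and the log-convexity of weighted $L_p$-norms (equivalently, a variational or interpolation argument applied after isolating a single linear occurrence of the parameter). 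Since you lean on (iv) to obtain (i), this gap propagates; you should either reparametrize so that the interpolation parameter enters linearly, or prove (i) independently (e.g.\ via the Araki--Lieb--Thirring inequality, as in the original references).
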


The result of the following proposition is established by
Mosonyi and Ogawa~\cite{MosonyiOgawa2015quantum}.
\begin{proposition}
\label{prop:MosonyiOgawa}
For any $\rho \in \mc{S}(\mc{H})$, $\sigma \in \mc{P}(\mc{H})$,
$a \in \mathbb{R}$ and $t>0$, we have
\begin{equation}
\label{eq:MosonyOgawa}
  \lim_{n\rightarrow\infty}
  \frac{\log \tr \rho^{\ox n} \{\rho^{\ox n}>t2^{na}\sigma^{\ox n}\}}{n}
=\lim_{n\rar\infty}\frac{\log\tr(\rho^{\ox n}-t2^{na}\sigma^{\ox n})_+}{n}
=\inf_{s \geq 0}\big\{ s\big(D_{1+s}(\rho \| \sigma)-a\big)\big\}.
\end{equation}
\end{proposition}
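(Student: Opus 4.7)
The plan is to establish matching upper and lower bounds on the exponential rate. Since $\tr(\rho^{\otimes n}-t 2^{na}\sigma^{\otimes n})_+\le \tr\rho^{\otimes n}\{\rho^{\otimes n}>t 2^{na}\sigma^{\otimes n}\}$ (because $A_+=A\{A>0\}$ and $A-B\le A$ when $B\ge 0$), it suffices to prove the upper bound $\limsup_n \frac{1}{n}\log\tr\rho^{\otimes n}\{\cdots\}\le \inf_{s\ge 0}\{s(D_{1+s}(\rho\|\sigma)-a)\}$ on the larger quantity and the matching lower bound $\liminf_n \frac{1}{n}\log\tr(\rho^{\otimes n}-t 2^{na}\sigma^{\otimes n})_+\ge \inf_{s\ge 0}\{s(D_{1+s}(\rho\|\sigma)-a)\}$ on the smaller one.

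For the upper bound I would combine pinching with a Markov estimate. Let $\mathcal{E}_n$ denote the pinching map associated with $\sigma^{\otimes n}$; a standard type-counting argument gives $v_n:=v(\sigma^{\otimes n})\le\mathrm{poly}(n)$. The pinching inequality~(\ref{eq:pinchingineq}) yields $\rho^{\otimes n}\le v_n\widetilde{\rho}_n$ with $\widetilde{\rho}_n:=\mathcal{E}_n(\rho^{\otimes n})$ commuting with $\sigma^{\otimes n}$, so working in a joint eigenbasis a classical Markov inequality gives for any $s\ge 0$
\[
\tr\rho^{\otimes n}\{\rho^{\otimes n}>t 2^{na}\sigma^{\otimes n}\}\le v_n^{1+s}(t 2^{na})^{-s}\tr\widetilde{\rho}_n^{1+s}(\sigma^{\otimes n})^{-s}.
\]
Since $\mathcal{E}_n(\sigma^{\otimes n})=\sigma^{\otimes n}$, data processing for the sandwiched R{\'e}nyi divergence (Proposition~\ref{prop:srd-properties}(iii)) bounds $\tr\widetilde{\rho}_n^{1+s}(\sigma^{\otimes n})^{-s}=Q_{1+s}(\widetilde{\rho}_n\|\sigma^{\otimes n})\le Q_{1+s}(\rho^{\otimes n}\|\sigma^{\otimes n})=Q_{1+s}(\rho\|\sigma)^n$ by additivity. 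Taking $\frac{1}{n}\log$, $\limsup_n$, and then $\inf_{s\ge 0}$ delivers the claimed upper bound.

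For the matching lower bound I would use the variational identity $\tr(\rho^{\otimes n}-t 2^{na}\sigma^{\otimes n})_+ = \max_{0\le T\le\1}\tr(\rho^{\otimes n}-t 2^{na}\sigma^{\otimes n})T$ and exhibit a good test. By convexity of $\phi(s):=sD_{1+s}(\rho\|\sigma)$ on $[0,\infty)$ (Proposition~\ref{prop:srd-properties}(iv)), the assumption $a\in(D(\rho\|\sigma),D_{\max}(\rho\|\sigma))$ ensures that $\inf_{s\ge 0}\{\phi(s)-sa\}$ is attained at a finite interior $s^*$ with $\phi'(s^*)=a$. Using the Nussbaum-Szko{\l}a auxiliary distributions of $(\rho,\sigma)$ one reduces to a classical large-deviations problem for i.i.d.\ sums, to which Cram{\'e}r/G{\"a}rtner-Ellis theory supplies the rate function $\inf_{s\ge 0}\{s(D_{1+s}(\rho\|\sigma)-a)\}$. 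The passage between the quantum and classical rate functions requires complementing the pinching-DPI upper bound above with the Araki-Lieb-Thirring-type lower bound on $Q_{1+s}(\widetilde{\rho}_n\|\sigma^{\otimes n})$, yielding $\tfrac{1}{n}\log Q_{1+s}(\widetilde{\rho}_n\|\sigma^{\otimes n})\to \log Q_{1+s}(\rho\|\sigma)$; a test built from the spectral projector of the tilted Gibbs operator $(\sigma^{-s^{*}/(2(1+s^{*}))}\rho^{\otimes n}\sigma^{-s^{*}/(2(1+s^{*}))})^{1+s^{*}}$ then attains the Legendre-dual exponent.

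The main obstacle is the converse. The pinching-Markov upper bound is a routine quantum large-deviation calculation, but the matching lower bound requires genuinely reducing the non-commutative problem to a commutative one without losing exponential accuracy. Pinching only introduces polynomial prefactors, yet one must verify that the pinched sandwiched R{\'e}nyi rate agrees with the true sandwiched R{\'e}nyi rate in the limit, and that the optimizer $s^{*}$ lies in the interior of $[0,\infty)$ so that the classical Legendre infimum is finite and sharp. Both conditions are secured precisely by the hypothesis $a\in(D(\rho\|\sigma),D_{\max}(\rho\|\sigma))$, which pins down a nondegenerate interior Legendre optimizer and thereby a nontrivial matching exponent.
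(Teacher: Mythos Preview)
The paper does not actually prove Proposition~\ref{prop:MosonyiOgawa}; it simply imports the result from Mosonyi and Ogawa~\cite{MosonyiOgawa2015quantum} and observes that replacing the constant $1$ by an arbitrary $t>0$ does not affect the exponential rate. So there is no ``paper's own proof'' to compare against beyond that citation.

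Your upper bound is essentially fine in spirit, though the step ``working in a joint eigenbasis'' is not literally justified: the projector $\{\rho^{\otimes n}>t2^{na}\sigma^{\otimes n}\}$ need not commute with $\widetilde{\rho}_n$ or $\sigma^{\otimes n}$, so you cannot immediately apply a classical Markov inequality. A clean fix is to bypass pinching entirely: apply the data-processing inequality for $D_{1+s}$ under the binary measurement $\{P_n,\1-P_n\}$ with $P_n=\{\rho^{\otimes n}>t2^{na}\sigma^{\otimes n}\}$, use $\tr\sigma^{\otimes n}P_n\le (t2^{na})^{-1}\tr\rho^{\otimes n}P_n$, and solve for $p_n=\tr\rho^{\otimes n}P_n$ to obtain $p_n\le (t2^{na})^{-s}Q_{1+s}(\rho\|\sigma)^n$ directly.

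Your lower bound, however, contains a genuine confusion. The Nussbaum--Szko{\l}a auxiliary distributions reproduce the \emph{Petz} R{\'e}nyi divergences in their cumulant generating function, not the sandwiched ones; using them as you describe would yield the rate $\inf_{s\ge 0}\{s(\bar{D}_{1+s}(\rho\|\sigma)-a)\}$ with $\bar{D}$ the Petz quantity, which is strictly larger than the claimed sandwiched exponent whenever $\rho$ and $\sigma$ do not commute. The actual Mosonyi--Ogawa argument does not invoke Nussbaum--Szko{\l}a at all: it uses the pinched state $\widetilde{\rho}_n=\mathcal{E}_{\sigma^{\otimes n}}(\rho^{\otimes n})$, applies the G\"artner--Ellis theorem to the commuting pair $(\widetilde{\rho}_n,\sigma^{\otimes n})$ via the test $\{\widetilde{\rho}_n>t2^{na}\sigma^{\otimes n}\}$, and then identifies the limiting cumulant function using the asymptotic pinching identity $\tfrac{1}{n}\log Q_{1+s}(\widetilde{\rho}_n\|\sigma^{\otimes n})\to \log Q_{1+s}(\rho\|\sigma)$. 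You mention this last convergence, but it is the \emph{whole} lower-bound mechanism, not a correction to be layered on top of a Nussbaum--Szko{\l}a reduction; drop the Nussbaum--Szko{\l}a reference and the ``tilted Gibbs'' test, and run G\"artner--Ellis directly on the pinched commuting pair.
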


\begin{remark}
\label{remark:sim}
In its original statement~\cite{MosonyiOgawa2015quantum}, Proposition~\ref{prop:MosonyiOgawa}
appears with $t$ being $1$ and $a$ being in the interval $(D(\rho \| \sigma), D_{\rm{max}}(\rho \| \sigma))$. However, it is easy to see that it holds for any $t>0$ and $a\in\mathbb{R}$. The reason for $t$ is obvious, since it can be absorbed into $a$ when $n\rar\infty$. As for $a$, we discuss the following two cases. 1) $a > D_{\rm{max}}(\rho \|\sigma)$: it is easy to check that the three expressions in Eq.~(\ref{eq:MosonyOgawa}) are all $-\infty$. 2) $a < D(\rho \|\sigma)$: by the equalities of Eq.~(\ref{eq:MosonyOgawa}) established for $a \in (D(\rho \| \sigma),D_{\rm{max}}(\rho \| \sigma))$, we see that the two limits in Eq.~(\ref{eq:MosonyOgawa}) goes to $0$ when $a\searrow D(\rho \| \sigma)$. In addition, $\tr \rho^{\ox n} \{\rho^{\ox n}>2^{na}\sigma^{\ox n}\}$ and $\tr(\rho^{\ox n}-2^{na}\sigma^{\ox n})_+$ are monotonically decreasing with $a$ (cf.~\cite{NagaokaHayashi}). So, the two limits are nonnegative when $a < D(\rho \| \sigma)$. On the other hand, it is easy to see that the two limits are upper bounded by $0$ because the terms in the logarithm function are upper bounded by $1$. Hence, we conclude that the two limits actually equal to $0$ when $a < D(\rho \| \sigma)$. This coincides with the third expression of Eq.~(\ref{eq:MosonyOgawa}).
\end{remark}

\section{Exponent in smoothing the max-relative entropy}
  \label{sec:smooth-entropy}
The max-relative entropy is defined in Eq.~(\ref{equ:maxdefinition}). The smoothed version based on the purified distance is given by the following definition~\cite{Datta2009min}.
\begin{definition}
Let $\rho \in \mc{S}(\mc{H})$, $\sigma \in \mc{P}(\mc{H})$,
and $0\leq\epsilon<1$. The smooth max-relative entropy is defined as
\[
D^\epsilon_{\rm{max}}(\rho \| \sigma):=\min_{\tilde{\rho} \in \mc{B}
^{\epsilon}(\rho)} D_{\rm{max}}(\tilde{\rho} \| \sigma).
\]
\end{definition}

In this section, we investigate the asymptotic behavior of the exponential
decay of the small modification in smoothing the max-relative entropy. To
formulate the problem in an equivalent way, we define the smoothing quantity,
for any
$\rho\in\mc{S}(\mc{H})$, $\sigma \in \mc{P}(\mc{H})$ and $\lambda\in\mathbb{R}$,
\begin{equation}
  \label{eq:def-err}
\epsilon(\rho\|\sigma,\lambda):= \min\left\{\epsilon \;|\; D^{\epsilon}_{\rm{max}}(\rho \| \sigma) \leq \lambda  \right\}=\min\left\{P(\rho,\tilde{\rho}) \;|\; \tilde{\rho} \leq
2^\lambda\sigma \quad\text{and}\quad \tilde{\rho} \in \mc{S}_{\leq}(\mc{H}) \right\}.
\end{equation}
We determine the precise exponential rate of decay for $\epsilon(\rho^
{\ox n}\|\sigma^{\ox n}, nr)$.

\begin{theorem}
 \label{theorem:exp-mre}
 For arbitrary $\rho\in\mc{S}(\mc{H})$, $\sigma \in \mc{P}
 (\mc{H})$, and $r\in\mathbb{R}$, we have
 \begin{equation}
 \label{eq:max}
  \lim_{n\rightarrow\infty}\frac{-1}{n}\log \epsilon(\rho^{\ox n}\|\sigma^{\ox n},nr)
  =\frac{1}{2}\sup_{s\geq 0}\big\{s\big(r-D_{1+s}(\rho\|\sigma)\big)\big\}.
 \end{equation}
\end{theorem}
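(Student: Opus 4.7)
My plan is to prove matching upper and lower bounds on $-\frac{1}{n}\log\epsilon(\rho^{\ox n}\|\sigma^{\ox n},nr)$, both reducing ultimately to Proposition~\ref{prop:MosonyiOgawa}. Setting $E^\ast := \sup_{s\geq 0}\{s(r - D_{1+s}(\rho\|\sigma))\}$, the factor $\tfrac{1}{2}$ in the claim comes from the definitional identity $P(\rho,\widetilde{\rho^n})^2 = 1 - F(\rho,\widetilde{\rho^n})^2$, which forces $P\asymp\sqrt{1-F}$ in the regime $F\to 1$; it therefore suffices to show that the optimal $1 - F(\rho^{\ox n},\widetilde{\rho^n})$ decays at exponential rate exactly $E^\ast$.

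For the achievability side (the ``$\geq$'' inequality), I would take the hypothesis-testing candidate $\widetilde{\rho^n} := \Pi_n\rho^{\ox n}\Pi_n$ with $\Pi_n := \{\rho^{\ox n} \leq 2^{nr}\sigma^{\ox n}\}$. On the range of $\Pi_n$ the spectral inequality $\rho^{\ox n}\leq 2^{nr}\sigma^{\ox n}$ holds by construction, so $\widetilde{\rho^n}\leq 2^{nr}\sigma^{\ox n}$ and $\tr\widetilde{\rho^n}\leq 1$. A standard Uhlmann-type argument --- using $(\Pi_n\ox\1)\ket{\psi}$ as a purification of $\widetilde{\rho^n}$, where $\ket{\psi}$ is any purification of $\rho^{\ox n}$ --- gives $F(\rho^{\ox n},\widetilde{\rho^n})\geq \tr\rho^{\ox n}\Pi_n$, and hence $1 - F \leq \tr\rho^{\ox n}\{\rho^{\ox n} > 2^{nr}\sigma^{\ox n}\}$. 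Applying Proposition~\ref{prop:MosonyiOgawa} with $a=r$ then yields $1 - F \leq 2^{-nE^\ast + o(n)}$, and $P^2\leq 2(1-F)$ completes this direction.

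For the converse (the ``$\leq$'' inequality), fix any feasible $\widetilde{\rho^n}$ and a parameter $a > r$, and let $M_n := \{\rho^{\ox n} > 2^{na}\sigma^{\ox n}\}$. Since $2^{na}\sigma^{\ox n}\leq \rho^{\ox n}$ holds on the range of $M_n$, the feasibility constraint yields $\tr\widetilde{\rho^n}M_n \leq 2^{nr}\tr\sigma^{\ox n}M_n \leq 2^{-n(a-r)}p_n$ with $p_n := \tr\rho^{\ox n}M_n$. Data processing for the purified distance (Proposition~\ref{proposition:pd-properties}(iii)) under the binary measurement channel $\{M_n,\1-M_n\}$, followed by an explicit computation of the resulting classical subnormalized fidelity, gives $F(\rho^{\ox n},\widetilde{\rho^n})\leq p_n\cdot 2^{-n(a-r)/2} + \sqrt{1 - p_n} \leq 1 - p_n/4$ for $n$ sufficiently large. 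Combining $P^2 \geq 1 - F$ with Proposition~\ref{prop:MosonyiOgawa} applied to $p_n$ and letting $a\searrow r$ (using continuity of $a\mapsto\sup_{s\geq 0}\{s(a-D_{1+s})\}$) recovers $P \geq 2^{-nE^\ast/2 - o(n)}$.

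The principal obstacle is the converse: the easier route $P\geq d$ combined with a hypothesis-testing lower bound on $d$ gives the wrong exponent $E^\ast$ rather than $E^\ast/2$, so I must upper-bound $F$ itself instead of lower-bounding $d$. The fidelity-monotonicity route then demands a carefully chosen measurement, with the extra room $a > r$ precisely what makes the cross term $2^{-n(a-r)/2}$ negligible relative to $p_n$. Finally, the degenerate regimes $r\leq D(\rho\|\sigma)$ and $r\geq D_{\max}(\rho\|\sigma)$, in which Proposition~\ref{prop:MosonyiOgawa} is not directly applicable, need separate treatment: in the former both sides of \eqref{eq:max} vanish (and the converse follows from a quantum Stein-type bound showing $\epsilon\not\to 0$), while in the latter $\widetilde{\rho^n}=\rho^{\ox n}$ is feasible, making $\epsilon=0$ and $E^\ast=+\infty$.
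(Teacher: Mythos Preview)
Your converse argument is correct and close in spirit to the paper's (the paper uses a fixed constant $t=9$ in place of your limit $a\searrow r$, but the mechanics are the same: bound the classical fidelity after a binary measurement, then invoke Proposition~\ref{prop:MosonyiOgawa}).

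The achievability direction, however, has a genuine gap. You set $\widetilde{\rho^n}=\Pi_n\rho^{\ox n}\Pi_n$ with $\Pi_n=\{\rho^{\ox n}\leq 2^{nr}\sigma^{\ox n}\}$ and assert that ``on the range of $\Pi_n$ the spectral inequality $\rho^{\ox n}\leq 2^{nr}\sigma^{\ox n}$ holds by construction, so $\widetilde{\rho^n}\leq 2^{nr}\sigma^{\ox n}$.'' What the spectral projection actually guarantees is $\Pi_n(\rho^{\ox n}-2^{nr}\sigma^{\ox n})\Pi_n\leq 0$, i.e.\ $\Pi_n\rho^{\ox n}\Pi_n\leq 2^{nr}\Pi_n\sigma^{\ox n}\Pi_n$. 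Feasibility requires the stronger statement $\Pi_n\rho^{\ox n}\Pi_n\leq 2^{nr}\sigma^{\ox n}$, which would follow from $\Pi_n\sigma^{\ox n}\Pi_n\leq\sigma^{\ox n}$. But for a projection $\Pi$ that does not commute with $\sigma$, the operator $\sigma-\Pi\sigma\Pi$ is generally \emph{not} positive semidefinite (take a $2\times 2$ diagonal $\sigma$ with unequal eigenvalues and $\Pi=\proj{+}$; the determinant of $\sigma-\Pi\sigma\Pi$ is strictly negative). Since $\Pi_n$ is built from the spectral decomposition of $\rho^{\ox n}-2^{nr}\sigma^{\ox n}$, there is no reason for it to commute with $\sigma^{\ox n}$ when $\rho$ and $\sigma$ do not commute, and your candidate need not lie in the feasible set at all.

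This non-commutativity obstruction is exactly what the paper's construction is designed to overcome: it replaces $\rho$ by its pinching $\mc{E}_\sigma(\rho)$ before forming the projection $Q=\{\mc{E}_\sigma(\rho)\leq v(\sigma)^{-1}2^\lambda\sigma\}$, so that $Q$ commutes with $\sigma$ and $Q\sigma Q\leq\sigma$ holds trivially; the pinching inequality $\rho\leq v(\sigma)\mc{E}_\sigma(\rho)$ then closes the argument at the cost of a polynomial factor $v(\sigma^{\ox n})$. (The paper's Remark also notes an alternative fix via the Datta--Renner construction.) Your fidelity estimate $F\geq\tr\rho^{\ox n}\Pi_n$ and the subsequent appeal to Proposition~\ref{prop:MosonyiOgawa} are fine once a feasible candidate is in hand, but as written the proof does not produce one.
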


The above theorem can be rewritten as follows. There exists a sequence $\varepsilon_n\to 0$
such that
\begin{align}
D^{2^{-n (r_e+\varepsilon_n)}}_{\rm{max}}(\rho^{\ox n} \| \sigma^{\ox n})
=nr\label{H1}
\end{align}
with $r_e= \frac{1}{2}\sup_{s\geq 0}\big\{s\big(r-D_{1+s}(\rho\|\sigma)\big)\big\}$. When $r\leq D(\rho\|\sigma)$, the right hand side of Eq.~\eqref{eq:max} is zero. Otherwise, it is strictly positive.

The quantum asymptotic equipartition property~\cite{TCR2009fully, Tomamichel2015quantum} states that, as $n\rar\infty$, $\epsilon(\rho^{\ox n}\|\sigma^{\ox n},nr)\rar 0$ when $r>D(\rho\|\sigma)$ and $\epsilon(\rho^{\ox n}\|\sigma^{\ox n},nr)\rar 1$ when $r<D(\rho\|\sigma)$. Moreover, these convergences are exponentially fast. Our result of Theorem~\ref{theorem:exp-mre} has provided the exact exponent for the decay of $\epsilon(\rho^{\ox n}\|\sigma^{\ox n},nr)$ in the case $r>D(\rho\|\sigma)$. This is in analogy
to the Hoeffding bound~\cite{Nagaoka2006converse, Hayashi2007error, ANSV2008asymptotic} for the hypothesis testing relative entropy.

\begin{proofof}{Theorem~\ref{theorem:exp-mre}}
At first, we deal with the "$\geq$" part. This is done by deriving a general
upper bound for $\epsilon(\rho\|\sigma,\lambda)$, and then we apply it to the
asymptotic situation. Set
\begin{equation}
  \label{eq:mre-proj}
  Q:=\big\{\mc{E}_\sigma(\rho) \leq \frac{1}{v(\sigma)}2^\lambda\sigma\big\},
\end{equation}
where $\mc{E}_\sigma$ is the pinching map and $v(\sigma)$ is the number of
distinct eigenvalues of $\sigma$. We consider the state $\tilde{\rho}=Q\rho Q$.
On the one hand, by the pinching inequality~(\ref{eq:pinchingineq}) and the
definition of $Q$, we have
\begin{equation}\begin{split}
  Q\rho Q &\leq v(\sigma) Q\mc{E}_\sigma(\rho)Q  \\
          &\leq v(\sigma) Q\left(\frac{1}{v(\sigma)}2^\lambda\sigma\right) Q  \\
          &\leq 2^\lambda\sigma.       \label{eq:mre-proof-a}
\end{split}\end{equation}
On the other hand, we can bound the distance between $\rho$ and $\tilde{\rho}$
as follows. Firstly,
\[\begin{split}
  P(\rho,\tilde{\rho}) & =   \sqrt{1-F(\rho, Q\rho Q)^2} \\
                       & =   \sqrt{1-(\tr\rho Q)^2}  \\
                       &\leq \sqrt{2\tr\rho(\1-Q)}.
\end{split}\]
Then, denoting $p=\tr\rho(\1-Q)$ and $q=\tr\sigma(\1-Q)$, from the definition
of $Q$ we easily see that $p\geq \frac{1}{v(\sigma)}2^\lambda q$. So, for any $s\geq 0$,
\begin{equation}\begin{split}
  P(\rho,\tilde{\rho})
&\le \sqrt{2p^{1+s}p ^{-s}}
  \leq \sqrt{2\left(p^{1+s} \big(\frac{1}{v(\sigma)}2^\lambda q \big)^{-s} \right)}  \\
  &\leq \sqrt{2\left(p^{1+s} \big(\frac{1}{v(\sigma)}2^\lambda q \big)^{-s} +(1-p)^{1+s}
   \big(\frac{1}{v(\sigma)}2^\lambda (\tr\sigma-q) \big)^{-s}\right)}  \\
  & =   \sqrt{2v(\sigma)^{s}\,2^{s\big(D_{1+s}((p,1-p)\|(q,\tr\sigma-q))-\lambda\big)}} \\
  &\leq \sqrt{2v(\sigma)^{s}\,2^ {s\big(D_{1+s}(\rho\|\sigma)-\lambda\big)}},
    \label{eq:mre-proof-b}
\end{split}\end{equation}
where the last line is by the data processing inequality for the sandwiched
R{\'e}nyi divergence under quantum measurements (Proposition~
\ref{prop:srd-properties} (\romannumeral3)). Eq.~(\ref{eq:mre-proof-a}) and
Eq.~(\ref{eq:mre-proof-b}) imply that
\[
\epsilon(\rho\|\sigma,\lambda) \leq \sqrt{2v(\sigma)^{s}\,2^ {s\big(D_{1+s}
                                             (\rho\|\sigma)-\lambda\big)}}.
\]
This further gives
\begin{equation}
  \label{eq:mre-proof-low}
\liminf_{n\rightarrow\infty} \frac{-1}{n}\log \epsilon(\rho^{\ox n}\|
       \sigma^{\ox n}, nr)
\geq \frac{1}{2}\sup_{s\geq 0}\big\{s\big(r-D_{1+s}(\rho\|\sigma)\big)\big\}.
\end{equation}
Here we have also used the inequality $v(\sigma^{\ox n})\leq (n+1)^{
\operatorname{rank}(\sigma)}$ (see, e.g.~\cite{CoverThomas1991elements},
Theorem 12.1.1).

Next, we turn to the derivation of the other direction. Let
$\rho_n\in\mc{S}_\leq (\mc{H}^{\ox n})$ be any subnormalized
state which satisfies
\begin{equation}
  \label{eq:mre-proof-c}
  \rho_n \leq 2^{nr}\sigma^{\ox n}.
\end{equation}
We are to lower bound the purified distance between $\rho^{\ox n}$ and
$\rho_n$. Set $Q_n:=\{\rho^{\ox n}>9\cdot 2^{nr} \sigma^{\ox n}\}$. Denote
$p_n=\tr\rho^{\ox n}Q_n$ and $q_n=\tr\rho_n Q_n$, which are the probabilities
of obtaining the outcome associated with $Q_n$ when a projective measurement
$\{Q_n, \1-Q_n\}$ is applied to $\rho^{\ox n}$ and $\rho_n$, respectively.
Then, by Eq.~(\ref{eq:mre-proof-c}) and the definition of $Q_n$, it is easy
to see that
\[\begin{split}
  Q_n\rho^{\ox n}Q_n &\geq 9\cdot 2^{nr} Q_n\sigma^{\ox n}Q_n \\
                     &\geq 9 Q_n\rho_n Q_n,
\end{split}\]
which gives
\begin{equation}
\label{eq:mre-proof-d}
p_n \geq 9q_n.
\end{equation}
Now by the monotonicity of the fidelity under quantum measurements, we have
\begin{equation*}
\begin{split}
F(\rho^{\ox n}, \rho_n) &\leq F\big((p_n,1-p_n),(q_n,\tr \rho_n-q_n)\big)\\
                        &\leq \sqrt{p_n} \sqrt{q_n}+\sqrt{1-p_n}\\
                        &\leq \frac{p_n}{3}+\sqrt{1-p_n},
\end{split}
\end{equation*}
where for the last line Eq.~(\ref{eq:mre-proof-d}) is used. Thus,
\begin{equation*}
\begin{split}
P(\rho^{\ox n},\rho_n)& =    \sqrt{1-F^2(\rho^{\ox n},\rho_n)} \\
& \ge
\sqrt{
1- \Big(\frac{p_n}{3}+\sqrt{1-p_n}\Big)^2
}\\
& =
\sqrt{
-\frac{p_n^2}{9}+ p_n
-\frac{2p_n}{3}
\sqrt{1-p_n}
}\\
& \ge
\sqrt{p_n}\sqrt{
-\frac{p_n}{9}+ 1
-\frac{2}{3}
}\\
& =\sqrt{p_n}\sqrt{\frac{1}{3}-\frac{p_n}{9}}.
\end{split}
\end{equation*}
Because $\rho_n$ is an arbitrary subnormalized state that satisfies
Eq.~(\ref{eq:mre-proof-c}), we obtain
\begin{equation}
\label{eq:mre-proof-e}
     \epsilon(\rho^{\ox n}\| \sigma^{\ox n}, nr)
\geq \sqrt{p_n}\sqrt{\frac{1}{3}-\frac{p_n}{9}}.
\end{equation}
Proposition~\ref{prop:MosonyiOgawa} provides the exact rate of exponential
 decay for $p_n$ in (\ref{eq:mre-proof-e}), yeilding
\begin{equation}
  \label{eq:mre-proof-up}
\limsup_{n\rightarrow\infty}\frac{-1}{n}\log\epsilon(\rho^{\ox n}\|\sigma^{\ox n},nr)
\leq \frac{1}{2}\sup_{s\geq 0}\big\{s\big(r-D_{1+s}(\rho\|\sigma)\big)\big\}.
\end{equation}
Combining Eq.~\eqref{eq:mre-proof-low} and Eq.~\eqref{eq:mre-proof-up} we complete
the proof.
\end{proofof}

\begin{remark}
For the first part (the "$\geq$" part) of the proof of Theorem~\ref{theorem:exp-mre},
we can also employ the method introduced in~\cite{DattaRenner2009smooth}
(cf. Lemma 7 and Lemma 8) to construct the state $\tilde{\rho}$. This method was
later used and refined in~\cite{TCR2009fully} and~\cite{Tomamichel2015quantum},
yielding tight upper bound for $\epsilon(\rho\|\sigma,\lambda)$. Our approach here is
more direct. However, the price to pay is that an additional quantity $v(\sigma)$
is involved.
\end{remark}

\section{Security exponent of privacy amplification against quantum adversaries}
  \label{sec:privacy-amp}
Assume that two parties, Alice and Bob, share some
common classical randomness, represented by a random variable $X$ which takes
any value $x\in\mc{X}$ with probability $p_x$. The information of $X$ is
partially leaked to an adversary Eve, and is stored in a quantum system $E$
whose state is correlated with $X$. This situation is described by the following
classical-quantum (CQ) state
\begin{equation}
  \label{eq:source-state}
\rho_{XE}=\sum_{x} p_x \proj{x}_X \ox \rho^x_E.
\end{equation}

In the procedure of privacy amplification, Alice and Bob apply a hash function
$f:\mc{X}\rightarrow\mc{Z}$ to extract a random number $Z$, which is expected
to be uniformly distributed and independent of the adversary's system $E$. This
results in the state
\begin{equation}
  \label{eq:final-state}
\rho_{ZE}^f:=\sum_{z} \proj{z}_Z \ox \sum_{x\in f^{-1}(z)} p_x\rho^x_E
\end{equation}
on systems $Z$ and $E$. The size of the extracted randomness is $|\mc{Z}|$ and
the security is measured
by the closeness of this real state to the ideal state $\frac{\1_Z}{|\mc{Z}|}
\ox\rho_E$. In this paper, we consider two security measures, the insecurity
$P(\rho_{ZE}^f,\frac{\1_Z}{|\mc{Z}|}\ox\rho_E)$ in terms of purified distance,
and the insecurity $D(\rho_{ZE}^f \| \frac{\1_Z}{|\mc{Z}|}\ox\rho_E)$
in terms of relative entropy. These two measures have been extensively used
in the literature for privacy amplification. See,
e.g.,~\cite{TomamichelHayashi2013hierarchy, ABJT2020partially} for the purified
distance measure, and~\cite{BBCM1995generalized, Hayashi2011exponential,
Hayashi2015precise} for the relative entropy measure.
The latter is also called modified quantum mutual information and is related
to the leaked information~\cite{Hayashi2015precise}. Since it can be written
as
\[\begin{split}
      D(\rho_{ZE}^f \| \frac{\1_Z}{|\mc{Z}|}\ox\rho_E)
     &=I(Z;E)_{\rho^f}+D(\rho_Z^f \| \frac{\1_Z}{|\mc{Z}|}) \\
     &=\log |\mc{Z}| - H(Z|E)_{\rho^f},
\end{split}\]
we can understand it as the leaked information plus the nonuniform of the
extracted randomness, or the difference between the ideal ignorance and the
real ignorance of the extracted randomness, from the viewpoint of the
adversary.

The two-universal family of hash functions are commonly employed to extract
private randomness. It has the advantage of being universal (irrelevant of the
detailed structure of the state $\rho_{XE}$), as well as being efficiently
realizable~\cite{CarterWegman1979universal,BBCM1995generalized,Renner2005security,
TSSR2011leftover, Hayashi2015precise}. This is particularly useful in the
cryptographic setting. Let $\mc{F}$ be a set of hash functions from $\mc{X}$
to $\mc{Z}$, and $F$ represent a random choice of hash function $f$ from (a
subset of) $\mc{F}$ with probability $P_F(f)$. If $\forall (x_1,x_2)\in\mc{X}^2$
with $x_1\neq x_2$,
\begin{equation}
  \label{eq:two-universal}
\Pr\big\{F(x_1)=F(x_2)\big\} \leq \frac{1}{|\mc{Z}|},
\end{equation}
we say that the pair $(\mc{F},P_F)$ is two-universal, and that $F$ is
a two-universal random hash function.

The preceding work \cite{Hayashi2015precise} has derived an upper bound, in terms of the sandwiched R{\'e}nyi
divergence, for the insecurity of privacy amplification under the relative
entropy measure. When $n$-multiple copies of the
state~(\ref{eq:source-state}) are available, this provides an achievable
rate of the exponential decreasing of the insecurity, when the number of
copies $n$ increase. We are interested in the problem of determining the
precise exponent under which the insecurity decreases.

\subsection{Main results}
  \label{subsec:main-results}
At first, we derive a general upper bound for the rate of exponential
decreasing of the insecurity in privacy amplification, under both the
purified distance measure and the relative entropy measure.

\begin{theorem}
  \label{theorem:exp-privacy-up}
Let $\rho_{XE}$ be a CQ state, $\mc{F}_{n}(R)$ be the set of functions
from $\mc{X}^n$ to $\mc{Z}_n=\{1,\ldots,2^{nR} \}$. Let $\rho^{f_n}_{Z_nE^n}$
denote the state resulting from applying a hash function $f_n\in\mc{F}
_{n}(R)$ to $\rho_{XE}^{\ox n}$. For any fixed randomness extraction rate
$R \geq 0$, we have
 \begin{align}
   \limsup_{n\rightarrow\infty} \frac{-1}{n}\log \min_{f_n \in \mc{F}_n(R)}
   P(\rho^{f_n}_{Z_nE^n},\frac{\1_{Z_n}}{\lvert\mc{Z}_n\rvert} \ox {\rho^{\ox n}_E})
 & \leq\frac{1}{2}\sup_{s \geq 0} \big\{s\big(H_{1+s}(X|E)_{\rho}-R\big)\big\},
                                                  \label{eq:exp-privacy-up-P} \\
   \limsup_{n\rightarrow\infty} \frac{-1}{n}\log \min_{f_n \in \mc{F}_n(R)}
   D(\rho^{f_n}_{Z_nE^n}\|\frac{\1_{Z_n}}{\lvert\mc{Z}_n\rvert}\ox {\rho^{\ox n}_E})
 & \leq\sup_{s \geq 0} \big\{s\big(H_{1+s}(X|E)_{\rho}-R\big)\big\}.
                                                  \label{eq:exp-privacy-up-D}
 \end{align}
\end{theorem}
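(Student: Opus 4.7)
The plan is to reduce the theorem to Theorem~\ref{theorem:exp-mre} by establishing the inequality
\begin{equation*}
P\big(\rho^{f_n}_{Z_nE^n},\tfrac{\1_{Z_n}}{|\mc{Z}_n|}\ox\rho^{\ox n}_E\big)\ \geq\ \epsilon\big(\rho_{XE}^{\ox n}\,\big\|\,(\1_X\ox\rho_E)^{\ox n},\ -nR\big)
\end{equation*}
for every $f_n\in\mc{F}_n(R)$, where $\epsilon(\cdot\|\cdot,\cdot)$ is the smoothing quantity from~(\ref{eq:def-err}). Given this, Theorem~\ref{theorem:exp-mre} applied to $(\rho_{XE},\1_X\ox\rho_E)$ at rate $r=-R$, together with the identity $-D_{1+s}(\rho_{XE}\|\1_X\ox\rho_E)=H_{1+s}(X|E)_\rho$, immediately yields~(\ref{eq:exp-privacy-up-P}).

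To prove the key inequality, fix $f_n\in\mc{F}_n(R)$, set $\epsilon_n:=P\big(\rho^{f_n}_{Z_nE^n},\tfrac{\1_{Z_n}}{|\mc{Z}_n|}\ox\rho^{\ox n}_E\big)$, and introduce the extension $\rho^{f_n}_{X^nZ_nE^n}:=\sum_{x^n}p(x^n)|x^n\rangle\langle x^n|\ox|f_n(x^n)\rangle\langle f_n(x^n)|\ox\rho^{x^n}_E$ of $\rho^{f_n}_{Z_nE^n}$ obtained by copying $X^n$ alongside $Z_n=f_n(X^n)$. By Uhlmann's theorem (Proposition~\ref{proposition:pd-properties}(iv)) there is an extension $\tilde\omega$ of $\tfrac{\1_{Z_n}}{|\mc{Z}_n|}\ox\rho^{\ox n}_E$ on $X^nZ_nE^n$ with $P(\tilde\omega,\rho^{f_n}_{X^nZ_nE^n})=\epsilon_n$. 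Dephasing $\tilde\omega$ in the computational bases of $X^n$ and $Z_n$ fixes $\rho^{f_n}_{X^nZ_nE^n}$ and leaves the $Z_nE^n$ marginal of $\tilde\omega$ intact, so by DPI we may assume $\tilde\omega=\sum_{x^n,z}|x^n\rangle\langle x^n|\ox|z\rangle\langle z|\ox\xi^{x^n,z}_E$ with $\sum_{x^n}\xi^{x^n,z}_E=2^{-nR}\rho^{\ox n}_E$ for every $z$. Set
\begin{equation*}
\tilde\rho_{X^nE^n}\ :=\ \sum_{x^n}|x^n\rangle\langle x^n|\ox\xi^{x^n,f_n(x^n)}_E.
\end{equation*}
The operator inequality $\xi^{x^n,f_n(x^n)}_E\leq\sum_{x'^n}\xi^{x'^n,f_n(x^n)}_E=2^{-nR}\rho^{\ox n}_E$ yields $\tilde\rho\leq 2^{-nR}\1_{X^n}\ox\rho^{\ox n}_E$. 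For the purified distance, since $\rho^{f_n}_{X^nZ_nE^n}$ is supported on the diagonal $z=f_n(x^n)$, the fidelity $F(\tilde\omega,\rho^{f_n}_{X^nZ_nE^n})$ collapses (by the block-diagonal CQ form) to $\sum_{x^n}\|\sqrt{\xi^{x^n,f_n(x^n)}_E}\sqrt{p(x^n)\rho^{x^n}_E}\|_1$, which is precisely $F(\tilde\rho,\rho_{XE}^{\ox n})$; hence $P(\tilde\rho,\rho_{XE}^{\ox n})=P(\tilde\omega,\rho^{f_n}_{X^nZ_nE^n})\leq\epsilon_n$, as required.

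The bound~(\ref{eq:exp-privacy-up-D}) on the relative-entropy exponent then follows by invoking the quantum Pinsker-type inequality $D(\rho\|\sigma)\geq-\log\big(1-P(\rho,\sigma)^2\big)\geq P(\rho,\sigma)^2/\ln 2$ (provable via data processing of $D$ under a POVM attaining the fidelity). Squaring the purified distance doubles the exponent, so the prefactor $\tfrac{1}{2}$ in~(\ref{eq:exp-privacy-up-P}) becomes $1$ in~(\ref{eq:exp-privacy-up-D}). The principal obstacle is the construction above: the natural ``project onto the $z=f_n(x^n)$ branch, then discard $Z_n$'' map is only sub-trace-preserving, so rather than appealing to a sub-TP data-processing inequality I would identify the fidelities of $(\tilde\omega,\rho^{f_n}_{X^nZ_nE^n})$ and $(\tilde\rho,\rho_{XE}^{\ox n})$ directly using the CQ block form, which makes the distance-preservation transparent.
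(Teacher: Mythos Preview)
Your proposal is correct and follows essentially the same route as the paper. The paper packages the key inequality
\[
P\big(\rho^{f_n}_{Z_nE^n},\tfrac{\1_{Z_n}}{|\mc{Z}_n|}\ox\rho^{\ox n}_E\big)\ \geq\ \epsilon\big(\rho_{XE}^{\ox n}\,\big\|\,(\1_X\ox\rho_E)^{\ox n},\ -nR\big)
\]
as a consequence of Proposition~\ref{prop:monohash} (monotonicity of $H^\epsilon_{\rm min}$ under hashing) together with Proposition~\ref{H2}, and then invokes Theorem~\ref{theorem:exp-mre}; the proof of Proposition~\ref{prop:monohash} in the Appendix is precisely your Uhlmann-plus-dephasing-plus-diagonal-extraction argument, and the passage from~(\ref{eq:exp-privacy-up-P}) to~(\ref{eq:exp-privacy-up-D}) via $P(\rho,\sigma)^2\leq(\ln 2)\,D(\rho\|\sigma)$ is identical. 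The only cosmetic difference is that you inline the construction and identify the two fidelities directly on the CQ blocks, whereas the paper first lifts to $X^nZ_nE^n$ via an isometry and states the operator inequality there before specializing; both arrive at the same $\tilde\rho\leq 2^{-nR}\1_{X^n}\ox\rho_E^{\ox n}$ and the same purified-distance bound.
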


\begin{remark}
The work~\cite{HayashiTan2016equivocations} have proved
Eq.~(\ref{eq:exp-privacy-up-D}) in the classical case, where $\rho_{XE}$ is
fully classical. Theorem~\ref{theorem:exp-privacy-up} has extended this result
to the quantum setting.
\end{remark}

By combining Theorem \ref{theorem:exp-privacy-up} and a lower bound derived in~\cite{Hayashi2015precise}, we can get the exact exponent of the asymptotic decreasing of the
insecurity when the rate of randomness extraction is above a critical value.
\begin{theorem}
 \label{theorem:exp-privacy}
Let $\rho_{XE}$ be a CQ state, $\mc{F}_{n}(R)$ be the set of functions
from $\mc{X}^n$ to $\mc{Z}_n=\{1,\ldots,2^{nR} \}$, $F_n$ be any two-universal
random hash function drawn from (a subset of) $\mc{F}_{n}(R)$,
and $R_{\rm{critical}}:=\frac{\mathrm{d}}{\mathrm{d} s} sH_{1+s}(X|E)_{\rho}\big|_{s=1}$. For the
rate $R$ of randomness extraction satisfying $R\geq R_{\rm{critical}}$,
we have
\begin{equation}\begin{split}
 \label{eq:exp-privacy-P}
 \lim_{n\rightarrow\infty} \frac{-1}{n}\log \min_{f_n \in \mc{F}_n(R)}
 P(\rho^{f_n}_{Z_nE^n},\frac{\1_{Z_n}}{\lvert\mc{Z}_n\rvert}\ox{\rho_E^{\ox n}})
 &=\lim_{n\rightarrow\infty}\frac{-1}{n}\log\mathbb{E}_{F_n}P(\rho^{F_n}_{Z_nE^n},
   \frac{\1_{Z_n}}{\lvert \mc{Z}_n \rvert} \ox {\rho_E^{\ox n}})  \\
 &=\frac{1}{2}\max_{0\leq s\leq 1} \big\{s\big(H_{1+s}(X|E)_{\rho}-R\big)\big\},
\end{split}\end{equation}
\begin{equation}\begin{split}
 \label{eq:exp-privacy-D}
 \lim_{n\rightarrow\infty} \frac{-1}{n} \log \min_{f_n \in \mc{F}_n(R)}
 D(\rho^{f_n}_{Z_nE^n}\|\frac{\1_{Z_n}}{\lvert\mc{Z}_n\rvert}\ox{\rho_E^{\ox n}})
 &=\lim_{n\rightarrow\infty}\frac{-1}{n}\log\mathbb{E}_{F_n}D(\rho^{F_n}_{Z_nE^n}
   \| \frac{\1_{Z_n}}{\lvert \mc{Z}_n \rvert} \ox {\rho_E^{\ox n}}) \\
 &=\max_{0\leq s\leq 1} \big\{s\big(H_{1+s}(X|E)_{\rho}-R\big)\big\}.
\end{split}\end{equation}
\end{theorem}

The proof of Theorem~\ref{theorem:exp-privacy-up} is based on the result
obtained in Section~\ref{sec:smooth-entropy} on the exponent in smoothing
the max-relative entropy. To relate privacy amplification to the smooth
max-relative entropy in a proper way,  we employ a version of the smooth
conditional min-entropy~\cite{TSSR2011leftover, ABJT2020partially}.
\begin{definition}\label{def:scme}
For a state $\rho_{AB} \in \mc{S}(\mc{H}_{AB})$, the smooth
conditional min-entropy is defined as
\begin{equation}
    H^{\epsilon}_{\rm {min}}(A|B)_\rho
 := -D_{\rm{max}}^\epsilon(\rho_{AB} \| \1_A \ox \rho_B).
\end{equation}
\end{definition}
When $\epsilon=0$, we recover the (non-smoothed) conditional min-entropy
$H_{\rm {min}}(A|B)_\rho:= -D_{\rm{max}}(\rho_{AB} \| \1_A \ox \rho_B)$.

\begin{proposition}
\label{prop:monohash}
Let $\sigma_{XAB}=\sum_{x} p_x \proj{x}_X \ox \sigma^x_{AB}$ be a state in
$\mc{S}(\mc{H}_{XAB})$. Let $f:\mc{X}\rightarrow\mc{Z}$ be a function
and let $Z=f(X)$. Then,
\[
\begin{split}
             H^{\epsilon}_{\rm{min}}(XA|B)_\sigma
            &\geq H^{\epsilon}_{\rm{min}}(ZA|B)_\sigma,\ \text{where} \\
\sigma_{ZAB}&=\sum_{z} \proj{z}_Z \ox\big(\sum_{x \in f^{-1}(z)}p_x
\sigma^{x}_{AB}\big).
\end{split}
\]
\end{proposition}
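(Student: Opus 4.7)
The proposition is equivalent to $D_{\rm{max}}^\epsilon(\sigma_{XAB} \| \1_{XA} \ox \sigma_B) \leq D_{\rm{max}}^\epsilon(\sigma_{ZAB} \| \1_{ZA} \ox \sigma_B)$. In the non-smoothed case this is straightforward from the block structure: writing $p_z \tau^z_{AB} := \sum_{x \in f^{-1}(z)} p_x \sigma^x_{AB}$ for the $z$-block of $\sigma_{ZAB}$, any $\lambda$ with $p_z \tau^z_{AB} \leq 2^\lambda \1_A \ox \sigma_B$ automatically satisfies $p_x \sigma^x_{AB} \leq 2^\lambda \1_A \ox \sigma_B$ for every $x$, hence $\sigma_{XAB} \leq 2^\lambda \1_{XA} \ox \sigma_B$. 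For the smoothed version, the naive lift of an optimizer $\widetilde\sigma_{ZAB}$ via $\widetilde\sigma^x_{AB} := (p_x / p_{f(x)}) \widetilde\sigma^{f(x)}_{AB}$ preserves the operator bound but, by data processing under the channel $\Phi_{X \to Z}$ that evaluates $f$, only certifies $P(\widetilde\sigma_{XAB}, \sigma_{XAB}) \geq P(\widetilde\sigma_{ZAB}, \sigma_{ZAB})$, which is the wrong direction. My plan is to recover the needed flexibility via Uhlmann's theorem.

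Let $\lambda := D_{\rm{max}}^\epsilon(\sigma_{ZAB} \| \1_{ZA} \ox \sigma_B)$ and fix an optimizer $\widetilde\sigma_{ZAB}$ with $P(\widetilde\sigma_{ZAB}, \sigma_{ZAB}) \leq \epsilon$ and $\widetilde\sigma_{ZAB} \leq 2^\lambda \1_{ZA} \ox \sigma_B$. First I would dephase in $Z$, a CPTP map that fixes both $\sigma_{ZAB}$ and $\1_{ZA} \ox \sigma_B$, so that I may assume $\widetilde\sigma_{ZAB} = \sum_z \proj{z}_Z \ox \widetilde\sigma^z_{AB}$ with $\widetilde\sigma^z_{AB} \leq 2^\lambda \1_A \ox \sigma_B$ for all $z$. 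Next I would introduce the coherent extension $\sigma_{XZAB} := \sum_x p_x \proj{x}_X \ox \proj{f(x)}_Z \ox \sigma^x_{AB}$, whose reduced state on $ZAB$ is $\sigma_{ZAB}$ and which is supported on the graph of $f$, and apply Proposition~\ref{proposition:pd-properties}(iv) to obtain a subnormalized extension $\widetilde\sigma_{XZAB}$ of $\widetilde\sigma_{ZAB}$ with $P(\widetilde\sigma_{XZAB}, \sigma_{XZAB}) = P(\widetilde\sigma_{ZAB}, \sigma_{ZAB}) \leq \epsilon$.

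Then I would dephase $\widetilde\sigma_{XZAB}$ in both $X$ and $Z$. Both maps fix $\sigma_{XZAB}$, so the resulting state $\widetilde\sigma^{(1)}_{XZAB} = \sum_{x, z} \proj{x}_X \ox \proj{z}_Z \ox \widetilde\sigma^{x, z}_{AB}$ remains within purified distance $\epsilon$ of $\sigma_{XZAB}$. Since tracing out $X$ commutes with dephasing in $X$, one has $\sum_x \widetilde\sigma^{x, z}_{AB} = \widetilde\sigma^z_{AB}$, and positivity gives the crucial block bound $\widetilde\sigma^{x, z}_{AB} \leq \widetilde\sigma^z_{AB} \leq 2^\lambda \1_A \ox \sigma_B$ for every pair $(x, z)$. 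Restricting to the graph of $f$, I would define $\widetilde\sigma^{(2)}_{XZAB} := \sum_x \proj{x}_X \ox \proj{f(x)}_Z \ox \widetilde\sigma^{x, f(x)}_{AB}$; because $\sigma_{XZAB}$ is supported on the graph, an elementary computation yields $\sqrt{\widetilde\sigma^{(2)}} \sqrt{\sigma_{XZAB}} = \sqrt{\widetilde\sigma^{(1)}} \sqrt{\sigma_{XZAB}}$, and since $\sigma_{XZAB}$ is normalized the subnormalization correction in the generalized fidelity vanishes, giving $P(\widetilde\sigma^{(2)}, \sigma_{XZAB}) = P(\widetilde\sigma^{(1)}, \sigma_{XZAB}) \leq \epsilon$.

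Finally I would set $\widetilde\sigma_{XAB} := \tr_Z \widetilde\sigma^{(2)}_{XZAB} = \sum_x \proj{x}_X \ox \widetilde\sigma^{x, f(x)}_{AB}$. Data processing of the purified distance under $\tr_Z$ gives $P(\widetilde\sigma_{XAB}, \sigma_{XAB}) \leq \epsilon$, while the block bound $\widetilde\sigma^{x, f(x)}_{AB} \leq 2^\lambda \1_A \ox \sigma_B$ yields $\widetilde\sigma_{XAB} \leq 2^\lambda \1_{XA} \ox \sigma_B$. This certifies $D_{\rm{max}}^\epsilon(\sigma_{XAB} \| \1_{XA} \ox \sigma_B) \leq \lambda$, proving the claim. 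I expect the main obstacle to be the graph-restriction step: without it, tracing out $Z$ from $\widetilde\sigma^{(1)}$ directly would inflate the operator bound by a factor of $|\mc{Z}|$, since $\tr_Z(\1_{XZA} \ox \sigma_B) = |\mc{Z}| \1_{XA} \ox \sigma_B$, which would be fatal.
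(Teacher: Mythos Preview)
Your proof is correct and follows essentially the same strategy as the paper's: extend the optimizer $\tilde\sigma_{ZAB}$ to $XZAB$ via Uhlmann, dephase in the classical register(s), and exploit the block inequality $\tilde\sigma^{x}_{ZAB}\le\sum_x\tilde\sigma^{x}_{ZAB}=\tilde\sigma_{ZAB}$ to recover the operator bound. The only organisational difference is that the paper isolates the passage from $XZAB$ back to $XAB$ as a separate isometry-invariance lemma ($H^\epsilon_{\min}(XA|B)_\sigma=H^\epsilon_{\min}(XZA|B)_{U\sigma U^*}$ for $U:\ket{x}\mapsto\ket{x}\ket{f(x)}$), whereas you carry out this step in-line via the graph-restriction-then-$\tr_Z$ maneuver; the two are equivalent, since applying $U^*(\cdot)U$ to your $\tilde\sigma^{(1)}_{XZAB}$ produces exactly your $\tilde\sigma_{XAB}$.
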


There is another definition of the smooth conditional min-entropy (see, e.g.,
\cite{TCR2009fully, TCR2010duality,TomamichelHayashi2013hierarchy}):
\begin{equation}
    \bar{H}^{\epsilon}_{\rm {min}}(A|B)_\rho
 := - \min_{\sigma_B} D_{\rm{max}}^\epsilon(\rho_{AB} \| \1_A \ox \sigma_B).
\label{min-hierarchy}
\end{equation}
Since the reference \cite[Proposition 3]{TomamichelHayashi2013hierarchy}
showed the same statement as Proposition \ref{prop:monohash}
under the different definition \eqref{min-hierarchy},
the proof of Proposition~\ref{prop:monohash} is analogous to the proof of
Proposition 3 in~\cite{TomamichelHayashi2013hierarchy} and is given in the
Appendix.
To see the relation between the smooth conditional min-entropy and the
insecurity, we show the following proposition.

\begin{proposition}\label{H2}
Let $\rho_{XE}$ be a CQ state. When $\log \lvert\mc{Z} \rvert \ge
H^{\epsilon}_{\rm {min}}(X|E)_\rho$, any function $f:\mc{X}\rightarrow\mc{Z}$
satisfies
\begin{align}
   P(\rho^{f}_{Z E },\frac{\1_{Z}}{\lvert\mc{Z}\rvert} \ox {\rho_E})
 \geq \epsilon, \label{XPO}
\end{align}
where $\rho_{ZE}^f$ is a state of the form~(\ref{eq:final-state})
resulting from applying $f$ to $\rho_{XE}$.
\end{proposition}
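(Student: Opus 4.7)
The plan is a short contrapositive argument that chains Proposition~\ref{prop:monohash} with the elementary fact that the ideal output state itself can serve as a witness inside the smoothing ball, together with a small subnormalization perturbation to handle the strict-versus-weak inequality boundary.

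First, I apply Proposition~\ref{prop:monohash} with the auxiliary register $A$ trivial and $B=E$. Since $Z=f(X)$ and the hashing leaves the $E$-marginal unchanged ($\rho^{f}_{E}=\rho_{E}$), this yields
\[
H^{\epsilon}_{\min}(X|E)_{\rho}\ \geq\ H^{\epsilon}_{\min}(Z|E)_{\rho^{f}}
\ =\ -\,D^{\epsilon}_{\max}(\rho^{f}_{ZE}\,\|\,\1_{Z}\ox\rho_{E}),
\]
and combined with the hypothesis $\log|\mc{Z}|\geq H^{\epsilon}_{\min}(X|E)_{\rho}$ this rearranges to $D^{\epsilon}_{\max}(\rho^{f}_{ZE}\|\1_{Z}\ox\rho_{E})\geq -\log|\mc{Z}|$.

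Next, suppose for contradiction that $P(\rho^{f}_{ZE},\tfrac{\1_{Z}}{|\mc{Z}|}\ox\rho_{E})<\epsilon$. For $\lambda\in(0,1)$, consider the subnormalized state $\tilde{\rho}_{\lambda}:=\lambda\,\tfrac{\1_{Z}}{|\mc{Z}|}\ox\rho_{E}$. A direct computation from the definition of the fidelity gives $F(\tilde{\rho}_{\lambda},\rho^{f}_{ZE})=\sqrt{\lambda}\,F(\tfrac{\1_{Z}}{|\mc{Z}|}\ox\rho_{E},\rho^{f}_{ZE})$ (the subnormalization correction vanishes because $\rho^{f}_{ZE}$ is normalized), so by continuity at $\lambda=1$ the state $\tilde{\rho}_{\lambda}$ still lies in $\mc{B}^{\epsilon}(\rho^{f}_{ZE})$ for $\lambda$ sufficiently close to $1$ from below. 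At the same time, $\tilde{\rho}_{\lambda}\leq 2^{\log\lambda-\log|\mc{Z}|}(\1_{Z}\ox\rho_{E})$ gives $D_{\max}(\tilde{\rho}_{\lambda}\|\1_{Z}\ox\rho_{E})\leq \log\lambda-\log|\mc{Z}|<-\log|\mc{Z}|$ for every such $\lambda<1$. Taking $\tilde{\rho}_\lambda$ as a feasible witness inside the smoothing ball then yields $D^{\epsilon}_{\max}(\rho^{f}_{ZE}\|\1_{Z}\ox\rho_{E})<-\log|\mc{Z}|$, which contradicts the bound from the first step and forces $P\geq\epsilon$, as claimed.

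The only substantive ingredient is Proposition~\ref{prop:monohash} itself, whose proof is deferred to the appendix and is modeled on \cite[Proposition~3]{TomamichelHayashi2013hierarchy}. I do not foresee any real obstacle in the steps above, the one subtle point being the perturbation $\lambda<1$, which is needed to promote the strict fidelity inequality into a strict $D_{\max}$ inequality; without it one would only recover $D^{\epsilon}_{\max}(\rho^{f}_{ZE}\|\1_{Z}\ox\rho_{E})\leq -\log|\mc{Z}|$, which fails to contradict the hypothesis in the equality case.
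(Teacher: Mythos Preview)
Your proposal is correct and follows essentially the same route as the paper: both arguments first invoke Proposition~\ref{prop:monohash} to obtain $H^{\epsilon}_{\min}(Z|E)_{\rho^{f}}\leq\log|\mc{Z}|$, and then use the ideal state $\frac{\1_Z}{|\mc{Z}|}\ox\rho_E$ as a witness inside the smoothing ball to link $D^{\epsilon}_{\max}$ back to the purified distance. The only cosmetic difference is in the second step: the paper introduces an intermediate parameter $\epsilon'$ defined by $H^{\epsilon'}_{\min}(Z|E)_{\rho^{f}}=\log|\mc{Z}|$ and then argues $P\geq\epsilon'\geq\epsilon$ via monotonicity of $\epsilon\mapsto H^{\epsilon}_{\min}$, whereas you proceed by direct contrapositive with the explicit subnormalization $\tilde\rho_\lambda=\lambda\,\frac{\1_Z}{|\mc{Z}|}\ox\rho_E$. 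Your $\lambda$-perturbation makes the strict-versus-weak boundary transparent and avoids having to discuss existence or uniqueness of the threshold $\epsilon'$; the paper's phrasing is terser but relies implicitly on exactly the same mechanism.
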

In fact, the reference \cite[Theorem 8]{TomamichelHayashi2013hierarchy}
showed the same statement as Proposition \ref{H2}
under the different definition \eqref{min-hierarchy}.
Hence, it can be shown in a similar way.

\medskip
\begin{proofof}{Proposition \ref{H2}}
For any function $f:\mc{X}\rightarrow\mc{Z}$, Proposition~\ref{prop:monohash}
applies, giving
\beq\label{eq:privup-1}
  H^\epsilon_\text{min}(X|E)_\rho \geq H^\epsilon_\text{min}(Z|E)_{\rho^f}.
\eeq
We choose $\epsilon'$ such that $H^{\epsilon'}_\text{min}(Z|E)_{\rho^f}
=\log \lvert\mc{Z} \rvert$. By the definition of the smooth conditional
min-entropy, we find that
\begin{align}
   P(\rho^{f}_{Z E },\frac{\1_{Z}}{\lvert\mc{Z}\rvert} \ox {\rho_E})
\ge \epsilon'.
\end{align}
Also, Eq.~\eqref{eq:privup-1} implies
$\epsilon' \ge \epsilon$. Therefore, we obtain~\eqref{XPO}.
\end{proofof}

Now, we are ready to prove Theorem~\ref{theorem:exp-privacy-up} and Theorem~\ref{theorem:exp-privacy}.

\medskip
\begin{proofof}{Theorem~\ref{theorem:exp-privacy-up}}
\if0
We first deal with the general one-shot setting. For any function $f:\mc{X}
\rightarrow\mc{Z}$, let $\rho_{ZE}^f$ of the form~(\ref{eq:final-state})
be the state resulting from applying $f$ to $\rho_{XE}$ of the
form~(\ref{eq:source-state}). Then Proposition~\ref{prop:monohash} applies,
giving
\beq\label{eq:privup-1}
  H^\epsilon_\text{min}(X|E)_\rho \geq H^\epsilon_\text{min}(Z|E)_{\rho^f}.
\eeq
We transform Eq.~(\ref{eq:privup-1}) into an inequality in terms of the
smoothing quantity defined in Eq.~(\ref{eq:def-err}). That is, by definitions,
Eq.~(\ref{eq:privup-1}) is equivalent to
\beq\label{eq:privup-2}
       \epsilon(\rho_{ZE}^f\|\1_Z\ox\rho_E,\lambda)
  \geq \epsilon(\rho_{XE}\|\1_X\ox\rho_E,\lambda)
\eeq
for any $\lambda\in\mathbb{R}$. In addition, by definition again, we have
\beq\label{eq:privup-3}
  P(\rho_{ZE}^f,\frac{\1_Z}{|\mc{Z}|}\ox\rho_E)
  \geq \epsilon(\rho_{ZE}^f\|\1_Z\ox\rho_E,-\log |\mc{Z}|).
\eeq
Combining Eq.~(\ref{eq:privup-2}) and Eq.~(\ref{eq:privup-3}) together and
setting $\lambda=-\log |\mc{Z}|$ leads to
\beq\label{eq:privup-4}
  P(\rho_{ZE}^f,\frac{\1_Z}{|\mc{Z}|}\ox\rho_E)
  \geq \epsilon(\rho_{XE}\|\1_X\ox\rho_E,-\log |\mc{Z}|).
\eeq
Now, we apply Eq.~(\ref{eq:privup-4}) to the asymptotic setting, where
$\rho_{XE}$ is replaced by $\rho_{XE}^{\ox n}$ and $f$ is replaced by
$f_n\in \mc{F}_n(R)$. Noticing that $f$ in Eq.~(\ref{eq:privup-4}) is
arbitrary, we obtain
\beq\label{eq:privup-5}
  \min_{f_n \in \mc{F}_n(R)}
  P(\rho^{f_n}_{Z_nE^n},\frac{\1_{Z_n}}{\lvert\mc{Z}_n\rvert}\ox{\rho^{\ox n}_E})
  \geq \epsilon(\rho_{XE}^{\ox n}\|(\1_X\ox\rho_E)^{\ox n},-nR).
\eeq
Theorem~\ref{theorem:exp-mre} provides the exact rate of exponential
decrease for the right hand side of Eq.~(\ref{eq:privup-5}). With this,
we immediately confirm Eq.~(\ref{eq:exp-privacy-up-P}).
\fi
Eq.~(\ref{eq:exp-privacy-up-P}) can be shown by the combination of
Theorem~\ref{theorem:exp-mre} and Proposition \ref{H2} as follows.
We choose $r_e:=\frac{1}{2}\sup_{s \geq 0} \big\{s\big(H_{1+s}(X|E)_{\rho}-R\big)\big\}$.
Eq.~\eqref{H1}, i.e.,
Theorem~\ref{theorem:exp-mre} guarantees the existence of a sequence
$\varepsilon_n\to 0$ such that
$nR = H^{2^{-n (r_e+\varepsilon_n)}}_{\rm {min}}(X^n|E^n)_{\rho_{XE}^{\ox n}}$.
Hence, Proposition \ref{H2} guarantees
 \begin{align*}
   &\limsup_{n\rightarrow\infty} \frac{-1}{n}\log \min_{f_n \in \mc{F}_n(R)}
   P(\rho^{f_n}_{Z_nE^n},\frac{\1_{Z_n}}{\lvert\mc{Z}_n\rvert} \ox {\rho^{\ox n}_E})\\
 \leq & \lim_{n\rar\infty} \frac{-1}{n}\log 2^{-n (r_e+\varepsilon_n)} \\
   =  &r_e,
 \end{align*}
which coincides with Eq.~(\ref{eq:exp-privacy-up-P}).

To prove Eq.~(\ref{eq:exp-privacy-up-D}), we make use of a relation between
the relative entropy and the purified distance. By definition, we easily
see that
\[
  D_{\frac{1}{2}}(\rho \| \sigma)=-2\log F(\rho, \sigma).
\]
Meanwhile, since $D_\alpha$ is nondecreasing with $\alpha$,
\[
  D_{\frac{1}{2}}(\rho \| \sigma) \leq D(\rho \| \sigma).
\]
Thus,
\beq\label{eq:privup-6}
P(\rho, \sigma)=\sqrt{1-F^2(\rho,\sigma)}
               \leq\sqrt{1-2^{-D(\rho \| \sigma)}}
               \leq\sqrt{(\ln 2)D(\rho \| \sigma)}.
\eeq
Eq.~(\ref{eq:exp-privacy-up-D}) follows directly from Eq.~(\ref{eq:privup-6})
and Eq.~(\ref{eq:exp-privacy-up-P}), and we complete the proof.
\end{proofof}

\medskip
\begin{proofof}{Theorem~\ref{theorem:exp-privacy}}
The preceding work \cite[Theorem 1]{Hayashi2015precise} has proved that under the conditions of
Theorem~\ref{theorem:exp-privacy-up} and for any two-universal hash function
$F_n$ drawn from (a subset of) $\mc{F}_{n}(R)$,
\begin{equation}\begin{split}
 \label{eq:exp-privacy-1}
 \liminf_{n\rightarrow\infty} \frac{-1}{n} \log \min_{f_n \in \mc{F}_n(R)}
  D(\rho^{f_n}_{Z_nE^n}\|\frac{\1_{Z_n}}{\lvert\mc{Z}_n\rvert}\ox{\rho_E^{\ox n}})
 &\geq\liminf_{n\rightarrow\infty}\frac{-1}{n}\log\mathbb{E}_{F_n}D(\rho^{F_n}_
      {Z_nE^n}\| \frac{\1_{Z_n}}{\lvert \mc{Z}_n \rvert} \ox {\rho_E^{\ox n}}) \\
 &\geq\max_{0\leq s \leq 1} \big\{s\big(H_{1+s}(X|E)_{\rho}-R\big)\big\}.
\end{split}\end{equation}
Making use of Eq.~(\ref{eq:privup-6}) and the concavity of the square root function,
we are able to get a similar bound for the purified distance measure from
Eq.~(\ref{eq:exp-privacy-1}), under the same conditions. Namely,
\begin{equation}\begin{split}
 \label{eq:exp-privacy-2}
 \liminf_{n\rightarrow\infty} \frac{-1}{n} \log \min_{f_n \in \mc{F}_n(R)}
  P(\rho^{f_n}_{Z_nE^n},\frac{\1_{Z_n}}{\lvert\mc{Z}_n\rvert}\ox{\rho_E^{\ox n}})
 &\geq\liminf_{n\rightarrow\infty}\frac{-1}{n}\log\mathbb{E}_{F_n}P(\rho^{F_n}_
      {Z_nE^n}, \frac{\1_{Z_n}}{\lvert \mc{Z}_n \rvert} \ox {\rho_E^{\ox n}}) \\
 &\geq\liminf_{n\rightarrow\infty}\frac{-1}{n}\log\sqrt{(\ln 2) \mathbb{E}_{F_n}D(\rho^{F_n}_
      {Z_nE^n}\| \frac{\1_{Z_n}}{\lvert \mc{Z}_n \rvert} \ox {\rho_E^{\ox n}})} \\
 &\geq\frac{1}{2}\max\limits_{0\leq s\leq 1}\big\{s\big(H_{1+s}(X|E)_{\rho}-R\big)\big\}.
\end{split}\end{equation}
If the lower bounds in Eq.~(\ref{eq:exp-privacy-1}) and Eq.~(\ref{eq:exp-privacy-2})
equal the upper bounds in Eq.~(\ref{eq:exp-privacy-up-D}) and
Eq.~(\ref{eq:exp-privacy-up-P}), respectively, we would obtain the exact rates
of exponential decay. In the following, we prove that this is indeed the case when $R \geq
R_{\rm{critical}}$.

Consider the optimization problem
\beq\label{eq:exp-privacy-sup}
E_u(R):=\sup_{s \geq 0} \big\{s\big(H_{1+s}(X|E)_{\rho}-R\big)\big\}.
\eeq
Since the function $s\mapsto sH_{1+s}(X|E)_{\rho}$ is concave (cf. Proposition~\ref{prop:srd-properties} (\romannumeral4)) and obviously continuously
differentiable on $(0,\infty)$, $s(H_{1+s}(X|E)_{\rho}-R)$ is also concave and
continuously differentiable as a function of $s$. So the supremum in
Eq.~(\ref{eq:exp-privacy-sup}) is achieved at the point with zero derivative
(if it exists), given by the solution of the equation
\beq\label{eq:exp-privacy-3}
R=\hat{R}(s)\equiv\frac{\mathrm{d}}{\mathrm{d} s} sH_{1+s}(X|E)_{\rho}.
\eeq
Note that the critical rate is
\beq\label{eq:exp-privacy-4}
  R_{\rm critical}=\hat{R}(1)
 \equiv\frac{\mathrm{d}}{\mathrm{d} s} sH_{1+s}(X|E)_{\rho}\big|_{s=1}.
\eeq
$\hat{R}(s)$ is nonincreasing, because $s\mapsto sH_{1+s}(X|E)_{\rho}$ is
concave. Also, we define 
\begin{align}
\hat{R}(0)&:=\ \,\lim_{s\rightarrow 0}\ \, \hat{R}(s) = H(X|E)_\rho,
  \label{eq:limit-0} \\
\hat{R}(\infty)&:=\lim_{s\rightarrow +\infty}\hat{R}(s)=H_\text{min}(X|E)_\rho,
  \label{eq:limit-infty}
\end{align}
where \eqref{eq:limit-0} and \eqref{eq:limit-infty} follow from
(51) and (52) of \cite[Lemma IV.2]{MosonyiOgawa2015quantum}, respectively.
There are four cases:
\begin{enumerate}[(i)]
  \item $R\geq H(X|E)_\rho$: the function $s\mapsto s(H_{1+s}(X|E)_{\rho}-R)$
        is monotonically dereasing. So the supremum in Eq.~(\ref{eq:exp-privacy-sup})
        is 0, achieved at $s=0$;
  \item $R_{\rm critical}\leq R< H(X|E)_\rho$: Eq.~(\ref{eq:exp-privacy-3})
        has a solution $s^*\in(0,1]$, where Eq.~(\ref{eq:exp-privacy-sup})
        achieves the supremum;
  \item $H_\text{min}(X|E)_\rho<R<R_{\rm critical}$: Eq.~(\ref{eq:exp-privacy-3})
        has a solution $s^*\in(1,+\infty)$, where Eq.~(\ref{eq:exp-privacy-sup})
        achieves the supremum;
  \item $R\leq H_\text{min}(X|E)_\rho$: the function $s\mapsto s(H_{1+s}(X|E)
        _{\rho}-R)$ is monotonically increasing. So the supremum in
        Eq.~(\ref{eq:exp-privacy-sup}) is $+\infty$, approached when $s\rar
        +\infty$.
\end{enumerate}
In cases (\romannumeral1) and (\romannumeral2),
we have that the supremum in Eq.~(\ref{eq:exp-privacy-sup}) is achieved at
$s\in[0,1]$. Therefore, the bound in Eq.~(\ref{eq:exp-privacy-up-D}) and
that in Eq.~(\ref{eq:exp-privacy-1}) are equal, and so are the bound in
Eq.~(\ref{eq:exp-privacy-up-P}) and that in Eq.~(\ref{eq:exp-privacy-2}).
Hence we complete the proof.
\end{proofof}

\medskip
Since $\hat{R}(s)$ is nonincreasing, $\hat{R}(s)$ has the inverse function $\psi$.
The results presented in Theorem~\ref{theorem:exp-privacy-up} and
Theorem~\ref{theorem:exp-privacy}
can be explained by using
$E_u(R)$ and $E_l(R):=\sup_{0\leq s\leq 1} {s(H_{1+s}(X|E)_{\rho}-R)}$
as follows.
\begin{align}\label{eq:H2}
E_u(R)
=&
\left\{
\begin{array}{ll}
0 & \hbox{ when } R \ge H(X|E)_{\rho}, \\
 \psi(R)H_{1+\psi(R)}(X|E)_{\rho}
- \psi(R) R & \hbox{ when }H(X|E)_{\rho} >  R \ge H_\text{min}(X|E)_\rho,\\
+\infty & \hbox{ when }H_\text{min}(X|E)_\rho >  R,
\end{array}
\right.
\\
E_l(R)=&
\left\{
\begin{array}{ll}
0 & \hbox{ when } R \ge H(X|E)_{\rho},  \\
 \psi(R)H_{1+\psi(R)}(X|E)_{\rho}- \psi(R) R & \hbox{ when } H(X|E)_{\rho} > R > R_{\rm critical}, \\
  H_{2}(X|E)_{\rho}- R & \hbox{ when } R_{\rm critical} \ge R .
\end{array}
\right.
\end{align}
Figure~\ref{fig:exponent} illustrates the above two functions.

We make a few remarks on a related security measure.
The quantity, $\min\limits_{\sigma_E}P(\rho_{ZE}^f,\frac{\1_Z}{|\mc{Z}|}\ox
\sigma_E)$, was employed in some works to measure the insecurity of the
extracted randomness $Z$ (see, e.g.,~\cite{TomamichelHayashi2013hierarchy}).
There is an additional minimization over the adversary's state, compared
to $P(\rho_{ZE}^f, \frac{\1_Z}{|\mc{Z}|}\ox\rho_E)$ that we use here.
Denoting the minimizer in that measure as $\sigma_E^*$, we have
\[\begin{split}
       P(\rho_{ZE}^f, \frac{\1_Z}{|\mc{Z}|}\ox\sigma_E^*)
&\leq  P(\rho_{ZE}^f, \frac{\1_Z}{|\mc{Z}|}\ox\rho_E)    \\
&\leq  P(\rho_{ZE}^f, \frac{\1_Z}{|\mc{Z}|}\ox\sigma_E^*)
      +P(\frac{\1_Z}{|\mc{Z}|}\ox\sigma_E^*, \frac{\1_Z}{|\mc{Z}|}\ox\rho_E)\\
&\leq 2P(\rho_{ZE}^f, \frac{\1_Z}{|\mc{Z}|}\ox\sigma_E^*).
\end{split}\]
So, there is no difference between these two measures regarding the
rate of asymptotic exponential decreasing. However, we prefer to employ
the measure $P(\rho_{ZE}^f, \frac{\1_Z}{|\mc{Z}|}\ox\rho_E)$ because
fixing $\rho_E$ in the measure fits better the requirement of composable
security (see discussions in \cite{PortmannRenner2014cryptographic} and
\cite{ABJT2020partially}).

\begin{figure}[ht]
  \includegraphics[width=9.5cm]{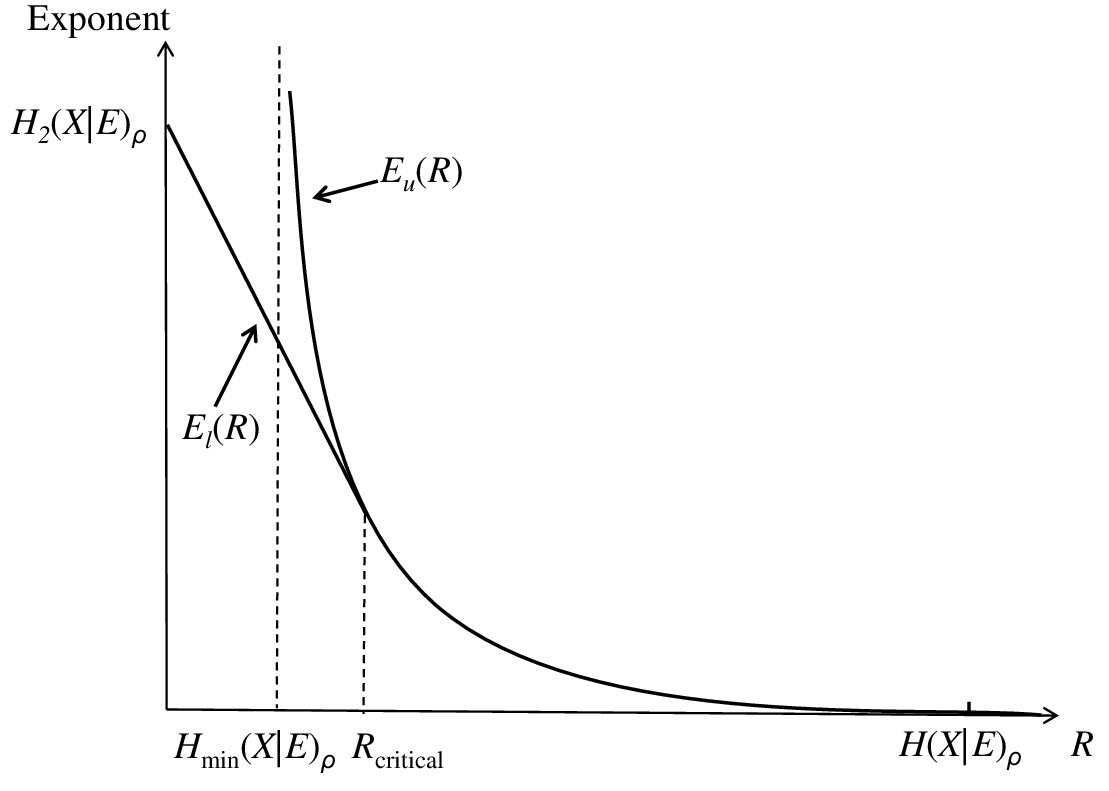} \centering
  \caption{Security exponent of privacy amplification.
  $E_u(R)$
  is the upper bound derived in the present paper.
  $E_l(R)$
  is the lower bound by the reference \cite{Hayashi2015precise}. These
  two bounds are equal when $R\geq R_\text{critical}$, giving
  the exact security exponent. When $R\geq H(X|E)_\rho$, the
  security exponent is $0$. Below the
  critical value $R_\text{critical}$, the upper bound $E_u(R)$
  is larger and diverges to infinity when $R<H_\text{min}(X|E)_\rho$,
  while the lower bound $E_l(R)$ becomes linear and reaches
  $H_2(X|E)_\rho$ at $R=0$.}
  \label{fig:exponent}
\end{figure}

\subsection{Discussion on the low-rate case}
  \label{subsec:low-rate}
In Theorem~\ref{theorem:exp-privacy}, we have obtained the exponents only when
$R \geq R_{\rm critical}$. One may guess that either the achievability bounds of Eq.~(\ref{eq:exp-privacy-1}) and Eq.~(\ref{eq:exp-privacy-2}) or
the converse bounds of Theorem~\ref{theorem:exp-privacy-up} are the exact exponents
when $R<R_{\rm critical}$. Here we
give two simple examples to show that this is not true, i.e., neither of them are
tight in general when $R<R_{\rm critical}$. This indicates that $R_{\rm critical}$
may be indeed a critical point in the exponential analysis of privacy amplification.

\smallskip
\noindent \textbf{Example 1} \enspace We consider the classical-quantum state $\rho_{XE}=(\frac{1}{3}\proj{0}+\frac{2}{3}\proj{1}) \ox \rho_E$.
We have
$s H_{1+s}(X|E)_{\rho}=\phi(s):=- \log ((\frac{1}{3})^{1+s}+(\frac{2}{3})^{1+s})$.
Then,  using the binary entropy
$h(x):= -x\log x -(1-x)\log (1-x)$, we have
$H(X|E)_{\rho}=h(\frac{1}{3})$ and
$\hat{R}(s)=\frac{\mathrm{d}}{\mathrm{d} s} sH_{1+s}(X|E)_{\rho}
=\frac{ (1+2^{1+s})\log 3- 2^{1+s}\log 2}{1+2^{1+s}}
$. In particular,
$R_{\rm critical}=\hat{R}(1)=
\frac{ 5 \log 3- 4 \log 2}{5}$ and
$\hat{R}(+\infty)=\log \frac{3}{2}=H_{\min}(X|E)_{\rho}$.
In addition,
$E_u(\log \frac{3}{2})$ is calculated as
\[
\begin{split}
E_u(\log \frac{3}{2})
=&\lim_{s \to +\infty}
\{- \log ((\frac{1}{3})^{1+s}+(\frac{2}{3})^{1+s})
-s \log \frac{3}{2}\} \\
=&\lim_{s \to +\infty}
\{\log \frac{3}{2}-\log (1+\frac{1}{2^{1+s}})
=\log \frac{3}{2}\}.
\end{split}
\]
Therefore, since $ H_{2}(X|E)_{\rho}=\log \frac{9}{5} $,
$E_u(R),E_l(R)$ are calculated as
\begin{align}\label{eq:H1}
E_u(R)
=&
\left\{
\begin{array}{ll}
0 & \hbox{ when } R \ge h(\frac{1}{3}), \\
\phi( \psi(R))- \psi(R) R & \hbox{ when }h(\frac{1}{3})>  R \ge \log \frac{3}{2},\\
+\infty  & \hbox{ when }\log \frac{3}{2}>  R,
\end{array}
\right.
\\
E_l(R)=&
\left\{
\begin{array}{ll}
0 & \hbox{ when } R \ge h(\frac{1}{3}),  \\
\phi( \psi(R))- \psi(R) R & \hbox{ when } h(\frac{1}{3})> R > \frac{ 5 \log 3- 4 \log 2}{5},\\
\log \frac{9}{5}- R & \hbox{ when } \frac{ 5 \log 3- 4 \log 2}{5} \ge R . 
\end{array}
\right.
\end{align}
Their behaviors are plotted as Fig. \ref{fig:exponent2}.
Notice that $\hat{R}(s)$ is strictly nonincreasing for $s$
because $s H_{1+s}(X|E)_{\rho}$ is a strictly concave function of $s$.
Hence,
we have $ \psi(R) > 1$ for $R < \frac{ 5 \log 3- 4 \log 2}{5}$.
Since $\frac{d (\phi( s)- s R)}{ds}|_{s=1}
=(\phi'( s)-R)|_{s=1} >0$,
$E_u(R)$ takes a larger value than $E_l(R)$ because
\begin{align}
E_u(R) = \phi( \psi(R))- \psi(R) R
> \phi( 1)- R= E_l(R).
\end{align}
Therefore, this case has the following three possible cases.
In the first case, $E_u(R)$ is the tight upper bound.
In the second case, $E_l(R)$ is the tight lower bound.
In the third case, neither $E_u(R)$ nor $E_l(R)$ is a tight bound.
To investigate this problem, we notice that
the eigenvalue
of $\rho_X^{\ox n}$ associated with the eigenvector $\ket{0,0,\cdots,0}$ is
$\frac{1}{3^n}$, and all the other eigenvalues are $\frac{1}{3^n}$ multiplied
by an even number. This simple fact will be crucial for our later estimation.

Let $f_n:\mc{X}^n\rar\mc{Z}_n$ be an arbitrary sequence of hash function (the
size $|\mc{Z}_n|$ is also arbitrary). Let $z_n^*=f_n(0,0,\cdots,0)$ and pick
$z'_n\in\mc{Z}_n$ such that $z'_n\neq z_n^*$. Then $\bra{z_n^*}\rho^{f_n}_{Z_n}
\ket{z_n^*}$ must be $\frac{1}{3^n}$ multiplied by a odd number and $\bra{z_n'}
\rho^{f_n}_{Z_n}\ket{z_n'}$ be $\frac{1}{3^n}$ multiplied by an even number. So
\begin{equation*}
\begin{split}
     &d(\rho^{f_n}_{Z_nE^n},\frac{\1_{Z_n}}{|\mathcal{Z}_n|} \ox \rho_E^{\ox n})\\
 =&    \frac{1}{2} \sum_{z_n \in \mathcal{Z}_n} \big|\bra{z_n}\rho^{f_n}_{Z_n}
       \ket{z_n}-\frac{1}{|\mathcal{Z}_n|}\big| \\
\geq & \frac{1}{2} \big( \big|\bra{z_n^*}\rho^{f_n}_{Z_n} \ket{z_n^*}-
       \frac{1}{|\mathcal{Z}_n|}\big| + \big|\bra{z'_n}\rho^{f_n}_{Z_n} \ket{z'_n}
                                              -\frac{1}{|\mathcal{Z}_n|}\big|\big)\\
\geq & \frac{1}{2} \big|\bra{z_n^*}\rho^{f_n}_{Z_n} \ket{z_n^*}-\bra{z'_n}
                                                 \rho^{f_n}_{Z_n} \ket{z'_n}\big| \\
\geq & \frac{1}{2 \times 3^n}.
\end{split}
\end{equation*}
With this in hand, the use of Pinsker's inequality and Fuchs-van de Graaf
inequality~\cite{FuchsVan1999cryptographic} leads respectively to
\begin{align}
  \label{eq:count-d}
\limsup_{n\rar\infty} -\frac{1}{n}\log\min_{f_n \in \mathcal{F}_n(R)}
   D(\rho^{f_n}_{\mathcal{Z}_nE_n} \| \frac{\1_{\mathcal{Z}_n}}{|\mathcal{Z}_n|}
      \ox \rho_E^{\ox n}) &\leq \log 9, \\
  \label{eq:count-p}
\limsup_{n\rar\infty} -\frac{1}{n}\log\min_{f_n \in \mathcal{F}_n(R)}
   P(\rho^{f_n}_{\mathcal{Z}_nE_n},\frac{\1_{\mathcal{Z}_n}}{|\mathcal{Z}_n|}
      \ox \rho_E^{\ox n}) &\leq \log 3,
\end{align}
for any randomness extraction rate $R>0$. Eq.~(\ref{eq:count-d}) and
Eq.~(\ref{eq:count-p}) also provide the same bounds for the exponents
in the average case where the insecurity is averaged over two-universal
hash functions.
On the other hand, for $R<H_{\text{min}}(X|E)_\rho=\log\frac{3}{2}$,
\eqref{eq:H1} shows that $E_u(R)=+\infty > \log 9=3.16993$.
Hence,
the upper bound $E_u(R)$
in Theorem~\ref{theorem:exp-privacy-up} is not the tight upper bound.
That is, for $R<H_{\text{min}}(X|E)_\rho$,
we have the second case or the third case.

\begin{figure}[ht]
  \includegraphics[width=9.5cm]{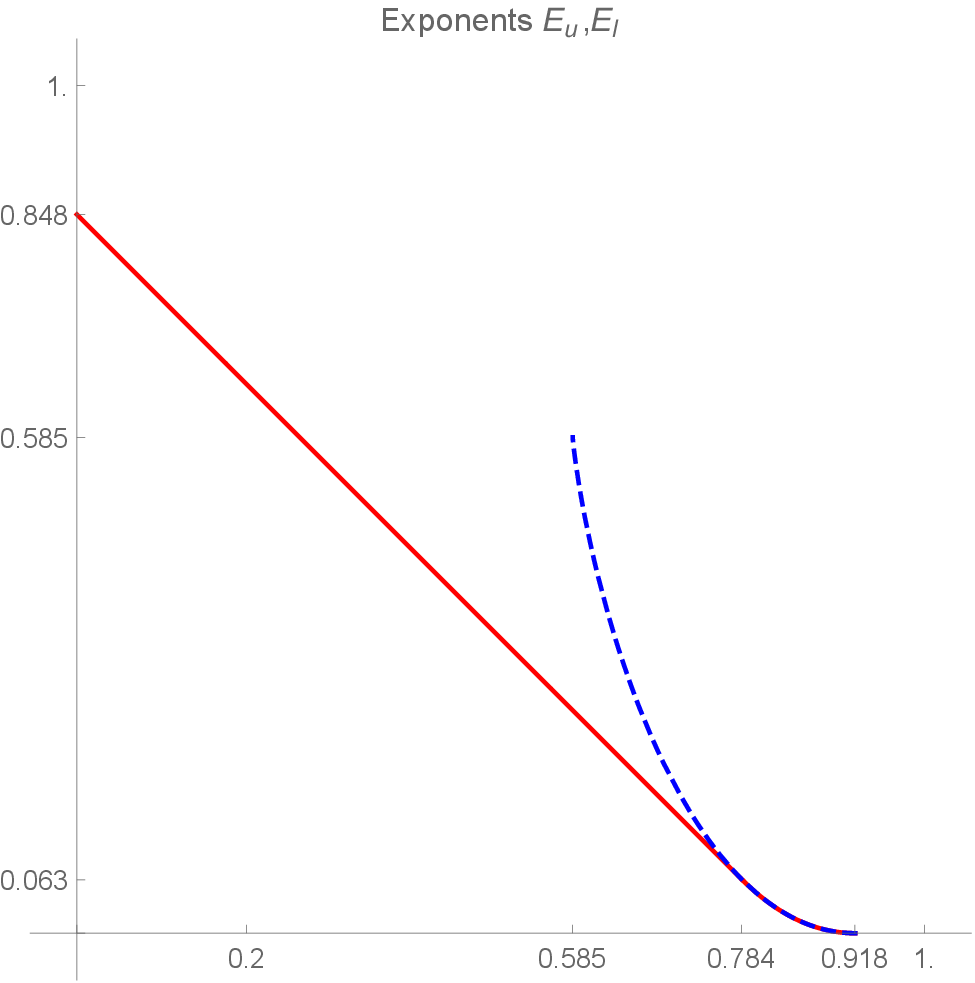} \centering
  \caption{Exponents in Example 1.
  Solid red curve expresses the lower bound  $E_l(R)$ by the reference \cite{Hayashi2015precise}.
  Dashed blue curve expresses the upper bound $E_u(R)$ derived in the present paper.
  These
  two bounds are equal when $R\geq \frac{ 5 \log 3- 4 \log 2}{5}=0.784963  $, giving
  the exact security exponent. When $R\geq 0.918296$,
  the security exponent is $0$. Below the
  critical value $0.784963 $, the upper bound $E_u(R)$
  is $\log \frac{3}{2}=0.584963$
   when $R=\log \frac{3}{2}$.
Since it diverges to infinity when $R<\log \frac{3}{2}$,
it is not plotted in this range.
  The lower bound $E_l(R)$ becomes linear and reaches
  $\log \frac{9}{5}=0.847997$ at $R=0$.}
  \label{fig:exponent2}
\end{figure}

\smallskip
\noindent \textbf{Example 2} \enspace Let $\rho_{XE}=(\frac{1}{4}\sum_
{i=1}^4\proj{i}) \ox \rho_E$, $\mathcal{Z}=\{0, 1\}$. We denote by $S_4$
the permutation group of $\mathcal{X}=\{1, 2, 3, 4\}$. Let $\Pi$ be the
random permutation over $\mathcal{X}$, i.e., it takes
the value $\pi \in S_4$ with equal probability for all $\pi$.
Define $f:\mathcal{X}\rar\mathcal{Z}$ by
\[
\label{equ:example2}
  f(i) := \begin{cases}
                      0 &  i=1, 2, \\
                      1 &  i=3, 4.
                    \end{cases}
\]
Then, we consider the random hash function $F_n:=(f \circ \Pi)^{\times n} $.
It is easy to see that $F_n$ is two-universal. But on the other hand,
it always holds that
\[
\rho^{F_n}_{Z_nE^n}=\frac{\1_{Z_n}}{|\mathcal{Z}_n|} \ox \rho_E^{\ox n},
\]
where $\mathcal{Z}_n=\mathcal{Z}^n$. Hence, $\mathbb{E}_{F_n} D(\rho^{F_n}_
{Z_nE^n} \| \frac{\1_{Z_n}}{|\mathcal{Z}_n|} \ox \rho_E^{\ox n})=
\mathbb{E}_{F_n}P(\rho^{F_n}_{Z_nE^n},\frac{\1_{Z_n}}{|\mathcal{Z}_n|}
\ox \rho_E^{\ox n})=0$, and the corresponding exponents are
$+\infty$. This is also true when the expectations are replaced by
the minimization over all hash functions from $\mc{X}^n$ to $\mc{Z}_n$. So,
the lower bounds of Eq.~(\ref{eq:exp-privacy-1}) and Eq.~(\ref{eq:exp-privacy-2}),
which are finite everywhere, are not tight in general.

\section{Asymptotic equivocation rate and security exponent under the sandwiched R\'enyi divergence}
  \label{sec:equivocation}
The equivocation rate is the adversary's maximum ambiguity rate for
a given randomness extraction rate $R$. Specifically, for a CQ state $\rho_{XE}$ and a randomness extraction rate $R$, the equivocation rate $\mathcal{R}_s(R|\rho)$ under the sandwiched R\'enyi divergence of order $1+s$ security measure is defined as
\[
\mathcal{R}_s(R|\rho):=\lim_{n \rightarrow \infty} \frac{1}{n}\max_{f_n} H_{1+s}(Z_n|E^n)_{\rho^{f_n}} ,
\]
where the maximization is taken over all maps $f_n : \mathcal{X}^{\times n} \rightarrow \mathcal{Z}_n$ and
$\rho^{f_n}_{Z_nE^n}$ is the state resulting from applying $f_n$ to $\rho_{XE}^{\ox n}$. In some papers, the equivocation rate is also defined as the adversary's minimum information rate for a given randomness extraction rate $R$, i.e.,
\[
\mathcal{R}'_s(R|\rho):=\lim_{n \rightarrow \infty} \frac{1}{n} \min_{f_n} D_{1+s}(\rho^{f_n}_{Z_nE^n} \| \frac{\1_{Z_n}}{|\mathcal{Z}_n|} \ox \rho_E^{\ox n}).
\]
These two definitions are related. Indeed, it is easy to see that
\[
\mathcal{R}_s(R|\rho)=R-\mathcal{R}'_s(R|\rho).
\]
In our paper, we take the second definition.

The concept of equivocation was first proposed
by Wyner~\cite{Wyner1975wire}
and was studied by many researchers in the wiretap scenario.
In the quantum privacy amplification scenario,
the preceding work~\cite{Hayashi2015precise} derived the equivocation rate
under the quantum relative entropy security measure.
Later, the reference~\cite{HayashiTan2016equivocations} derived the
equivocation rate, in the classical privacy amplification scenario, under
the R{\'e}nyi relative entropy security measure.

In this section, we investigate the asymptotic equivocation rate and
the security exponents under the sandwiched R\'enyi divergence security
measure, with R\'enyi parameter in $(1,2]$. This generalizes the results
by~\cite{HayashiTan2016equivocations} to the quantum
privacy amplification scenario. These results are presented in the following
two theorems. Theorem~\ref{theorem:3} deals with the asymptotic equivocation
rate, and Theorem~\ref{theorem:4} treats the security exponent.
In what follows, we use $|x|^+$ to denote $\max\{x, 0\}$.
\begin{theorem}
\label{theorem:3}
Let $\rho_{XE}$ be a CQ state, and $\mc{F}_{n}(R)$ be the set of functions
from $\mc{X}^n$ to $\mc{Z}_n=\{1,\ldots,2^{nR} \}$. Let $\rho^{f_n}_{Z_nE^n}$
denote the state resulting from applying a hash function $f_n\in\mc{F}
_{n}(R)$ to $\rho_{XE}^{\ox n}$. For any randomness extraction rate $R \geq 0$
and any $s \in (0,1]$, we have
\begin{equation}
\label{eq-easy}
\lim_{n\rightarrow\infty}
\frac{1}{n} \min_{f_n \in \mc{F}_n(R)} D_{1+s}(\rho^{f_n}_{Z_nE^n} \|
\frac{\1_{Z_n}}{|\mc{Z}_n|} \ox \rho_E^{\ox n})
={\big|R-H_{1+s}(X|E)_\rho \big|}^+.
\end{equation}
\end{theorem}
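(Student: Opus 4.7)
The plan is to establish matching converse and achievability bounds, both of which reduce to controlling the conditional sandwiched R\'enyi entropy $H_{1+s}(Z_n|E^n)_{\rho^{f_n}}$.

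For the \textbf{converse}, the starting point is the scaling $D_{1+s}(\rho\|\lambda\sigma)=D_{1+s}(\rho\|\sigma)-\log\lambda$ together with the definition of $H_{1+s}$, which give the identity
\[
D_{1+s}\!\left(\rho^{f_n}_{Z_nE^n}\,\Big\|\,\frac{\1_{Z_n}}{|\mc{Z}_n|}\ox\rho_E^{\ox n}\right)
= nR - H_{1+s}(Z_n|E^n)_{\rho^{f_n}}.
\]
Since $Z_n=f_n(X^n)$, the embedding $\ket{x^n}\mapsto\ket{x^n}\ket{f_n(x^n)}$ is an isometry on the classical register, so Proposition~\ref{prop:srd-properties}(v) yields $H_{1+s}(X^nZ_n|E^n)=H_{1+s}(X^n|E^n)$. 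Combining this with Proposition~\ref{prop:srd-properties}(vi) (discarding the classical $X^n$) and the multiplicativity of $Q_{1+s}$ under tensor products, one obtains
\[
H_{1+s}(Z_n|E^n)_{\rho^{f_n}}\;\le\; H_{1+s}(X^n|E^n)_{\rho^{\ox n}}\;=\;n\,H_{1+s}(X|E)_\rho.
\]
Together with $D_{1+s}\ge 0$ between normalized states (from Proposition~\ref{prop:srd-properties}(i) and $D_{1/2}=-2\log F\ge 0$), this delivers $\frac{1}{n}D_{1+s}(\cdots)\ge |R-H_{1+s}(X|E)_\rho|^+$ for every $f_n\in\mc{F}_n(R)$, hence the desired $\liminf$.

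For the \textbf{achievability}, the plan is to take a two-universal random hash function $F_n:\mc{X}^n\to\mc{Z}_n$ and invoke the pre-Jensen version of the estimate underlying Eq.~(\ref{eq:exp-privacy-1}) in \cite{Hayashi2015precise}. That argument in fact bounds the $Q_{1+s}$-functional directly,
\[
\mathbb{E}_{F_n}\,Q_{1+s}\!\left(\rho^{F_n}_{Z_nE^n}\,\Big\|\,\frac{\1_{Z_n}}{|\mc{Z}_n|}\ox\rho_E^{\ox n}\right)
\;\le\; 1+2^{sn(R-H_{1+s}(X|E)_\rho)},\qquad s\in(0,1].
\]
Applying Jensen's inequality to the convex map $t\mapsto 2^{st}$ then gives
\[
\mathbb{E}_{F_n}\,D_{1+s}\!\left(\rho^{F_n}_{Z_nE^n}\,\Big\|\,\frac{\1_{Z_n}}{|\mc{Z}_n|}\ox\rho_E^{\ox n}\right)
\;\le\; \frac{1}{s}\log\!\bigl(1+2^{sn(R-H_{1+s}(X|E)_\rho)}\bigr),
\]
whose $1/n$-normalization converges to $|R-H_{1+s}(X|E)_\rho|^+$ uniformly in the sign of $R-H_{1+s}(X|E)_\rho$: it vanishes in the regime $R\le H_{1+s}(X|E)_\rho$ and tends to $R-H_{1+s}(X|E)_\rho$ above threshold. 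Since $\min_{f_n}$ is at most the expectation over the two-universal family, this matches the converse.

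The \textbf{main obstacle} will be extracting the $Q_{1+s}$-level bound above from the proof of \cite{Hayashi2015precise}. Eq.~(\ref{eq:exp-privacy-1}) is stated only for the quantum relative entropy $D$, and since $D\le D_{1+s}$ the $D$-bound alone is not enough. However, the proof in \cite{Hayashi2015precise} first controls $\mathbb{E}_{F_n}Q_{1+s}$ via the pinching inequality~(\ref{eq:pinchingineq}) together with the two-universal property~(\ref{eq:two-universal}), and only afterwards passes to $D$ by Jensen; the required $Q_{1+s}$-bound is exactly this intermediate step. Once it is isolated, the remainder---additivity of $Q_{1+s}$ under i.i.d., the monotonicity properties in Proposition~\ref{prop:srd-properties}(v)--(vi), and the elementary asymptotic $\frac{1}{sn}\log(1+2^{snx})\to |x|^+$---is routine.
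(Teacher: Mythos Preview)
Your approach is correct and mirrors the paper's proof: the converse via the identity $D_{1+s}=nR-H_{1+s}(Z_n|E^n)$ together with Proposition~\ref{prop:srd-properties}(v)--(vi), and the achievability via a $Q_{1+s}$-level bound over a two-universal family followed by Jensen. One small correction: the $Q_{1+s}$-bound you quote should carry a polynomial factor $v(\rho_E^{\ox n})$ (or $v(\rho_E^{\ox n})^s$) coming from the pinching step you mention---the paper proves exactly this as Lemma~\ref{lemma:use3} (with a variant in Lemma~\ref{lemma:use4}) rather than extracting it from~\cite{Hayashi2015precise}---but this is harmless for the $1/n$ limit since $v(\rho_E^{\ox n})$ grows only polynomially in $n$.
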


\begin{theorem}
\label{theorem:4}
Let $\rho_{XE}$ be a CQ state, and $\mc{F}_{n}(R)$ be the set of functions
from $\mc{X}^n$ to $\mc{Z}_n=\{1,\ldots,2^{nR} \}$. Let $\rho^{f_n}_{Z_nE^n}$
denote the state resulting from applying a hash function $f_n\in\mc{F}
_{n}(R)$ to $\rho_{XE}^{\ox n}$. For any randomness extraction rate $R \geq
R_{\rm critical}$ and any $s \in (0,1]$, we have
\begin{equation}
\label{eq:asyexpo}
\lim_{n\rightarrow\infty} -\frac{1}{n} \log \min_{f_n \in \mathcal{F}_n(R)}
D_{1+s}(\rho^{f_n}_{Z_nE^n} \| \frac{\1_{Z_n}}{|\mathcal{Z}_n|} \ox \rho_E^{\ox n})
= \big|\max_{t \in [s,1]} \big\{tH_{1+t}(X|E)_\rho -tR\big\}\big|^+.
\end{equation}
\end{theorem}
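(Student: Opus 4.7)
The plan is to mirror the structure that yielded Theorem~\ref{theorem:exp-privacy}, treating achievability and converse separately. Throughout I abbreviate $g(t) := t\bigl(H_{1+t}(X|E)_\rho - R\bigr)$, so the target exponent is $|\sup_{t \in [s,1]} g(t)|^+$.

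For the achievability, the first step is to derive a R\'enyi analog of the Hayashi-type estimate \cite{Hayashi2015precise} that underlies Eq.~(\ref{eq:exp-privacy-1}): namely, for every $t \in (0, 1]$ and any two-universal random hash function $F_n$,
\begin{align*}
\mathbb{E}_{F_n} \Bigl[ Q_{1+t}(\rho^{F_n}_{Z_nE^n} \| \tfrac{\1_{Z_n}}{|\mc{Z}_n|} \ox \rho_E^{\ox n}) - 1 \Bigr] \leq C_n \cdot 2^{-n g(t)},
\end{align*}
with $C_n$ at most polynomial in $n$. Since $\log(1+x) \leq x/\ln 2$, this converts to a bound on $\mathbb{E}_{F_n} D_{1+t}$. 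Then the monotonicity $D_{1+s} \leq D_{1+t}$ for $s \leq t$ (Proposition~\ref{prop:srd-properties}(i)) transports the bound to $\mathbb{E}_{F_n} D_{1+s}$, so that $\min_{f_n} D_{1+s} \leq \mathbb{E}_{F_n} D_{1+s}$ is at most $\text{poly}(n)\cdot 2^{-n g(t)}$ for every $t \in [s,1]$. Optimizing over $t$ delivers the achievability part; the $|\cdot|^+$ is automatic because $D_{1+s} \geq 0$.

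For the converse, the easy step is to combine the monotonicity $D \leq D_{1+s}$ (Proposition~\ref{prop:srd-properties}(i),(ii)), which gives $\min_{f_n} D_{1+s} \geq \min_{f_n} D$, with the relative-entropy upper bound of Theorem~\ref{theorem:exp-privacy-up}:
\begin{align*}
\limsup_{n\to\infty} -\tfrac{1}{n}\log \min_{f_n} D_{1+s}(\rho^{f_n}_{Z_nE^n} \| \tfrac{\1_{Z_n}}{|\mc{Z}_n|} \ox \rho_E^{\ox n}) \leq \sup_{t \geq 0} g(t).
\end{align*}
Under the hypothesis $R \geq R_{\rm critical} = \hat{R}(1)$, the concavity of $t \mapsto tH_{1+t}(X|E)_\rho$ (Proposition~\ref{prop:srd-properties}(iv)) forces the maximizer $t^*$ of $g$ on $[0,\infty)$ to lie in $[0,1]$. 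When moreover $t^* \geq s$, the unrestricted supremum on the right coincides with $\sup_{t \in [s,1]} g(t)$, and the converse already matches the achievability.

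The main obstacle is the subcase $t^* < s$, i.e., $R > \hat{R}(s)$, where the above monotonicity-based bound yields only the strictly weaker exponent $g(t^*) > g(s)$, while Theorem~\ref{theorem:4} demands exactly $g(s) = s(H_{1+s}(X|E)_\rho - R)$. Here a finer, R\'enyi-order-specific lower bound on $D_{1+s}(\rho^{f_n}\|\cdot)$ is required. A natural starting point is the Audenaert-type tracial inequality, which after the CQ block-decomposition used in Section~\ref{subsec:main-results} gives $Q_{1+s}(\rho^{f_n}\|\tfrac{\1_{Z_n}}{|\mc{Z}_n|} \ox \rho_E^{\ox n}) \geq 2^{ns(R - H_{1+s}(X|E)_\rho)}$ and thereby recovers the first-order equivocation bound of Theorem~\ref{theorem:3}. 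The remaining task is to upgrade this estimate into a strictly positive, exponentially small lower bound $D_{1+s}(\rho^{f_n}\|\cdot) \geq 2^{-n g(s) - o(n)}$ holding for every $f_n$, for example by combining a typical-projector argument on the $E^n$ side with a smoothing step in the spirit of the proof of Theorem~\ref{theorem:exp-mre}, but measuring closeness through $D_{1+s}$ rather than the purified distance. This step is where I expect most of the technical work to reside.
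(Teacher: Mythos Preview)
Your achievability argument is essentially the paper's: the paper proves exactly the estimate you sketch (as Lemma~\ref{lemma:use4}, giving $\mathbb{E}_{F_n} 2^{tD_{1+t}(\cdot\|\cdot)}\leq 1+v(\rho_E^{\ox n})^t 2^{-n g(t)}$), then uses monotonicity in the R\'enyi order to transport from $t$ to $s$ and optimizes over $t\in[s,1]$.

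For the converse in the regime $R_{\rm critical}\leq R\leq \hat{R}(s)$, your shortcut through $D_{1+s}\geq D$ and Theorem~\ref{theorem:exp-privacy-up} is correct and slightly more economical than what the paper does. The paper instead re-derives this case directly: it applies data processing for $D_{1+s}$ under the binary measurement $\{\rho^{f_n}_{Z_nE^n}>c\,\frac{\1_{Z_n}}{|\mc Z_n|}\ox\rho_E^{\ox n}\}$ with $c^s\geq 3$, obtaining $D_{1+s}(\rho^{f_n}\|\cdot)\geq \frac{1}{s}\log(1+p_n)$, and then lower bounds $p_n$ via Lemma~\ref{lemma:use2}. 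Both routes land on the same bound $\sup_{t\geq 0}g(t)$.

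The genuine gap is the subcase $R>\hat R(s)$, and your proposed fixes (Audenaert-type inequality, typical projectors, smoothing) do not point toward what actually works. The paper's mechanism is the same binary-measurement idea, but with the threshold \emph{shifted}: one tests $\{\rho^{f_n}_{Z_nE^n}>\frac{\1_{Z_n}}{2^{n\hat R(s)}}\ox\rho_E^{\ox n}\}$, i.e.\ at rate $\hat R(s)$ rather than $R$. Because the reference state still carries the factor $2^{-nR}$, the binary $(1{+}s)$-divergence picks up an extra $2^{sn(R-\hat R(s))}$, so data processing gives $D_{1+s}(\rho^{f_n}\|\cdot)\geq \frac{1}{s}\log\bigl(1+(2^{sn(R-\hat R(s))}-2)p_n\bigr)$. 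The probability $p_n$ is then bounded below by Lemma~\ref{lemma:use2} applied at rate $\hat R(s)$, whose optimum $\inf_{t\geq 0}t(\hat R(s)-H_{1+t}(X|E)_\rho)$ is attained precisely at $t=s$ by the very definition of $\hat R(s)$. The two exponents add to $s(R-\hat R(s))+s(\hat R(s)-H_{1+s}(X|E)_\rho)=-g(s)$, which is exactly the target since $\sup_{t\in[s,1]}g(t)=g(s)$ in this regime. This threshold-shift trick is the missing idea; the key technical ingredient underneath it is Lemma~\ref{lemma:use2} (which in turn rests on the superadditivity $\tr(\sum_x A_x-\lambda\1)_+\geq\sum_x\tr(A_x-\lambda\1)_+$), not a smoothing or typicality argument.
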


\begin{remark}
Actually, the results obtained in Section~\ref{sec:privacy-amp} already give
that for any randomness extraction rate $R \geq R_{\rm critical}$ and any
$s\in[-\frac{1}{2}, 0]$,
\begin{equation}
  \label{eq:remark3}
\lim_{n\rightarrow\infty} -\frac{1}{n} \log \min_{f_n \in \mathcal{F}_n(R)}
D_{1+s}(\rho^{f_n}_{Z_nE^n} \| \frac{\1_{Z_n}}{|\mathcal{Z}_n|} \ox \rho_E^{\ox n})
= \max_{0\leq t\leq 1} \big\{t\big(H_{1+t}(X|E)_{\rho}-R\big)\big\}.
\end{equation}
To see this, we first notice that $D_{\frac{1}{2}}(\rho\|\sigma)=-\log(1-P^2(\rho,\sigma))$.
This together with Eq.~\eqref{eq:exp-privacy-P} proves Eq.~\eqref{eq:remark3} for
one of the endpoint $s=-\frac{1}{2}$. On the other hand, Eq.~\eqref{eq:exp-privacy-D}
confirms Eq.~\eqref{eq:remark3} for the other endpoint $s=0$. As the function
$s\mapsto D_{1+s}(\rho\|\sigma)$ is monotonically increasing in $[-\frac{1}{2},0]$,
Eq.~\eqref{eq:remark3} for the whole interval follows.
\end{remark}

Before the proof of Theorem~\ref{theorem:3} and Theorem~\ref{theorem:4}, we
first present and prove several useful lemmas.
\begin{lemma}
\label{lemma:use1}
Let $A_x \in \mathcal{P}(\mathcal{H})$ for $x\in\mc{X}$, and let $\lambda$ be
a positive number. Then we have
\[\tr(\sum_{x \in \mc{X}} A_x-\lambda \1)_+ \geq \sum_{x \in \mc{X}} \tr(A_x-\lambda \1)_+.\]
\end{lemma}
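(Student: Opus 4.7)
The plan is to invoke the variational bound from Eq.~(\ref{eq:pos}), namely $\tr C_+ \geq \tr CD$ for any $0 \leq D \leq \1$, with $C = \sum_{x} A_x - \lambda \1$ and a carefully chosen test operator $D$. For each $x$, let $P_x := \{A_x > \lambda \1\}$, the spectral projection on which $A_x - \lambda \1 > 0$; this projection saturates
\[
\tr(A_x - \lambda \1)_+ \;=\; \tr(A_x - \lambda\1)P_x \;=\; \tr A_x P_x - \lambda \tr P_x.
\]
The naive choice $D = \sum_{x} P_x$ is unavailable, because the $P_x$'s need not commute and the sum may fail to be below $\1$. Instead I would take $D := \bigvee_{x \in \mc{X}} P_x$, the projection onto $\sum_{x} \mathrm{range}(P_x)$.

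This choice of $D$ has two properties that drive the argument: (a) $D \geq P_x$ for every $x$, and hence $D - P_x \geq 0$, which combined with $A_x \geq 0$ gives $\tr A_x (D - P_x) \geq 0$; and (b) $\tr D \leq \sum_{x} \tr P_x$, by subadditivity of dimension. Splitting $\tr A_x D = \tr A_x P_x + \tr A_x (D - P_x)$ and using (a) together with the identity above,
\[
\tr\Bigl(\sum_{x} A_x - \lambda \1\Bigr)_+ \;\geq\; \sum_{x} \tr A_x D - \lambda \tr D \;\geq\; \sum_{x} \tr(A_x - \lambda\1)_+ \;+\; \lambda \Bigl(\sum_{x} \tr P_x - \tr D\Bigr),
\]
and the last term is non-negative by (b) together with the assumption $\lambda > 0$.

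The main subtlety is the noncommutativity of the spectral projections $P_x$. The commutative intuition would just take $D = \sum_{x} P_x$, which need not satisfy $D \leq \1$ in general. Passing to the lattice supremum $\bigvee_{x} P_x$ restores $D \leq \1$ while preserving $D \geq P_x$ for every $x$, and the resulting defect $\sum_{x} \tr P_x - \tr D \geq 0$ is absorbed by a term whose sign is controlled precisely by $\lambda > 0$. The positivity hypothesis $A_x \geq 0$ is used to discard the cross-terms $\tr A_x (D - P_x)$, and the positivity of $\lambda$ is used only in the very last step.
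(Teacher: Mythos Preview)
Your proof is correct and is essentially identical to the paper's: both take $D=\bigvee_x P_x$ with $P_x=\{A_x>\lambda\1\}$, apply the variational bound~(\ref{eq:pos}), use $D\geq P_x$ together with $A_x\geq 0$ to get $\tr A_x D\geq \tr A_x P_x$, and finish with $\tr D\leq\sum_x\tr P_x$. Your write-up is a bit more explicit about the role of noncommutativity and why the naive choice $\sum_x P_x$ fails, but the underlying argument is the same.
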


\begin{proof}
Let $P_x=\{A_x>\lambda \1\}$ and let $P=\lor_x P_x$ be the projection onto the subspace spanned by $\{\supp(P_x)\}_{x \in \mc{X}}$. Then we have
\begin{equation}
\label{eq:lemma1}
\begin{split}
\tr(\sum_{x \in \mc{X}} A_x-\lambda \1)_+ &\geq \tr(\sum_{x \in \mc{X}} A_x-\lambda \1)P \\
&=\sum_x \tr A_xP- \lambda \tr P \\
& \geq \sum_x \tr A_x P_x- \lambda \tr P\\
&=\sum_x \tr(A_x-\lambda \1)_+ +\sum_x \lambda \tr P_x-\lambda \tr P \\
& \geq \sum_x \tr(A_x-\lambda \1)_+,
\end{split}
\end{equation}
where the first inequality is due to~(\ref{eq:pos}), the second inequality results from
$P \geq P_x, \forall x \in \mc{X}$, and the last inequality is because the sum of the
dimensions of all $P_x$ is larger than the dimension of $P$.
\end{proof}

\begin{lemma}
\label{lemma:use2}
Let $\rho_{XE}$, $\mc{Z}_n$, $\mathcal{F}_n(R)$ and $\rho^{f_n}_{Z_nE^n}$ be the same as
those in Theorem~\ref{theorem:3} and Theorem~\ref{theorem:4}. Then for any
$t>0$ and $0<R<H(X|E)_\rho$, we have
\[
\liminf_{n\rar\infty} \frac{1}{n}
\log \min_{f_n \in \mathcal{F}(R)} \tr (\rho^{f_n}_{Z_nE^n}-t\frac{\1_{Z_n}}{|\mc{Z}_n|}
\ox {\rho_E}^{\ox n})_+
\geq \inf_{s \geq 0} \big\{s(R-H_{1+s}(X|E)_{\rho})\big\}.
\]
\end{lemma}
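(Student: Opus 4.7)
The plan is to reduce the quantity to one of the form $\tr(\rho_{XE}^{\ox n}-c_n\,\1_X\ox\rho_E^{\ox n})_+$ with $c_n\approx 2^{-nR}$, so that Proposition~\ref{prop:MosonyiOgawa} can be invoked. Writing $\sigma^{z_n}_{E^n}:=\sum_{x^n\in f_n^{-1}(z_n)}p_{x^n}\rho^{x^n}_{E^n}$ and $\mu_n:=t\,2^{-nR}$, the block-diagonal structure on $Z_n$ gives
\[
\tr(\rho^{f_n}_{Z_nE^n}-t\tfrac{\1_{Z_n}}{|\mc{Z}_n|}\ox\rho_E^{\ox n})_+=\sum_{z_n}\tr(\sigma^{z_n}_{E^n}-\mu_n\rho_E^{\ox n})_+,
\]
and the aim is to recombine the blocks into an expression involving $\rho_{XE}^{\ox n}$ with threshold proportional to $\1_X\ox\rho_E^{\ox n}$.

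The key step is an extension of Lemma~\ref{lemma:use1}: if each $A_x\ge 0$ commutes with $\tau\ge 0$, then $\tr(\sum_x A_x-\lambda\tau)_+\ge\sum_x\tr(A_x-\lambda\tau)_+$. This follows by restricting to each spectral projector of $\tau$, on which $\tau$ is a scalar, and applying Lemma~\ref{lemma:use1} itself. To produce this commutativity I apply the pinching $\mc{E}_{\rho_E^{\ox n}}$ on the $E$-system. It is CPTP and fixes $\rho_E^{\ox n}$, so the data processing inequality $\tr\mc{E}(X)_+\le \tr X_+$ (valid for Hermitian $X$, via $\tr X_+=\max_{0\le P\le\1}\tr XP$ and the complete positivity of the dual map) yields
\[
\tr(\sigma^{z_n}_{E^n}-\mu_n\rho_E^{\ox n})_+\ge\tr(\tilde\sigma^{z_n}_{E^n}-\mu_n\rho_E^{\ox n})_+,
\]
where $\tilde\sigma^{z_n}_{E^n}=\sum_{x^n\in f_n^{-1}(z_n)}p_{x^n}\tilde\rho^{x^n}_{E^n}$ with $\tilde\rho^{x^n}_{E^n}:=\mc{E}_{\rho_E^{\ox n}}(\rho^{x^n}_{E^n})$ commuting with $\rho_E^{\ox n}$. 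Applying the commuting extension of Lemma~\ref{lemma:use1} to each $z_n$-block with $\tau=\rho_E^{\ox n}$ and $A_{x^n}=p_{x^n}\tilde\rho^{x^n}_{E^n}$, then summing over $z_n$, delivers
\[
\sum_{z_n}\tr(\sigma^{z_n}_{E^n}-\mu_n\rho_E^{\ox n})_+\ \ge\ \tr(\tilde\rho_{XE}^{\ox n}-\mu_n\,\1_X\ox\rho_E^{\ox n})_+,
\]
where $\tilde\rho_{XE}^{\ox n}:=(\id_X\ox\mc{E}_{\rho_E^{\ox n}})(\rho_{XE}^{\ox n})$.

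To invert the pinching I invoke~\eqref{eq:pinchingineq}: $\rho_{XE}^{\ox n}\le v_n\tilde\rho_{XE}^{\ox n}$ with $v_n=v(\rho_E^{\ox n})\le(n+1)^{\dim\mc{H}_E-1}$. Combined with $X\le Y\Rightarrow\tr X_+\le\tr Y_+$, one obtains
\[
\tr(\tilde\rho_{XE}^{\ox n}-\mu_n\,\1_X\ox\rho_E^{\ox n})_+\ \ge\ \tfrac{1}{v_n}\tr(\rho_{XE}^{\ox n}-v_n\mu_n\,\1_X\ox\rho_E^{\ox n})_+.
\]
For any $R'\in(H_{\min}(X|E)_\rho,R)$ we have $v_n\mu_n\le 2^{-nR'}$ for $n$ large, so Proposition~\ref{prop:MosonyiOgawa} applies with $(\rho,\sigma,a)=(\rho_{XE},\1_X\ox\rho_E,-R')$---here $a\in(D,D_{\max})$ precisely because $H_{\min}(X|E)_\rho<R'<H(X|E)_\rho$---yielding the exponent $\inf_{s\ge 0}\{s(R'-H_{1+s}(X|E)_\rho)\}$. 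The polynomial prefactor $\tfrac{1}{v_n}$ vanishes on the exponential scale, and letting $R'\nearrow R$ (using the continuity of $R\mapsto\inf_{s\ge 0}\{s(R-H_{1+s}(X|E)_\rho)\}$, which is concave as an infimum of affine functions) gives the claimed bound. For $0<R\le H_{\min}(X|E)_\rho$, the right-hand side is already $-\infty$ (taking $s\to\infty$), so the assertion is vacuous.

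The principal obstacle is the commuting-$\tau$ extension of Lemma~\ref{lemma:use1}: a naive extension from $\lambda\1$ to $\lambda\tau$ fails in general because the auxiliary inequality $\tr(\tau P)\le\sum_x\tr(\tau P_x)$ for $P=\vee_x P_x$, which is trivial dimension-counting when $\tau=\1$, can be violated whenever the $P_x$ do not commute with $\tau$. Pinching removes this obstruction at the price of a polynomial prefactor $v_n$, which is then absorbed via the slight rate reduction $R'<R$ together with the continuity of the target exponent.
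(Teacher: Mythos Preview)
Your proof is correct and uses the same core ingredients as the paper: pinch on the $E$-system so that the conditional operators commute with $\rho_E^{\ox n}$, split into spectral blocks where Lemma~\ref{lemma:use1} applies with a scalar threshold, reassemble into $\tr(\rho_{XE}^{\ox n}-c_n\,\1_X\ox\rho_E^{\ox n})_+$, and invoke Proposition~\ref{prop:MosonyiOgawa}. Your ``commuting-$\tau$ extension'' of Lemma~\ref{lemma:use1} is precisely what the paper does in Eqs.~(\ref{eq:lemma2})--(\ref{eq:lemma2-1}), just packaged as a separate statement rather than unrolled into the eigenprojector sum.

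The one organizational difference is how the pinching correction is removed. The paper introduces an auxiliary block size $m$, pinches by $\mc{E}_{\rho_E^{\ox m}}^{\ox k}$, applies Proposition~\ref{prop:MosonyiOgawa} to the pinched $m$-block state, bounds $D_{1+s}(\mc{E}_{\rho_E^{\ox m}}(\rho_{XE}^{\ox m})\|\cdot)$ via the pinching inequality at the level of the R\'enyi divergence, and finally sends $m\to\infty$ so that $\frac{1}{m}\log v(\rho_E^{\ox m})\to 0$. You instead pinch by the full $\rho_E^{\ox n}$, apply the pinching inequality directly at the operator level to return to $\rho_{XE}^{\ox n}$ with a polynomial prefactor $v_n$, and absorb $v_n$ by a slight rate reduction $R'<R$ followed by continuity of the concave exponent function. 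Your route avoids the double limit and is a bit more streamlined; the paper's route keeps the rate $R$ fixed throughout and pushes the correction into the R\'enyi order instead. Both are valid.
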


\begin{proof}
Fix $m \in \mathbb{N}$, and write $n$ in the form $n=km+r$, where $k,r \in \mathbb{N}$
and $0 \leq r < m$. Suppose ${\rho^{\ox m}_E}$ and $\rho^{\ox r}_E$ have
spectral projections $\{E_i\}_{i \in \mathcal{I}}$ and $\{P_j\}_{j \in \mathcal{J}}$
with corresponding eigenvalues $\{\lambda_i\}_{i \in \mathcal{I}}$ and
$\{\eta_j\}_{j \in \mathcal{J}}$, respectively.

Now we evaluate the left hand side. First, recalling that the trace distance
decreases under the action of a channel, we have
\begin{equation}
\label{eq:lemma2}
\begin{split}
&\tr (\rho^{f_n}_{Z_nE^n}-t\frac{\1_{Z_n}}{|\mc{Z}_n|} \ox {\rho^{\ox n}_E})_+\\
=&\sum_{z_n}\tr(\sum_{x_n \in f_n^{-1}(z_n)}p(x_n)\rho^{x_n}_{E^n}-t\frac{{\rho^{\ox n}_E}}{|\mc{Z}_n|} )_+ \\
\geq & \sum_{z_n}\tr\big(\mathcal{E}_{\rho^{\ox m}_E}^{\ox k} \ox \mathcal{E}_{\rho^{\ox r}_E}(\sum_{x_n \in f_n^{-1}(z_n)}   {p(x_n)\rho^{x_n}_{E^n}}-t\frac{{\rho^{\ox n}_E}}{|\mc{Z}_n|}) \big)_+ \\
= & \sum_{z_n} \tr  \big(\sum_{i_k,j} (\sum_{x_n \in f_n^{-1}(z_n)}E_{i_k} \ox P_j p(x_n) \rho^{x_n}_{E^n} E_{i_k} \ox P_j
-t \frac{\lambda_{i_k} \eta_j E_{i_k} \ox P_j}{|\mc{Z}_n|})\big)_+ \\
=&\sum_{z_n} \sum_{i_k,j} \tr(\sum_{x_n \in f_n^{-1}(z_n)}E_{i_k} \ox P_j p(x_n) \rho^{x_n}_{E^n} E_{i_k} \ox P_j-t \frac{\lambda_{i_k} \eta_j E_{i_k} \ox P_j}{|\mc{Z}_n|})_+.
\end{split}
\end{equation}
Then, with Lemma~\ref{lemma:use1}, we can proceed as
\begin{equation}
\label{eq:lemma2-1}
\begin{split}
(\ref{eq:lemma2})& \geq \sum_{z_n} \sum_{i_k,j} \sum_{x_n \in f_n^{-1}(z_n)} \tr  (E_{i_k} \ox P_j p(x_n) \rho^{x_n}_{E^n} E_{i_k} \ox P_j
-t \frac{\lambda_{i_k} \eta_j E_{i_k} \ox P_j}{|\mc{Z}_n|})_+ \\
&= \sum_{z_n}\sum_{x_n \in f_n^{-1}(z_n)} \tr  \big(\sum_{i_k,j} E_{i_k} \ox P_j p(x_n) \rho^{x_n}_{E^n} E_{i_k} \ox P_j
-t \frac{\lambda_{i_k} \eta_j E_{i_k} \ox P_j}{|\mc{Z}_n|}\big)_+ \\
&= \sum_{x_n} \tr\big( \mathcal{E}_{\rho^{\ox m}_E}^{\ox k} \ox \mathcal{E}_{\rho^{\ox r}_E}(p(x_n)\rho^{x_n}_{E^n})- t\frac{{\rho^{\ox n}_E}}{|\mc{Z}_n|}  \big)_+ \\
&=\tr\big(\mathcal{E}_{\rho^{\ox m}_E}^{\ox k} \ox \mathcal{E}_{\rho^{\ox r}_E}(\rho^{\ox n}_{XE}) -t\frac{1}{2^{nR}}{\1^{\ox n}_X} \ox \rho^{\ox n}_E\big)_+  \\
& \geq  \tr \big( \mathcal{E}_{\rho^{\ox m}_E}^{\ox k} (\rho^{\ox mk}_{XE})-\frac{t|\mc{X}|^r}{2^{Rr}} \frac{\1^{\ox mk}_X \ox \rho^{\ox mk}_E}{2^{mkR}}   \big)_+,
\end{split}
\end{equation}
where the last inequality is because the trace distance decreases under partial trace.

Since the function $A \in \mathcal{P}(\mathcal{H}) \rightarrow \tr(A)^{1+s}$ is operator monotone,
Eq.~(\ref{eq:pinchingineq}) implies
\begin{align}
&v(\rho^{\ox m}_E)^{1+s}
2^{s D_{1+s}(\mathcal{E}_{\rho^{\ox m}_E}(\rho^{\ox m}_{XE})\| \1^{\ox m}_X \ox \rho^{\ox m}_E)}
=
v(\rho^{\ox m}_E)^{1+s}
Q_{1+s}(\mathcal{E}_{\rho^{\ox m}_E}(\rho^{\ox m}_{XE})\| \1^{\ox m}_X \ox \rho^{\ox m}_E)
\nonumber\\
\ge &
Q_{1+s}(\rho^{\ox m}_{XE} \| \1^{\ox m}_X \ox \rho^{\ox m}_E)
=
2^{sD_{1+s}(\rho^{\ox m}_{XE} \| \1^{\ox m}_X \ox \rho^{\ox m}_E)}.
\label{eq:lemma3H}
\end{align}
Therefore, we obtain
\begin{equation}
\label{eq:lemma3}
\begin{split}
&\liminf_{n\rar\infty} \frac{1}{n} \log \min_{f_n} \tr \big(\rho^{f_n}_{Z_nE^n}-t\frac{\1_{Z_n}}{|\mc{Z}_n|} \ox {\rho_E}^{\ox n}\big)_+ \\
\geq & \inf_{s \geq 0} s\big(R+\frac{D_{1+s}(\mathcal{E}_{\rho^{\ox m}_E}(\rho^{\ox m}_{XE}) \| \1^{\ox m}_X \ox \rho^{\ox m}_E)}{m}\big) \\
 \geq &\inf_{s \geq 0} s\big(R+\frac{D_{1+s}(\rho^{\ox m}_{XE} \| \1^{\ox m}_X \ox \rho^{\ox m}_E)}{m}
-\frac{(s+1)\log v(\rho^{\ox m}_E)}{sm}\big) \\
=&-\sup_{s \geq 0} \big\{s\big(H_{1+s}(X|E)_\rho-R+\frac{\log v(\rho^{\ox m}_E)}{m}\big)\big\}
  -\frac{\log v(\rho^{\ox m}_E)}{m}, \\
\end{split}
\end{equation}
where the first inequality follows from Eq.~(\ref{eq:lemma2-1}) and Proposition~\ref{prop:MosonyiOgawa},
and
the second inequality follows from Eq.~(\ref{eq:lemma3H}).

Because the function $R \mapsto \sup_{s \geq 0} {s(H_{1+s}(X|E)_\rho+R)}$
is continuous, by letting $m\rar\infty$ we conclude the proof.
\end{proof}

\begin{lemma}
\label{lemma:use3}
For a CQ state $\rho_{XE}$ and a two-universal random hash functions $F: \mc{X} \rightarrow \mc{Z}=
\{1, \ldots ,M\}$, we have for $s \in (0,1]$,
\begin{equation}
\mathbb{E}_{F} Q_{1+s}(\rho^{F}_{ZE} \| \1_Z \ox \rho_E) \leq v(\rho_E)^{1+s}
\big(Q_{1+s}(\rho_{XE} \| \1_X \ox \rho_E)+\frac{1}{M^s}\big).\label{lemma:use3EQ}
\end{equation}
\end{lemma}

\begin{proof}
Let the spectral projections of $\rho_E$ be $\{E_i\}_{i \in \mc{I}}$,
and the corresponding eigenvalues be $\{\lambda_i\}_{i \in \mc{I}}$.
Then, with the pinching inequality~(\ref{eq:pinchingineq}), we can bound
$\mathbb{E}_{F} Q_{1+s}(\rho^{F}_{ZE} \| \1_Z \ox \rho_E)$ as follows.
\begin{equation}
\label{eq:lemmause1}
\begin{split}
&\mathbb{E}_{F} Q_{1+s}(\rho^{F}_{ZE} \| {\1}_Z \ox \rho_E) \\
 \leq & v(\rho_E)^{1+s}\mathbb{E}_{F} Q_{1+s}\big({\mc{E}_{\rho_E}}(\rho^{F}_{ZE})
  \| {\1}_{Z} \ox \rho_E\big)\\
=&v(\rho_E)^{1+s}\mathbb{E}_{F}\big(\sum_{z,i}  Q_{1+s}(\sum_{x \in f^{-1}(z)}
  {\Pi}_i p(x)\rho^{x}_E {\Pi}_i \| \lambda_i {\Pi}_i) \big) \\
=&v(\rho_E)^{1+s}
\mathbb{E}_{F}\big(\sum_{z,i} \sum_{x \in f^{-1}(z)} {\lambda_i}^{-s}\tr
  {\Pi}_i p(x)\rho^{x}_E {\Pi}_i\big({\Pi}_i p(x)\rho^{x}_E {\Pi}_i +\sum_{x
  \neq x'} {\Pi}_i p(x')\rho^{x'}_E {\Pi}_i 1_{f(x')
  \neq f(x)}\big)^s\big),
\end{split}
\end{equation}
where the inequality follows from the same reason as Eq.~(\ref{eq:lemma3H}).
To proceed, we invoke the property that the function $f(x)=x^s$ is operator
concave when $0 < s \leq 1$, to see that
\begin{equation}
\label{eq:lemmause1-1}
\begin{split}
&\mathbb{E}_{F}\big(\sum_{z,i} \sum_{x \in f^{-1}(z)} {\lambda_i}^{-s}\tr
  {\Pi}_i p(x)\rho^{x}_E {\Pi}_i\big({\Pi}_i p(x)\rho^{x}_E {\Pi}_i +\sum_{x
  \neq x'} {\Pi}_i p(x')\rho^{x'}_E {\Pi}_i 1_{f(x')
  \neq f(x)}\big)^s\big)\\
\leq & \sum_{z,i} \sum_{x \in f^{-1}(z)} {\lambda_i}^{-s} \tr {\Pi}_i p(x)
  \rho^{x}_E {\Pi}_i\big({\Pi}_i p(x)\rho^{x}_E {\Pi}_i+\sum_{x \neq x'}
  \frac{1}{M} {\Pi}_i p(x')\rho^{x'}_E {\Pi}_i\big)^s\\
=& \sum_{z,i} \sum_{x \in f^{-1}(z)} {\lambda_i}^{-s}\tr {\Pi}_i p(x)\rho^{x}_E
  {\Pi}_i\big(\frac{1}{M}{\Pi}_i \rho_E {\Pi}_i+\frac{M-1}{M} {\Pi}_i p(x)\rho^x_E
  {\Pi}_i  \big)^s.
\end{split}
\end{equation}
Then we use the inequality $(X+\lambda\1)^s \leq X^s + \lambda^s \1$ for any $X \in \mathcal{P}(\mathcal{H})$ and $\lambda \geq 0$, to bound Eq.~(\ref{eq:lemmause1-1}) as follows.
\begin{equation}
\label{eq:lemmause1-2}
\begin{split}
&\sum_{z,i} \sum_{x \in f^{-1}(z)} {\lambda_i}^{-s}\tr {\Pi}_i p(x)\rho^{x}_E
  {\Pi}_i\big(\frac{1}{M}{\Pi}_i \rho_E {\Pi}_i+\frac{M-1}{M} {\Pi}_i p(x)\rho^x_E
  {\Pi}_i  \big)^s \\
\leq  & \sum_i \sum_x  \lambda^{-s}_{i}\tr {\Pi}_i p(x)\rho^{x}_E {\Pi}_i\big(
  \frac{1}{M^s} ({\Pi}_i \rho_E {\Pi}_i)^s +\frac{(M-1)^s}{M^s} \big({\Pi}_i p(x)
  \rho^{x}_E {\Pi}_i\big)^s  \big)      \\
 \leq &  \sum_x \sum_i \tr(\lambda^{-s}_i({\Pi}_i p(x)\rho^x_E
  {\Pi}_i)^{1+s}+\frac{1}{M^s}{\Pi}_i p(x) \rho^x_E {\Pi}_i    )\\
  = & \big( Q_{1+s}(\mc{E}_{\rho_E}(\rho_{XE}) \| \1_X \ox \rho_E)
    +\frac{1}{M^s}  \\
 \leq & Q_{1+s}(\rho_{XE} \| \1_X \ox \rho_E)+\frac{1}{M^s} ,
\end{split}
\end{equation}
where the second inequality is simply due to $\frac{M-1}{M} <1$.
Therefore, the combination of Eqs.~(\ref{eq:lemmause1}), (\ref{eq:lemmause1-1}),
and (\ref{eq:lemmause1-2})
yields Eq.~\eqref{lemma:use3EQ}.
\end{proof}

\begin{lemma}
\label{lemma:use4}
For a CQ state $\rho_{XE}$ and a two-universal random hash functions $F: \mc{X} \rightarrow \mc{Z}=
\{1, \ldots ,M\}$, we have for $s \in (0,1]$,
\begin{equation}
\mathbb{E}_{F} 2^{sD_{1+s}(\rho^{F}_{ZE} \| \frac{\1_Z}{|\mc{Z}|} \ox \rho_E)} \leq
1+v(\rho_E)^s 2^{s(\log M -H_{1+s}(X|E)_\rho)}.
\end{equation}
\end{lemma}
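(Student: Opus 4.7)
The plan is to refine the argument of Lemma~\ref{lemma:use3}, this time ensuring that the additive constant~$1$ appears without any pinching prefactor. Using the scaling identity $Q_{1+s}(\rho\|\lambda\sigma)=\lambda^{-s}Q_{1+s}(\rho\|\sigma)$, the claim reduces to
\[
\mathbb{E}_F Q_{1+s}(\rho^F_{ZE}\|\1_Z\ox\rho_E) \leq v(\rho_E)^s\, Q_{1+s}(\rho_{XE}\|\1_X\ox\rho_E) + \frac{1}{M^s}.
\]
Writing $\rho_E^c:=\rho_E^{-s/(2(1+s))}$ and $\sigma^{F(x)}_E:=\sum_{x':F(x')=F(x)}p_{x'}\rho^{x'}_E$, the block-diagonality in $Z$ together with the identity $\tr A^{1+s}=\tr A\cdot A^s$ rewrites the left-hand side as $\sum_x p_x\,\tr \rho^x_E\,\rho_E^c\,\mathbb{E}_F[(\rho_E^c\sigma^{F(x)}_E\rho_E^c)^{s}]\,\rho_E^c$.

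Since $x\mapsto x^s$ is operator concave on $[0,\infty)$ for $s\in(0,1]$, operator Jensen's inequality gives $\mathbb{E}_F[(\rho_E^c\sigma^{F(x)}_E\rho_E^c)^{s}]\leq(\rho_E^c\,\mathbb{E}_F[\sigma^{F(x)}_E]\,\rho_E^c)^{s}$; two-universality yields $\mathbb{E}_F[\sigma^{F(x)}_E]\leq p_x\rho^x_E+\frac{1}{M}\rho_E$; and operator monotonicity of $x^s$ preserves the order after taking the $s$-th power. Setting $A:=\rho_E^c p_x\rho^x_E\rho_E^c$ and $B:=\frac{1}{M}\rho_E^{1/(1+s)}$, the problem reduces to controlling $(A+B)^s$.

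The core step is the operator inequality
\[
(A+B)^{s}\leq v(\rho_E)^{s}\,\mc{E}_{\rho_E}(A)^{s}+B^{s}.
\]
To prove it I will apply the pinching inequality~(\ref{eq:pinchingineq}) to get $A\leq v(\rho_E)\,\mc{E}_{\rho_E}(A)$, so that $A+B\leq v(\rho_E)\mc{E}_{\rho_E}(A)+B$; then operator monotonicity of $x\mapsto x^s$ raises both sides to the $s$-th power; and finally, since both $\mc{E}_{\rho_E}(A)$ and $B$ commute with $\rho_E$ and hence with each other, the scalar subadditivity $(x+y)^s\leq x^s+y^s$ lifts to the operator inequality $(v(\rho_E)\mc{E}_{\rho_E}(A)+B)^{s}\leq v(\rho_E)^{s}\mc{E}_{\rho_E}(A)^{s}+B^{s}$ in the joint eigenbasis of those two operators. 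The decisive point is that the pinching factor $v(\rho_E)^{s}$ attaches only to the first summand.

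Substituting back, the $B^{s}$ contribution simplifies through $\rho_E^c\,\rho_E^{s/(1+s)}\,\rho_E^c=\rho_E^{0}$ to $\frac{1}{M^{s}}\sum_x p_x\tr\rho^x_E=\frac{1}{M^{s}}$, while the $\mc{E}_{\rho_E}(A)^{s}$ contribution collapses by cyclicity and by the commutation of the pinching with $\rho_E^c$-conjugation to $v(\rho_E)^{s}\sum_x p_x^{1+s}Q_{1+s}(\mc{E}_{\rho_E}(\rho^x_E)\|\rho_E)$, which is bounded in turn by $v(\rho_E)^{s}Q_{1+s}(\rho_{XE}\|\1_X\ox\rho_E)$ via data processing (Proposition~\ref{prop:srd-properties}~(\romannumeral3)). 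The main obstacle is the operator inequality of the previous paragraph: pinching the entire expression at the outset — as in the proof of Lemma~\ref{lemma:use3} — would also scale $B^{s}$ by $v(\rho_E)^{s}$ and destroy the clean additive $1$ required by the statement, so the pinching must be inserted precisely after the decomposition $A+B$ and applied only to $A$.
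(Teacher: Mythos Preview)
Your proof is correct and follows essentially the same strategy as the paper's: expand $Q_{1+s}$ by peeling off one factor, use operator concavity of $x\mapsto x^{s}$ to push $\mathbb{E}_F$ inside, bound the averaged hashed block via two-universality by $p_x\rho_E^x+\tfrac{1}{M}\rho_E$, then apply the pinching inequality \emph{only to the first summand} so that after taking the $s$-th power the subadditivity $(X+Y)^{s}\le X^{s}+Y^{s}$ (for commuting $X,Y$) leaves the $\tfrac{1}{M^{s}}$ term unscaled; finally identify the pinched term with $Q_{1+s}(\mc{E}_{\rho_E}(\rho_{XE})\|\1_X\ox\rho_E)$ and invoke data processing. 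The only cosmetic difference is the order of pinching and $\rho_E^c$-conjugation, which is immaterial since $\mc{E}_{\rho_E}$ commutes with conjugation by any function of $\rho_E$.

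One small point of wording: the claim that ``$\mc{E}_{\rho_E}(A)$ and $B$ commute with $\rho_E$ and hence with each other'' is not quite a valid inference in general (two operators commuting with a third need not commute with each other). The correct reason---which your computation in fact uses---is that $B=\tfrac{1}{M}\rho_E^{1/(1+s)}$ is a \emph{function} of $\rho_E$, hence a scalar on each spectral subspace of $\rho_E$, so it commutes with anything in the commutant of $\rho_E$, in particular with $\mc{E}_{\rho_E}(A)$. This does not affect the validity of your argument.
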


\begin{proof}
For $s \in (0,1]$, we have
\begin{equation}
\label{equ:use4}
\begin{split}
&\mathbb{E}_{F} 2^{sD_{1+s}(\rho^{F}_{ZE} \| \1_Z \ox \rho_E)} \\
=&\tr \mathbb{E}_{F} \sum_{m \in \mc{Z}}
\big(\rho_{E}^{-\frac{s}{2(1+s)}} \big(\sum_{x': F(x')=m} p(x')\rho_{E}^{x'} \big)   \rho_{E}^{-\frac{s}{2(1+s)}}\big)^{1+s}   \\
=&\tr \mathbb{E}_{F} \sum_{m \in \mc{Z}}
\big(\rho_{E}^{-\frac{s}{2(1+s)}} \big(\sum_{x': F(x')=m} p(x')\rho_{E}^{x'} \big)   \rho_{E}^{-\frac{s}{2(1+s)}}\big) \big(\rho_{E}^{-\frac{s}{2(1+s)}} \big(\sum_{x': F(x')=m} p(x')\rho_{E}^{x'} \big)   \rho_{E}^{-\frac{s}{2(1+s)}}\big)^s \\
=&\tr \mathbb{E}_{F} \sum_{m \in \mc{Z}}
\big(\sum_{x': F(x')=m} p(x')\rho_{E}^{x'} \big)   \rho_{E}^{-\frac{s}{2(1+s)}} \big(\rho_{E}^{-\frac{s}{2(1+s)}} \big(\sum_{x': F(x')=m} p(x')\rho_{E}^{x'} \big)   \rho_{E}^{-\frac{s}{2(1+s)}}\big)^s\rho_{E}^{-\frac{s}{2(1+s)}} \\
=&\tr \mathbb{E}_{F} \sum_{x \in \mathcal{X}}
p(x)\rho_E^x  \rho_{E}^{-\frac{s}{2(1+s)}} \big(\rho_{E}^{-\frac{s}{2(1+s)}} \big(\sum_{x': F(x')=F(x)} p(x')\rho_{E}^{x'} \big)   \rho_{E}^{-\frac{s}{2(1+s)}}\big)^s\rho_{E}^{-\frac{s}{2(1+s)}}.
\end{split}
\end{equation}
Then we proceed as follows.
\begin{equation}
\begin{split}
&\mathbb{E}_{F} 2^{sD_{1+s}(\rho^{F}_{ZE} \| \1_Z \ox \rho_E)} \\
\stackrel{(a)}{\leq} &\tr \sum_{x \in \mathcal{X}}
p(x)\rho_E^x  \rho_{E}^{-\frac{s}{2(1+s)}} \big(\rho_{E}^{-\frac{s}{2(1+s)}} \big(\mathbb{E}_{F}\sum_{x': F(x')=F(x)} p(x')\rho_{E}^{x'} \big)   \rho_{E}^{-\frac{s}{2(1+s)}}\big)^s\rho_{E}^{-\frac{s}{2(1+s)}} \\
\stackrel{(b)}{\leq} &\tr \sum_{x \in \mathcal{X}}
p(x)\rho_E^x  \rho_{E}^{-\frac{s}{2(1+s)}} \big(\rho_{E}^{-\frac{s}{2(1+s)}} \big(v(\rho_E) \mc{E}_{\rho_E}
(p(x)\rho_E^x)+\frac{1}{M}\rho_E\big)   \rho_{E}^{-\frac{s}{2(1+s)}}\big)^s\rho_{E}^{-\frac{s}{2(1+s)}} \\
=&\sum_{x \in \mathcal{X}} \tr p(x)\rho_E^x \rho_E^{-s}\big(v(\rho_E) \mc{E}_{\rho_E}
(p(x)\rho_E^x)+\frac{1}{M}\rho_E  \big)^s \\
\leq & \sum_{x \in \mathcal{X}} \tr p(x)\rho_E^x \rho_E^{-s}\big(v(\rho_E)^s (\mc{E}_{\rho_E}
(p(x)\rho_E^x))^s+\frac{1}{M^s}\rho^s_E  \big)\\
=&v(\rho_E)^s 2^{-sH_{1+s}(X|E)_{\mathcal{E}_{\rho_E}(\rho)}}+\frac{1}{M^s}  \\
\stackrel{(c)}{\leq} & v(\rho_E)^s 2^{-sH_{1+s}(X|E)_\rho}+\frac{1}{M^s},
\end{split}
\end{equation}
where $(a)$ follows from the matrix concavity of $x \mapsto x^s$, $(b)$ comes from
the definition of the two-universal hash functions and the pinching inequality~(\ref{eq:pinchingineq})
and $(c)$ is due to the data processing inequality of the sandwiched R\'enyi divergence. This completes
the proof.
\end{proof}

\medskip
Now, we are ready for the proofs of Theorem~\ref{theorem:3} and Theorem~\ref{theorem:4}.

\begin{proofof}{Theorem~\ref{theorem:3}}
At first, we deal with the "$\leq$" part. By Lemma~\ref{lemma:use3},
we know that for any $n$ there exists a hash function $f_n$ such that
\begin{equation}
\begin{split}
& D_{1+s}(\rho^{f_n}_{Z_nE^n}\|\frac{\1_{Z_n}}{|\mc{Z}_n|}\ox\rho_E^{\ox n}) \\
=&nR+\frac{1}{s}\log Q_{1+s}(\rho^{f_n}_{Z_nE^n} \| \1_{Z_n} \ox \rho^{\ox n}_E) \\
\leq& nR+ \frac{1}{s}\log \big(Q^n_{1+s}(\rho_{XE} \| \1_{X} \ox \rho_E)
  +\frac{1}{2^{nRs}} \big)+\frac{1+s}{s}\log v(\rho_E^{\ox n}) \\
=&\frac{1}{s}\log \big(1+2^{nRs} Q^n_{1+s}(\rho_{XE} \| \1_X \ox \rho_E)\big)
  +\frac{1+s}{s}\log v(\rho_E^{\ox n}).
\end{split}
\end{equation}
This further yields
\[
\limsup_{n\rar\infty} \frac{1}{n} \min_{f_n \in \mc{F}_n (R)}D_{1+s}(\rho^
{f_n}_{Z_nE^n} \| \frac{\1_{Z_n}}{|\mc{Z}_n|} \ox \rho_E^{\ox n})
\leq {\big|R-H_{1+s}(X|E)_\rho \big|}^+.
\]

Next, we turn to the derivation of the other direction. By
Proposition \ref{prop:srd-properties}~(\romannumeral6), we have for any hash function $f_n$,
\begin{equation}
\begin{split}
&D_{1+s}(\rho^{f_n}_{Z_nE^n} \| \frac{\1_{Z_n}}{|\mc{Z}_n|} \ox \rho_E^{\ox n})\\
=&nR-H_{1+s}(Z_n | E^n)_{\rho^{f_n}_{Z_nE^n} } \\
 \geq & nR-nH_{1+s}(X | E)_{\rho}.
\end{split}
\end{equation}
From this and noticing that $D_{1+s}(\rho^{f_n}_{Z_nE^n} \| \frac{\1_{Z_n}}
{|\mc{Z}_n|} \ox \rho_E^{\ox n}) \geq 0$, it is easy to get
\[
\liminf_{n\rar\infty} \frac{1}{n} \min_{f_n \in \mc{F}(R)}D_{1+s}
(\rho^{f_n}_{Z_nE^n} \| \frac{\1_{Z_n}}{|\mc{Z}_n|} \ox \rho_E^{\ox n})
\geq \big|R-H_{1+s}(X|E)_\rho \big|^+.
\]
\end{proofof}

\medskip
\begin{proofof}{Theorem~\ref{theorem:4}}
At first, we prove the "$\geq$" part.
The left side of Eq.~(\ref{eq:asyexpo}) can be bounded as follows, thanks to the monotonicity of the sandwiched R{\'e}nyi divergence~\big(Proposition \ref{prop:srd-properties}~(\romannumeral1)\big) and Lemma \ref{lemma:use4}.
\begin{equation}
\label{eq:theorem17}
\begin{split}
&\min_{f_n \in \mathcal{F}_n(R)} D_{1+s}(\rho^{f_n}_{Z_nE^n} \| \frac{\1_{Z_n}}{|\mathcal{Z}_n|} \ox \rho_E) \\
\leq & \min_{f_n \in \mathcal{F}_n(R)} D_{1+t}(\rho^{f_n}_{Z_nE^n} \| \frac{\1_{Z_n}}{|\mathcal{Z}_n|} \ox \rho_E) \\
\leq & \frac{1}{t} \log\big(1+v(\rho^{\ox n}_E)^t 2^{t(nR -nH_{1+t}(X|E)_\rho)}\big)\\
\leq& \frac{\log e}{t} v(\rho^{\ox n}_E)^t 2^{t(nR -nH_{1+t}(X|E)_\rho)},
\end{split}
\end{equation}
for any $t \in [s,1]$. Noticing that $D_{1+s}(\rho^{f_n}_{Z_nE^n} \| \frac{\1_{Z_n}}{|\mathcal{Z}_n|} \ox \rho_E) \leq nR$, we see that the exponent must be non-negative. This
observation and Eq. (\ref{eq:theorem17}) implies
\[
\liminf_{n\rar\infty}
-\frac{1}{n} \log \min_{f_n \in \mathcal{F}_n(R)}
D_{1+s}(\rho^{f_n}_{Z_nE^n} \| \frac{\1_{Z_n}}{|\mathcal{Z}_n|} \ox \rho_E)
\geq \big|\sup_{t \in [s,1]} tH_{1+t}(X|E)_\rho -tR\big|^+.
\]

Next, we prove the other direction. We will deal with the case $R \leq \hat{R}(s)$
and the case $R > \hat{R}(s)$ separately. Now we start with the former case. Let
$f_n$ be an arbitrary hash function. We choose a positive constant $c$ such that
$c^s-2 \geq 1$. Then we construct a channel
\[
\Phi(X)=(\tr \{\rho^{f_n}_{Z_nE^n}>c \frac{\1_{Z_n}}{|\mc{Z}_n|} \ox \rho^{\ox n}_E  \} X) \proj{0}
+(\tr\{\rho^{f_n}_{Z_nE^n} \leq c \frac{\1_{Z_n}}{|\mc{Z}_n|} \ox \rho^{\ox n}_E  \} X) \proj{1},
\]
and denote
\[
  p_n=\tr\rho^f_{Z_nE^n}\{\rho^{f_n}_{Z_nE^n}>c \frac{\1_{Z_n}}{|\mc{Z}_n|} \ox \rho^{\ox n}_E  \} \quad\text{and}\quad
 q_n=\tr(\frac{\1_{Z_n}}{|\mc{Z}_n|} \ox \rho^{\ox n}_E)\{\rho^{f_n}_{Z_nE^n}>c \frac{\1_{Z_n}}{|\mc{Z}_n|} \ox \rho^{\ox n}_E\} .
\]
It is easy to see that
\begin{equation}
\label{basic}
p_n \geq c q_n.
\end{equation}
Hence, by the data processing inequality for the channel $\Phi$ and Eq.~(\ref{basic}), we have
\begin{equation}
\label{equation:1}
\begin{split}
&D_{1+s}(\rho^{f_n}_{Z_nE^n} \| \frac{\1_{Z_n}}{|\mc{Z}_n|}\ox \rho^{\ox n}_E) \\
\geq &\frac{1}{s}\log \big\{p^{s+1}_n q^{-s}_n+(1-p_n)^{s+1} (1-q_n)^{-s}\big\} \\
\geq&\frac{1}{s}\log \big\{c^s p_n+(1-p_n)^2 \big\} \\
=&\frac{1}{s}\log \big\{ 1+(c^s-2)p_n+p^2_n\big\}\\
\geq &\frac{1}{s}\log \big\{1+p_n\big\} \\
\geq &\frac{1}{s}\log \big\{1+\tr (\rho^{f_n}_{Z_nE^n}-c\frac{\1_{Z_n}}{|\mc{Z}_n|}\ox \rho^{\ox n}_E)_+\big\},
\end{split}
\end{equation}
where the third inequality follows from $c^s-2>1$.

Eq.~(\ref{equation:1}) implies
\begin{equation}
\label{euive}
\begin{split}
\min_{f_n \in \mathcal{F}_n(R)}
D_{1+s}(\rho^{f_n}_{Z_nE^n} \| \frac{\1_{Z_n}}{|\mc{Z}_n|}\ox \rho^{\ox n}_E)& \geq
 \frac{1}{s}\log \big\{1+\min_{f_n \in \mathcal{F}_n(R)}\tr (\rho^{f_n}_{Z_nE^n}-c\frac{\1_{Z_n}}{|\mc{Z}_n|}\ox \rho^{\ox n}_E)_+\big\} \\
&\overset{.}{=} \frac{1}{s}\min_{f_n \in \mathcal{F}_n(R)}\tr (\rho^{f_n}_{Z_nE^n}-c\frac{\1_{Z_n}}{|\mc{Z}_n|}\ox \rho^{\ox n}_E)_+,
\end{split}
\end{equation}
where $a_n \overset{.}{=} b_n$ means that
$\lim_{n\rightarrow\infty} \frac{1}{n} \log \frac{a_n}{b_n} =0$.
Now, we can use Eq.~(\ref{euive}) and Lemma~\ref{lemma:use2} to get
\begin{equation}
\limsup_{n\rar\infty}
-\frac{1}{n} \log \min_{f_n \in \mathcal{F}_n(R)}
D_{1+s}(\rho^{f_n}_{Z_nE^n} \| \frac{\1_{Z_n}}{|\mc{Z}_n|}\ox \rho^{\ox n}_E)
\leq \sup_{t \geq 0} (tH_{1+t}(X|E)_\rho -tR).
\end{equation}
Because $\hat{R}(1) \leq R \leq \hat{R}(s)$, we have
 \[
 \sup_{t \geq 0} (tH_{1+t}(X|E)_\rho -tR)=
 \max_{s \leq t \leq 1} (tH_{1+t}(X|E)_\rho -tR),
 \]
and we complete the case $R \leq \hat{R}(s)$.

Next, we turn to the case $R > \hat{R}(s)$. We also define a channel like the above step
\[
\Delta(X)=(\tr \{\rho^{f_n}_{Z_nE^n}> \frac{\1_{Z_n}}{2^{n\hat{R}(s)}} \ox \rho^{\ox n}_E  \} X) \proj{0}
+(\tr\{\rho^{f_n}_{Z_nE^n} \leq  \frac{\1_{Z_n}}{2^{n\hat{R}(s)}} \ox \rho^{\ox n}_E  \} X) \proj{1},
\]
and denote
\[
p_n=\tr\rho^{f_n}_{Z_nE^n}\{\rho^{f_n}_{Z_nE^n}> \frac{\1_{Z_n}}{2^{n\hat{R}(s)}} \ox \rho^{\ox n}_E  \} \quad\text{and}\quad
 q_n=\tr(\frac{\1_{Z_n}}{|\mc{Z}_n|}\ox \rho^{\ox n}_E)\{\rho^{f_n}_{Z_nE^n}> \frac{\1_{Z_n}}{2^{n\hat{R}(s)}} \ox \rho^{\ox n}_E\} .
\]
We invoke a similar argument as Eq. (\ref{equation:1}) to bound $D_{1+s}(\rho^{f_n}_{Z_nE^n} \| \frac{\1_{Z_n}}{|\mc{Z}_n|}\ox \rho^{\ox n}_E)$, by using the channel $\Delta$.
\begin{equation}
\label{equation:2}
\begin{split}
&D_{1+s}(\rho^{f_n}_{Z_nE^n} \| \frac{\1_{Z_n}}{|\mc{Z}_n|}\ox \rho^{\ox n}_E)\\
\geq &\frac{1}{s}\log \big\{p^{s+1}_n q^{-s}_n+(1-p_n)^{s+1} (1-q_n)^{-s}\big\} \\
\geq&\frac{1}{s}\log \big\{2^{ns(R-\hat{R}(s))} p_n+(1-p_n)^2\big\}\\
\geq &\frac{1}{s}\log \big\{1+(2^{sn(R-\hat{R}(s))}-2)p_n\big\} \\
\geq &\frac{1}{s}\log \big\{1+(2^{sn(R-\hat{R}(s))}-2)(\rho^f_{Z_nE^n}-c\frac{\1_{Z_n}}{2^{n\hat{R}(s)}}\ox \rho^{\ox n}_E)_+\big\}.
\end{split}
\end{equation}
Eq.~(\ref{equation:2}) and Lemma \ref{lemma:use2} imply
\begin{equation}
\label{equation:3}
\begin{split}
&\min_{f_n \in \mathcal{F}_n(R)}
D_{1+s}(\rho^{f_n}_{Z_nE^n} \| \frac{\1_{Z_n}}{|\mc{Z}_n|}\ox \rho^{\ox n}_E) \\
\overset{.}{\geq} &\frac{1}{s}\log \big\{1+ 2^{sn(R-\hat{R}(s))}2^{n\,{\inf_{t \geq 0} (t\hat{R}(s)-tH_{1+t}(X|E)_\rho )}} \big\} \\
=&\frac{1}{s}\log \big\{1+2^{sn(R-\hat{R}(s))}2^{n(s\hat{R}(s)-sH_{1+s}(X|E)_\rho )}\big\},
\end{split}
\end{equation}
where $a_n \overset{.}{\geq} b_n$ means that
$\lim\limits_{n\rightarrow\infty} \frac{\log a_n}{n} \geq \lim\limits_{n\rightarrow\infty} \frac{\log b_n}{n}$,
and the last line is because the minimum of the function $t \mapsto
t(\hat{R}(s)-H_{1+t}(X|E)_\rho)$ is achieved at $t=s$ when $R > \hat{R}(s)$.

Eq. (\ref{equation:3}) further gives
\begin{equation}
\begin{split}
\label{equation:4}
\limsup_{n\rar\infty}
-\frac{1}{n} \log \min_{f_n \in \mathcal{F}_n(R)}
D_{1+s}(\rho^{f_n}_{Z_nE^n} \| \frac{\1_{Z_n}}{|\mc{Z}_n|}\ox \rho^{\ox n}_E)
&\leq \big|sH_{1+s}(X|E)_\rho-sR\big|_+ \\
&=\big|\max_{s \leq t \leq 1} (tH_{1+t}(X|E)_\rho -tR)\big|_+,
\end{split}
\end{equation}
and this completes the proof of the case $R > \hat{R}(s)$.
\end{proofof}

\section{Conclusion and discussion}
  \label{sec:discussion}
Employing the sandwiched R{\'e}nyi divergence, we have obtained the precise exponent
in smoothing the max-relative entropy, and as an application,
combining the existing result~\cite[Theorem 1]{Hayashi2015precise},
we have also obtained
the precise exponent for quantum privacy amplification when the rate of extracted
randomness is not too low. Our results, along with the concurrent
work~\cite{LiYao2021reliability} which addresses different problems,
clearly show that the sandwiched R{\'e}nyi
divergence can not only characterize the strong converse exponents~\cite{
MosonyiOgawa2015quantum, MosonyiOgawa2015two,CMW2016strong,
HayashiTomamichel2016correlation, MosonyiOgawa2017strong, CHDH2020non}, but
also accurately characterizes how the performance of certain
quantum information processing tasks approach the perfect. We anticipate that more
applications of the sandwiched R{\'e}nyi divergence along this line will be found
in the future.

Different definitions for the sandwiched R{\'e}nyi conditional entropy have been proposed,
among which two typical versions are~\cite{TBH2014relating, MDSFT2013on}
\begin{align}
H_{\alpha}(A|B)_\rho &=-D_{\alpha}(\rho_{AB} \| \1_A \ox \rho_B), \quad\text{and}
\label{eq:srce-a} \\
\bar{H}_{\alpha}(A|B)_\rho&=-\min_{\sigma_B\in\mc{S}(\mc{H}_B)}D_{\alpha}(\rho_{AB}
\| \1_A \ox \sigma_B),
\label{eq:srce-b}
\end{align}
and it was not quite clear which one should be the proper formula. The
version~(\ref{eq:srce-b}) has later found operational meanings in Ref.~\cite{HayashiTomamichel2016correlation} and Ref.~\cite{CHDH2020non}.
By giving an operational meaning to the version~(\ref{eq:srce-a}) in this paper,
we conclude that both versions are proper expressions and the sandwiched
R{\'e}nyi conditional entropy is not unique.

The smoothing quantity in Theorem~\ref{theorem:exp-mre} and the insecurity in
Theorem~\ref{theorem:exp-privacy-up} as well as Theorem~\ref{theorem:exp-privacy} are
measured by the purified distance and/or the Kullback-Leibler divergence.
Determining the respective exponents for these two problems under the trace distance
is an interesting open problem.
Originally, Renner~\cite{Renner2005security} defined the smoothing of the
max-relative entropy based on the trace norm distance to derive an upper bound
of the insecurity in privacy amplification under two-universal hashing. However,
the reference~\cite{Hayashi2016} showed that this type of entropy cannot derive
the tight exponential upper bound in the classical setting of this problem
while it derived the type exponential behavior based on the trace norm distance
in the classical case.
Instead, the references~\cite{Hayashi2013,Hayashi2016} showed that the smoothing
of the R\'{e}nyi entropy of order $2$ based on the trace norm distance derives
the tight exponential upper bound in the classical setting of this problem.
The reference~\cite{Hayashi2014} considered its quantum extension, but did not
derive the tight exponential evaluation, while this topic has a recent
progress~\cite{Dupuis2021privacy} after the references~\cite{Hayashi2013,
Hayashi2014, Hayashi2016}.

For privacy amplification, we are only able to find
out the exact exponent when the rate $R$ of the randomness extraction is above the
critical value $R_{\rm critical}$. Determining the exponent for rate $R$ less than
$R_{\rm critical}$ is another important open question. The examples in
Section~\ref{subsec:low-rate} indicate that this problem may be more of a
combinatorial feature in the low-rate regime, at least when the rate $R$ is such
that $0\leq R \leq H_\text{min}(X|E)_\rho$.

In addition, Section \ref{sec:equivocation} has derived
the asymptotic equivocation rate under the sandwiched R\'enyi divergence
for any randomness extraction rate as Theorem \ref{theorem:3}.
Also, this section has derived the security exponent under the sandwiched R\'enyi
divergence in Theorem \ref{theorem:4} when the randomness extraction rate is
not smaller than the critical rate. This exponent is remained an open problem when
the randomness extraction rate is larger than the critical rate $R_{\rm critical}$.

\section*{Acknowledgements}
We are grateful to the anonymous referees for the valuable suggestions, which have
helped us improve the manuscript. The research of KL was supported by the National
Natural Science Foundation of
China  (No. 61871156, No. 12031004). The research of YY was supported by the
National Natural Science Foundation of China  (No. 61871156, No. 12071099).
MH is supported in part by the National Natural Science Foundation of
China (No. 62171212) and Guangdong Provincial Key Laboratory (No. 2019B121203002).

{\appendix[Proof of Proposition~\ref{prop:monohash}]
We need the following lemma.
\begin{lemma}
\label{lemma:isometry}
Let $\sigma_{AB} \in \mc{S}(\mc{H}_{AB})$ and let
$U: \mc{H}_A \rightarrow \mc{H}_{A'}$ be an isometry. Then
\[
H^{\epsilon}_{\rm{min}}(A|B)_\sigma=H^{\epsilon}_{\rm{min}}(A'|B)_{U \sigma U^*}.
\]
\end{lemma}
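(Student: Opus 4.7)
The plan is to establish both inequalities between $D_{\rm{max}}^\epsilon(\sigma_{AB}\|\1_A\ox\sigma_B)$ and $D_{\rm{max}}^\epsilon(U\sigma U^*\|\1_{A'}\ox(U\sigma U^*)_B)$, whose negatives are the two sides of the claim. I would begin by noting $(U\sigma U^*)_B=\sigma_B$, which holds because $U$ acts only on $A$ and $U^*U=\1_A$; this ensures the reference operator on the $A'$-side is $\1_{A'}\ox\sigma_B$ as expected. Throughout I suppress the $\ox\1_B$ and write $U$ for $U\ox\1_B$.

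For the easy inequality $D_{\rm{max}}^\epsilon(U\sigma U^*\|\1_{A'}\ox\sigma_B)\leq D_{\rm{max}}^\epsilon(\sigma_{AB}\|\1_A\ox\sigma_B)$, I would push forward an optimizer. Let $\tilde\sigma_{AB}\in\mc{B}^\epsilon(\sigma_{AB})$ attain the left-hand quantity at value $\lambda$, so $\tilde\sigma_{AB}\leq 2^\lambda\1_A\ox\sigma_B$. Set $\tilde\tau_{A'B}:=U\tilde\sigma_{AB}U^*$. Since $\sqrt{UXU^*}=U\sqrt{X}U^*$ for $X\geq 0$ (both sides are PSD and their squares agree via $U^*U=\1_A$), isometric conjugation preserves the generalized fidelity and hence the purified distance, giving $P(\tilde\tau_{A'B},U\sigma U^*)=P(\tilde\sigma_{AB},\sigma_{AB})\leq\epsilon$. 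The operator bound transfers because $UU^*\leq\1_{A'}$, yielding $\tilde\tau_{A'B}\leq 2^\lambda UU^*\ox\sigma_B\leq 2^\lambda\1_{A'}\ox\sigma_B$.

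For the reverse inequality I would pull back an optimizer. Let $\tilde\tau_{A'B}\in\mc{B}^\epsilon(U\sigma U^*)$ satisfy $\tilde\tau_{A'B}\leq 2^\mu\1_{A'}\ox\sigma_B$, and define $\tilde\sigma_{AB}:=U^*\tilde\tau_{A'B}U$. The operator bound survives since $U^*\1_{A'}U=\1_A$, giving $\tilde\sigma_{AB}\leq 2^\mu\1_A\ox\sigma_B$; subnormalization follows from $\tr\tilde\sigma=\tr UU^*\tilde\tau\leq\tr\tilde\tau\leq 1$. The nontrivial step is $P(\tilde\sigma_{AB},\sigma_{AB})\leq\epsilon$. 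Using $\sqrt{U\sigma U^*}=U\sqrt\sigma U^*$ twice together with $U^*U=\1_A$, I would compute
\[
\big\|\sqrt{\tilde\tau}\sqrt{U\sigma U^*}\big\|_1
=\tr\sqrt{U\big(\sqrt\sigma(U^*\tilde\tau U)\sqrt\sigma\big)U^*}
=\tr\sqrt{\sqrt\sigma(U^*\tilde\tau U)\sqrt\sigma}
=\big\|\sqrt{\tilde\sigma}\sqrt\sigma\big\|_1.
\]
Combining this with the boundary estimate $\sqrt{(1-\tr\tilde\tau)(1-\tr\sigma)}\leq\sqrt{(1-\tr\tilde\sigma)(1-\tr\sigma)}$, which holds because $\tr\tilde\tau\geq\tr UU^*\tilde\tau=\tr\tilde\sigma$, I obtain $F(\tilde\tau,U\sigma U^*)\leq F(\tilde\sigma,\sigma)$ in the generalized-fidelity sense, and therefore $P(\tilde\sigma,\sigma)\leq P(\tilde\tau,U\sigma U^*)\leq\epsilon$.

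The main obstacle is exactly this last step. The pull-back map $X\mapsto U^*XU$ is completely positive but only trace-non-increasing (since $UU^*\leq\1_{A'}$), so one cannot invoke the CPTP data-processing inequality for purified distance in Proposition~\ref{proposition:pd-properties}(iii) directly. The direct computation above circumvents this because the reference state $U\sigma U^*$ is supported in the range of $U$, so $\sqrt{U\sigma U^*}$ automatically projects $\tilde\tau$ into that range; the trace mass lost under pull-back only enlarges the subnormalization boundary term and hence only helps the bound.
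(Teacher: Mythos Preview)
Your proof is correct and follows essentially the same push-forward/pull-back strategy as the paper. The only cosmetic difference is in the pull-back step: the paper argues via the chain $P(\sigma,U^*\tilde\tau U)=P(U\sigma U^*,UU^*\tilde\tau UU^*)=P(U\sigma U^*,\tilde\tau)$ (using that $UU^*$ projects onto the support of $U\sigma U^*$ and that $\tr\sigma=1$ kills the boundary term), whereas you compute the fidelity directly; your explicit observation that $(U\sigma U^*)_B=\sigma_B$ and your diagnosis of why the CPTNI map $X\mapsto U^*XU$ cannot be handled by data processing alone are both on point.
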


\begin{proof}
By definition, there is a state $\tilde{\sigma}_{AB} \in \mc{B}^\epsilon(\sigma_{AB})$
satisfying
\[
  \tilde{\sigma}_{AB} \leq 2^{-H^{\epsilon}_{\text{min}}(A|B)_\sigma}\mathbbm{1}_A \ox \sigma_B.
\]
Let $\tilde{\sigma}_{A'B}:=U\tilde{\sigma}_{AB}U^*$. Obviously, we have
$\tilde{\sigma}_{A'B} \in \mc{B}^\epsilon(U\sigma_{AB}U^*)$, and
\[\begin{split}
  \tilde{\sigma}_{A'B} &\leq 2^{-H^{\epsilon}_{\text{min}}(A|B)_\sigma}U\mathbbm{1}_A U^* \ox \sigma_B \\
                       &\leq 2^{-H^{\epsilon}_{\text{min}}(A|B)_\sigma} \mathbbm{1}_{A'} \ox \sigma_B.
\end{split}\]
This verifies by definition that
\[
H^{\epsilon}_{\text{min}}(A|B)_\sigma \leq H^{\epsilon}_{\text{min}}(A'|B)_{U \sigma U^*}.
\]
For the opposite direction, similarly, by definition there is a state $\tilde{\sigma}_{A'B}
\in \mc{B}^\epsilon(U\sigma_{AB}U^*)$ satisfying
\[
  \tilde{\sigma}_{A'B} \leq 2^{-H^{\epsilon}_{\text{min}}(A'|B)_{U\sigma U^*}}\mathbbm{1}_{A'} \ox \sigma_B.
\]
Then for the subnormalized state $U^*\tilde{\sigma}_{A'B}U\in\mc{S}_\leq(\mc{H}_{AB})$,
we can check that
\begin{equation}\begin{split}
  \label{eq:iso-a}
P(\sigma_{AB}, U^*\tilde{\sigma}_{A'B}U) &= P(U\sigma_{AB}U^*, UU^*\tilde{\sigma}_{A'B}UU^*) \\
                                         &= P(U\sigma_{AB}U^*, \tilde{\sigma}_{A'B})         \\
                                         &\leq \epsilon,
\end{split}\end{equation}
and
\[\begin{split}
    U^*\tilde{\sigma}_{A'B}U
  &\leq 2^{-H^{\epsilon}_{\text{min}}(A'|B)_{U\sigma U^*}}U^*\mathbbm{1}_{A'}U\ox\sigma_B \\
  & =   2^{-H^{\epsilon}_{\text{min}}(A'|B)_{U\sigma U^*}} \mathbbm{1}_{A} \ox \sigma_B,
\end{split}\]
where for the second line of Eq.~(\ref{eq:iso-a}), notice that $UU^*$ is a projection
onto $U\mc{H}_A$, and hence we check it directly using the expression of the
fidelity function. This implies by definition that
\[
H^{\epsilon}_{\text{min}}(A|B)_\sigma \geq H^{\epsilon}_{\text{min}}(A'|B)_{U \sigma U^*}.
\]
\end{proof}

\begin{proofof}{Proposition~\ref{prop:monohash}}
Let $U:\ket{x}\mapsto\ket{x}\ox\ket{f(x)}$ be the isometry from $X$ to $XZ$, and write
$\sigma_{XZAB}=U\sigma_{XAB}U^*$. Obviously, $\sigma_{XZAB}$ is classical on $X$ and $Z$,
and is the extension of both $\sigma_{XAB}$ and $\sigma_{ZAB}$. Since Lemma~\ref{lemma:isometry}
gives that $H^{\epsilon}_{\text{min}}(XA|B)_\sigma=H^{\epsilon}_{\text{min}}(XZA|B)_\sigma$,
what we need to do is to show
\begin{equation}
  \label{eq:toshow}
  H^{\epsilon}_{\text{min}}(XZA|B)_\sigma \geq H^{\epsilon}_{\text{min}}(ZA|B)_\sigma.
\end{equation}
By the definition of $H^{\epsilon}_{\text{min}}(ZA|B)_\sigma$, there is
$\tilde{\sigma}_{ZAB}\in\mc{B}^\epsilon(\sigma_{ZAB})$ such that
\begin{equation}
  \label{eq:monohash-a}
\tilde{\sigma}_{ZAB}\leq 2^{-H^{\epsilon}_{\text{min}}(ZA|B)_\sigma} \mathbbm{1}_{ZA}\ox\sigma_B.
\end{equation}
Now Uhlmann's theorem~\cite{Uhlmann1976the} tells us that there is $\hat{\sigma}
_{XZAB}\in\mc{S}_\leq(\mc{H}_{XZAB})$ which extends $\tilde{\sigma}_{ZAB}$ and satisfies
$P(\sigma_{XZAB},\hat{\sigma}_{XZAB})=P(\sigma_{ZAB},\tilde{\sigma}_{ZAB})$. Using the
measurement map $\mc{M}_X:L\mapsto\sum_x\proj{x}L\proj{x}$, we define
$\tilde{\sigma}_{XZAB}:=\mc{M}_X(\hat{\sigma}_{XZAB})$. Since $\sigma_{XZAB}=
\mc{M}_X(\sigma_{XZAB})$,
\begin{equation}
  \label{eq:monohash-b}
  P(\sigma_{XZAB},\tilde{\sigma}_{XZAB}) \leq P(\sigma_{XZAB},\hat{\sigma}_{XZAB}) \leq \epsilon.
\end{equation}
By construction, $\tilde{\sigma}_{XZAB}$ has the form $\tilde{\sigma}_{XZAB}=\sum_x
\proj{x}_X\ox \tilde{\sigma}^x_{ZAB}$ and is still an extension of $\tilde{\sigma}_{ZAB}$.
So, $\tilde{\sigma}^x_{ZAB}\leq\sum_x\tilde{\sigma}^x_{ZAB}=\tilde{\sigma}_{ZAB}$. This,
together with Eq.~(\ref{eq:monohash-a}), ensures that
\begin{equation}
  \label{eq:monohash-c}
\tilde{\sigma}_{XZAB}\leq 2^{-H^{\epsilon}_{\text{min}}(ZA|B)_\sigma} \mathbbm{1}_{XZA}\ox\sigma_B.
\end{equation}
Eq.~(\ref{eq:monohash-b}) and Eq.~(\ref{eq:monohash-c}) together imply Eq.~(\ref{eq:toshow}),
concluding the proof.
\end{proofof} }

\end{document}